\newtheorem{theorem}{Theorem}
\newtheorem{corollary}[theorem]{Corollary}
\newtheorem{proposition}[theorem]{Proposition}
\newtheorem{lemma}[theorem]{Lemma}
\newtheorem{property}[theorem]{Property}
\newtheorem{definition}[theorem]{Definition}
\newcommand{\rural}{{\sc Rpt}(r)\xspace}
\newcommand{\china}{{\sc Cpt}(r)\xspace}
\newcommand{\cpt}{{\sc CPT}}
\newcommand{\rpt}{{\sc RPT}}
\newcommand{\comp}{\ensuremath{{\tt{cr}}}\xspace}
\begin{document}

\title{Online Search with Maximum Clearance}

\author[1]{Spyros Angelopoulos}
\author[2]{Malachi Voss}

\date{}

\affil[1]{Sorbonne Université, CNRS, Laboratoire d’informatique de Paris 6, LIP6, F-75252 Paris, France.
{email: {\tt spyros.angelopoulos@lip6.fr}}
}

\affil[2]{ENS and Sorbonne Université, CNRS, Laboratoire d’informatique de Paris 6, LIP6, F-75252 Paris, France.
{email: {\tt malachi.voss@ens.fr}}
}

\maketitle

\begin{abstract}
We study the setting in which a mobile agent must locate a hidden target in a bounded or unbounded {\em environment}, with no information about the hider's position. In particular, we consider {\em online} search, in which the performance of the search strategy is evaluated by its worst case {\em competitive ratio}. We introduce a multi-criteria search problem in which the searcher has a {\em budget} on its allotted search time, and the objective is to design strategies that are competitively efficient, respect the budget, and maximize the total searched ground. We give analytically optimal strategies for the line and the star environments, and efficient heuristics for general networks.  
\end{abstract}

\section{Introduction}
\label{sec:introduction}

We study a general search problem, in which a mobile agent with unit speed seeks to locate a target that hides in some unknown position of the environment. Specifically, we are given an environment which may be bounded or unbounded, with a point 
$O$ designated as its {\em root}. There is an immobile {\em target} (or {\em hider}) $H$ that is hiding in some unknown point in the environment, whereas the searcher is initially placed at the root $O$. 
The searcher has no information concerning the hider's position.
A {\em search strategy} $S$ determines the precise way in which the searcher explores the environment, and we assume deterministic strategies. The {\em cost}
of $S$ given hider $H$, denoted by $d(S,H)$, is the total distance traversed by the searcher the first time it reaches the location of $H$, or equivalently the total search time.

There is a natural way to evaluate the performance of the search strategy that goes back to~\cite{bellman} and~\cite{beck:yet.more}: we can compare the cost paid by the searcher in a worst-case scenario to the cost paid in the ideal situation where the searcher knows the hider's position. We define the {\em competitive
ratio} of strategy $S$ as
\begin{equation}
\comp(S)=\sup_H \frac{d(S,H)}{d(H)},
\label{eq:comp.ratio}
\end{equation}
with $d(H)$ the distance of $H$ from $O$ in the environment.

Competitive analysis allows to evaluate a search strategy under a status of complete uncertainty, and provides strict, worst-case guarantees.
Competitive analysis has been applied to several search problems in robotics,
for example~\cite{plume},~\cite{magid2004cautiousbug},~\cite{taylor1998vision}~\cite{isler2003local}. See also
the survey~\cite{ghosh2010online}.

In this work we will study the following classes of environments: First, we consider the problem of searching {\em on the line}, informally known as the {\em cow path} problem~\cite{informed.cows}, in which the environment is the unbounded, infinite line. Next, we consider a generalization of linear search, in which the environment consists of $m$ unbounded {\em rays}, concurrent at $O$; this problem is known as the {\em $m$-ray search} or {\em star search} problem. This environment can model much broader settings in which we seek an intelligent allocation of resources to tasks under uncertainty. Thus, it is a very useful paradigm that arises often in applications such as the design of {\em interruptible} systems based on 
{\em contract} algorithms~\cite{steins,spyros:ijcai15,DBLP:conf/podc/KupavskiiW18}, or {\em pipeline filter ordering}~\cite{Condon:2009:ADA:1497290.1497300}. Last, we consider 
general undirected, edge-weighted graph {\em networks}, 
and a target that can hide anywhere over an edge or a vertex of this graph.  

In some previous work, online search may refer to the setting in which the searcher has no information about the environment or the position of the target. In this work we assume that the searcher knows the environment, but not the precise position of the target. This is in line with some foundational work on competitive analysis of online search algorithms, e.g.~\cite{koutsoupias:fixed}. 

\subsection{Searching with a budget}

Most previous work on competitive analysis of searching has assumed that a target is indeed present, and so the searcher will eventually locate it.
Thus, the only consideration is minimizing the competitive ratio. However, this assumption does not reflect realistic settings.
Consider the example of Search-And-Rescue (SAR) operations: first, it is possible that the search mission 
may fail to locate the missing person, in which case searching should resume from its starting point instead of continuing fruitlessly for an exorbitant amount of time. Second, and more importantly, SAR operations come with logistical constraints, notably in terms of the time alloted to the mission.

To account for such situations, in this work we study online search in the setting where the searcher has a certain {\em budget} $T$, which reflects the total amount of search time that it can afford, and a desired competitive ratio $R$ that the search must attain. If the target is found within this budget, the search is {\em successful}, otherwise it is deemed {\em unsuccessful}. We impose two optimization constraints on the search. First, it must be competitively efficient, i.e., its competitive ratio, 
as expressed by~\eqref{eq:comp.ratio} is at most $R$, whether it succeeds or not. 
Second, if the search is unsuccessful, the search has maximized the total {\em clearance} by time $T$. In the case of the environments we study in this work, the clearance is the measure of the part of the environment that the searcher has explored by time $T$. We call this problem the {\em Maximum Clearance problem with budget $T$ and competitive ratio $R$}, and we denote it by {\sc MaxClear}(R,T).

It should be clear that the competitive ratio and the clearance are in a trade-off relation with respect to any given budget $T$: by reducing the competitive efficiency, one can improve the clearance, and vice versa. Hence, our goal is to find strategies that attain the optimal tradeoff, in a Pareto sense, between these two objectives. 

\subsection{Contributions}
\label{subsec:contributions}

We study Maximum Clearance in three environments: the unbounded line, the unbounded star, and a fixed network. We begin with the line: here we show how to use a linear programming formulation to obtain a Pareto-optimal solution. We also show that the Pareto-optimal strategy has a natural interpretation as the best among two simple strategies.

We then move to the $m$-ray star, which generalizes the line, and which is more challenging. Here, we argue that the intuitive strategies that are optimal for the line are not optimal for the star. We thus need to exploit the structure of the LP formulation, so as to give a Pareto-optimal strategy. We do not require
an LP solver, instead, we show how to compute 
the theoretically optimal strategy efficiently, in time $O(m\log m \log T+ m \log T \log \log T)$. Experimental evaluations confirm the superiority of this optimal strategy over other candidate solutions to the problem. 

Finally, we consider the setting in which the environment consists of a network. Here, there is a complication: we do not known the optimal competitive ratio as, for example, in the star (the problem is NP-hard if the target hides on vertices), and only $O(1)$ approximations of the optimal competitive ratio are known~\cite{ANGELOPOULOS2020781}. Hence, in this context, we define {\sc MaxClear}(R,T) with $R\geq 1$, as the problem of maximizing clearance given budget $T$, while guaranteeing that the strategy is an {\em $R$-approximation} of the optimal competitive ratio. Previous approaches to competitive searching in networks typically involve a combination of a solution to the Chinese Postman Problem (CPP)~\cite{edmonds1973matching} with iterative doubling of the search radius. For our problem, we strengthen this heuristic using the {\em Rural Postman Problem} (RPP)~\cite{frederickson1978approximation}, in which only a subset of the network edges need to be traversed. While RPP has been applied to the problem of online 
coverage in robotics~\cite{RPP:coverage},~\cite{easton2005coverage}, to the best of our knowledge, no previous work on competitive search has addressed its benefits. Although there is no gain on the theoretical competitive ratio, our experimental analysis shows that it has significant benefits over the CPP-based approach. We demonstrate this with experiments using real-world data from the library {\em Transportation Network Test Problems}~\cite{transportation:2002}, which model big cities.

We conclude with some extensions and applications. We first explain how our techniques can be applied to a problem ``dual'' to Maximum Clearance, which we call {\em Earliest Clearance}. 
We also show some implications of our work for {\em contract scheduling} problems. In particular, we explain how our results extend those of~\cite{end-guarantees} for contract scheduling with end guarantees.

\subsection{Other related work}
It has long been known that linear search has optimal competitive ratio 9 \cite{beck:yet.more}, which is achieved by a simple strategy based on iterative doubling. Star search on $m$ rays also has a long history of research, going back to~\cite{gal:minimax} who showed that the optimal  competitive ratio is 
\begin{equation}
R^*_m=1+2\rho_m^*, \ \textrm{where } \rho_m^*=\frac{m^m}{(m-1)^{m-1}},
\label{eq:C*_opt}
\end{equation}
a result that was later rediscovered by computer scientists~\cite{yates:plane}. Star search has been studied from the algorithmic point of view in several settings, such as randomized strategies~\cite{ray:2randomized}; multi-searcher strategies~\cite{alex:robots}; 
searching with an upper bound on the target distance~\cite{HIKL99:fixed.distance,revisiting:esa}; fault-tolerant search~\cite{DBLP:conf/podc/KupavskiiW18}; and probabilistic search~\cite{jaillet:online,informed.cows}. 
For general, edge-weighted networks only $O(1)$-approximation strategies are known~\cite{koutsoupias:fixed,ANGELOPOULOS2020781}.

\section{Preliminaries}
\label{sec:preliminaries}

For the $m$-ray star, we assume the rays are numbered $0, \ldots ,m-1$. A search strategy for the star is defined as $\{(x_i,r_i)\}_{i\geq 1}$, with the semantics that in the $i$-th 
{\em step}, the searcher starts from $O$, visits ray $r_i$ to length $x_i$, then returns to $O$. A {\em cyclic}
strategy is a strategy for which $r_i =i \bmod m$; we will thus often omit the $r_i$'s for such strategies, since they are implied.
We make the standing assumption that the target is hiding at least at unit distance from the root, otherwise there is no strategy of bounded competitive ratio.

A {\em geometric} strategy is a cyclic strategy in which 
$x_i=b^i$, for some $b>1$, which we call the {\em base}. Geometric strategies are important since they often give optimally competitive solutions to search problems on a star. For instance, the optimal competitive ratio $R_m^*$ is achieved by a geometric strategy with base $b=\frac{m}{m-1}$~\cite{gal:minimax}. In general, the competitive ratio of a cyclic strategy with base $b$ 
is equal to $1+2\frac{b^m}{b-1}$~\cite{gal:general}. By applying standard calculus, it follows that, for any given
$R = 1+2\rho \geq R_m^*$,
the geometric strategy with base $b$ is $R$-competitive if and only if $b \in [\zeta_1, \zeta_2]$, where 
$\zeta_i$ are the positive roots of the characteristic polynomial $p(t) = t^m-\rho t + \rho$.

A less known family of strategies for the $m$-ray star is the set of strategies which maximize the searched length at the 
$i$-th step. Formally, we want $x_i$ to be as large as possible, so that the strategy $X=(x_i)$ has competitive ratio $R=1+2\rho$. It turns out that this problem has indeed a solution, and as shown in~\cite{jaillet:online}, the resulting strategy $Z=(z_i)$ is one in which the search lengths are defined by the linear recurrence relation $z_{m+i}=\rho(z_{i+1}-z_i)$.
\cite{jaillet:online} give a solution to the recurrence for $\rho=\rho_m^*$. We can show that $Z$ is in fact uniquely defined for all values of $R\geq R^*_m$, and give a closed-form expression for $z_i$, as a function of $\zeta_1$ and $\zeta_2$, defined above (Appendix).
Following the terminology of~\cite{DBLP:conf/stacs/0001DJ19} we call $Z$ the {\em aggressive strategy of competitive ratio $R$}, or simply the {\em aggressive} strategy when $R$ is implied.

For the star we will use a family of linear inequalities involving the search lengths $x_i$ to model the requirement that the search is $R$-competitive. Such inequalities are often used in competitive search, see e.g.~\cite{ultimate}, \cite{HIKL99:fixed.distance}. Each inequality comes from an adversarial position of the target: for a search strategy of the form $X=\{(x_i,r_i)\}$ in the star, the placements of the target which maximize the competitive ratio are on ray $r_j$ and at distance $x_j+\epsilon$, for all $j$ and for infinitesimally small $\epsilon$ (i.e., the searcher barely misses the target at step $j$).

There is, however, a subtlety in enforcing competitiveness in our problem. In particular, we need to filter out some strategies that can be $R$-competitive up to time $T$, but are artificial. 
To illustrate this, consider the case of the line, and a strategy $S$ that walks only to the right of $O$ 
up to time $T$ (it helps to think of $T$ as very large). This strategy is 1-competitive in the time interval $[0,T]$, and obviously maximizes clearance, but intuitively is not a realistic solution.
The reason for this is that $S$ discards the entire left side with respect to $R$-competitiveness. Specifically, for a point at distance 1 to the left of $O$, 
any extension $S'$ of $S$ will incur a competitive ratio of at least $2T+1$, which can be enormous.

We thus need to enforce a property that intuitively states that a feasible strategy $S$ to our problem should be extendable to an $R$-competitive strategy $S'$ that can detect targets hiding infinitesimally beyond the boundary that has been explored by time $T$ in $S$.   
We call this property {\em extendability} of an $R$-competitive strategy. We give a formal definition in the Appendix concerning our environments, although this intuitive description will suffice for the purposes of modeling and analysis.  Our experimental evaluation shows that the optimal extendable strategy on the star performs significantly better than other candidate strategies, which further justifies the use of this notion.

\section{A warm-up: Maximum Clearance on the line}
\label{sec:line}

We begin with the simplest environment: an unbounded line with root $O$. Fix a competitive ratio $R=1+2\rho$, for some $\rho\geq \rho_2^*=4$. Without loss of generality, we assume cyclic strategies $X=(x_i)$ such that $x_{i+2} >x_{i}$, for all $i$.

Let ${\cal S}_k$ denote the set of all strategies $X=(x_1,\ldots x_k)$ with $k$ steps.
We can formulate {\sc MaxClear}(R,T) {\em restricted} to ${\cal S}_k$ using the following LP, which we denote $L_2^{(k)}$.

\begin{align*}
\text{max} \quad & x_{k-1} + x_k  & \tag{$L_2^{(k)}$}\\
\text{subject to} \quad 
&x_1\leq \rho  \tag{$C_0$}\\
&\sum\nolimits_{i=1}^{j+1} x_{i} \leq \rho \cdot x_j, &j \in [1, k-2] \tag{$C_j$}\\
&\sum\nolimits_{i=1}^{k} x_{i}  \leq \rho \cdot x_{k-1} \tag{$E_{k-1}$} \\
&2 \sum\nolimits_{i=1}^{k-1} x_{i}  + x_k \leq T      \tag{$B$} 
\end{align*}
In this LP, constraints $(C_0)$ and $(C_1), \ldots (C_{k-2})$ model the requirement for 
$(1+2\rho)$-competitiveness. $(C_0)$ models a target hiding at distance 1 from $O$, whereas the remaining constraints model a target hiding right after the turn points of $x_1, \ldots x_{k-2}$, respectively. 
Constraint $(B)$ is the budget constraint. Last, constraint $(E_{k-1})$ models the extendability property, which on the line means remaining competitive for a target hiding just beyond the turn point of $x_{k-1}$.

Therefore, an optimal strategy is one of maximum objective value, among all feasible solutions to $L_2^{(k)}$, for 
all $k\geq 1$. We will use this formulation to show that the optimal strategy has an intuitive statement. 
Let $Z=(z_i)$ be the aggressive strategy of competitive ratio $R$. 
From $Z$ we derive the {\em aggressive strategy with budget $T$}, which is simply the maximal prefix of $Z$ that satisfies the budget constraint $(B)$. We denote this strategy by $Z_T$.

Note that $Z_T$ may be wasteful, leaving a large portion of the budget unused, which suggests another intuitive strategy derived from $Z$. Informally, one can ``shrink'' the search lengths of $Z$ in order to deplete the budget precisely at some turn point. Formally, we define the {\em scaled aggressive strategy with budget $T$}, denoted by $\tilde{Z}_T$ as follows.
Let $l$ be the minimum index such that $2\sum_{i=1}^{l-1}z_i+z_l \geq T$, and define $\gamma$ as $T/(2\sum_{i=1}^{l-1}z_i+z_l)$. Then $\tilde{Z}_T$ is defined as $(\tilde{z}_i)=(\gamma \cdot z_i)$.

We will prove that one of $Z_T$, and $\tilde{Z}_T$ is the optimal strategy. We can first argue about constraint tightness in  an optimal solution to $L_2^{(k)}$.

\begin{lemma}
In any optimal solution to $L_2^{(k)}$, at least one of the constraints $(C_0)$ and $(B)$ is tight, and all other constraints must be tight. 
\label{lemma:line.one.constraint}
\end{lemma}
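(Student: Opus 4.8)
The plan is to work with the linear program $L_2^{(k)}$ directly. Fix an optimal solution $X=(x_1,\dots,x_k)$ (such $X$ exists: $L_2^{(k)}$ is feasible --- a sufficiently shrunk prefix of the aggressive strategy lies in it --- and bounded by $(B)$, and its optimal value is positive). Whenever a constraint is assumed slack, I will exhibit a feasible perturbation of $X$ that strictly increases $x_{k-1}+x_k$, a contradiction. I split the statement into two parts: (a) at least one of $(C_0)$ and $(B)$ is tight; and (b) each of $(C_1),\dots,(C_{k-2})$ and $(E_{k-1})$ is tight.

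Part (a) is immediate: if $(C_0)$ and $(B)$ were both slack, replace $X$ by $(1+\varepsilon)X$. The constraints $(C_1),\dots,(C_{k-2})$ and $(E_{k-1})$ are homogeneous of degree one and so still hold, $(C_0)$ and $(B)$ hold for $\varepsilon>0$ small by slackness, and the objective scales up by $1+\varepsilon$, contradicting optimality.

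For (b), first handle $(E_{k-1})$: if it is slack, apply the ``swap'' $x_{k-1}\mapsto x_{k-1}-\delta$, $x_k\mapsto x_k+2\delta$ for small $\delta>0$. Since $x_k$ has coefficient $1$ and $x_{k-1}$ coefficient $2$ in $(B)$, this leaves the left side of $(B)$ unchanged; $(C_{k-2})$ is relaxed, while the other $(C_j)$ and $(C_0)$ do not involve $x_{k-1}$ or $x_k$; only $(E_{k-1})$ tightens, by $(1+\rho)\delta$, so it stays satisfied; and the objective grows by $\delta$. For the remaining claim, fix $j\in[1,k-2]$ and suppose $(C_j)$ is slack; here I use a two-phase perturbation. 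Phase 1 shrinks the prefix: $x_i\mapsto(1-\eta)x_i$ for $i\le j$, with $x_{j+1},\dots,x_k$ unchanged. Then $(C_0),(C_1),\dots,(C_{j-1})$ are relaxed or scaled identically on both sides; $(C_j)$ stays slack for $\eta$ small; every $(C_m)$ with $m>j$ and $(E_{k-1})$ bounds a full partial sum by $\rho$ times a length of index $>j$, so only its left side shrinks and it remains satisfied; the objective is unchanged; and the left side of $(B)$ drops by $2\eta\sum_{i\le j}x_i>0$, freeing that much budget. Phase 2 then expands the tail: $x_i\mapsto(1+\varepsilon')x_i$ for $i>j$, with the prefix fixed. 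For $\varepsilon'$ small relative to $\eta$, $(C_j)$ still holds, the constraints with index $>j$ and $(E_{k-1})$ become even slacker by the same computation, and the extra consumption $\varepsilon'\,(2\sum_{j<i<k}x_i+x_k)$ fits within the budget freed in Phase 1, so $(B)$ holds; meanwhile $(C_0),\dots,(C_{j-1})$ are untouched. The objective becomes $(1+\varepsilon')(x_{k-1}+x_k)$, strictly larger --- a contradiction. Hence every $(C_j)$ is tight, completing (b).

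The main difficulty is the interaction with $(B)$ in part (b): a perturbation that enlarges the tail (hence the objective) also consumes more budget, so when $(B)$ is tight one cannot simply push the tail outward; Phase 1 is precisely what creates room, and the point to check carefully is that shrinking the prefix never violates a competitiveness constraint --- which holds because any $(C_m)$ or $(E_{k-1})$ with $m>j$ bounds the whole partial sum $\sum_{i\le m+1}x_i$ by $\rho$ times a single length not in the prefix, so deleting prefix mass strictly relaxes it. The rest is bookkeeping: picking $\eta$ small enough to keep $(C_j)$ slack, then $\varepsilon'$ small enough relative to $\eta$. (A slicker but less self-contained route is a dimension count: $L_2^{(k)}$ has $k$ variables and $k+1$ inequalities, so a basic optimum has at most one slack constraint; one still needs the perturbations to identify which one it can be, plus a convexity argument to pass tightness of $(C_1),\dots,(C_{k-2}),(E_{k-1})$ from basic optima to all optima.)
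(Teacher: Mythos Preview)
Your proof is correct and follows the same perturbation-based approach as the paper: Part~(a) and the $(E_{k-1})$ case are handled identically. The only difference is the $(C_j)$ case, where the paper uses the simpler single perturbation $x_j\mapsto x_j-\delta$, $x_k\mapsto x_k+\delta$ (this relaxes $(B)$ by $\delta$, leaves $(E_{k-1})$ unchanged, and relaxes or preserves every other $(C_m)$, so no two-phase argument is needed); your shrink-prefix/expand-tail construction works too but is more elaborate than necessary.
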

\begin{proof}
By way of contradiction, let $X^*=(x_i^*)$ denote an optimal solution for the LP which does not obey the conditions of the lemma. Recall that we only consider solutions on the line which explore strictly farther each time they visit a side, i.e. $x_{i+2}>x_i$.

Suppose that a constraint $(C_j)$ is loose.
Then we could decrease $x_j^*$ by a small amount, say $\delta$, and increase $x_k^*$ by $\delta$, maintaining feasibility, including the implicit constraint $x^*_{j-2}<x^*_j$, and improving the objective, a contradiction.

Similarly, if $(E_{k-1})$ is not tight, then we could decrease $x_{k-1}^*$ by a small amount, say $\delta$, and increase $x_{k}^*$ by $2\delta$, maintaining feasibility, including the implicit constraint $x^*_{k-3}<x^*_{k-1}$, and improving the objective, a contradiction.

It remains then to argue that one of the constraints $(C_0)$ and $(B)$ is tight. This is true because if they are both slack, then 
there would exist $\alpha>1$ such that $(\alpha \cdot x_i^*)$ is a feasible solution with a better objective value than $X^*$, a contradiction.
\end{proof}

Lemma~\ref{lemma:line.one.constraint} shows that if $X^*$ is optimal for $L_2^{(k)}$, then one can subtract successive constraints from each other to obtain the linear recurrence relation $x^*_{i+2} = \rho(x^*_{i+1}-x^*_i)$, with constraint $(C_1)$ giving an initial condition. So $X^*$, viewed as a point in $\mathbb{R}^k$, is on a line $\Delta \subset \mathbb{R}^k$, defined as the set of all points which satisfy $(C_1),\dots,(E_{k-1})$ with equality. This leaves us with two possibilities: either $X^*=X_0^{(k)}$ the point on $\Delta$ for which $(C_0)$ is tight, or $X^*=X_B^{(k)}$ the point on $\Delta$ for which $(B)$ is tight.

Define now $\mathcal{X}_0$ as the set of all feasible points $X_0^{(k)}$ and $\mathcal{X}_B$ as the set of all feasible points $X_B^{(k)}$. A point $X$ is {\em optimal} for one of these sets if its objective value is no worse than any point in that set. The following lemma is easy to see for $Z_T$, and requires a little more effort for $\tilde{Z}_T$.

\begin{lemma}
$Z_T$ is optimal for $\mathcal{X}_0$, and $\tilde{Z}_T$ is optimal for $\mathcal{X}_B$.
\label{lemma:line.one.for.each}
\end{lemma}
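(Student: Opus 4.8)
The plan is to handle the two claims separately, exploiting the structure uncovered by Lemma~\ref{lemma:line.one.constraint}. For the first claim, recall that every point in $\mathcal{X}_0$ arises from a $k$-step strategy $X_0^{(k)}$ that satisfies $(C_1),\dots,(E_{k-1})$ with equality and also $(C_0)$ with equality. Since $(C_0)$ forces $x_1=\rho$, and the equalities in $(C_1),\dots,(C_{k-2})$ yield the recurrence $x_{i+2}=\rho(x_{i+1}-x_i)$, the point $X_0^{(k)}$ is exactly the length-$k$ prefix of the aggressive strategy $Z=(z_i)$; here I would invoke the fact, stated in the Preliminaries, that $Z$ is the unique sequence satisfying $z_1=\rho$ together with the recurrence coming from the competitiveness inequalities being tight. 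Feasibility of $X_0^{(k)}$ then amounts to the budget constraint $(B)$, i.e.\ $2\sum_{i=1}^{k-1}z_i+z_k\le T$, which is precisely the condition defining the maximal prefix $Z_T$. Since the objective $x_{k-1}+x_k=z_{k-1}+z_k$ is increasing in $k$ (as $z_i$ is increasing), the feasible $X_0^{(k)}$ of largest objective is the one with the largest admissible $k$, which is exactly $Z_T$. So $Z_T$ is optimal for $\mathcal{X}_0$.

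For the second claim, the analogous reasoning shows that a point $X_B^{(k)}\in\mathcal{X}_B$ is a scalar multiple of the length-$k$ prefix of $Z$: it satisfies the same recurrence and the same ``tight'' equalities $(C_1),\dots,(E_{k-1})$, but instead of $x_1=\rho$ it is pinned down by $(B)$ being tight. Because the recurrence together with those $k-2$ equalities cuts out a one-dimensional line $\Delta\subset\mathbb{R}^k$ through the origin (the equalities are homogeneous), and the prefix $(z_1,\dots,z_k)$ lies on it, every point of $\Delta$ has the form $\gamma\cdot(z_1,\dots,z_k)$; imposing $(B)$ with equality fixes $\gamma=T/(2\sum_{i=1}^{k-1}z_i+z_k)$. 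Thus $X_B^{(k)}$ has objective value $\gamma_k(z_{k-1}+z_k)$ with $\gamma_k=T/(2\sum_{i=1}^{k-1}z_i+z_k)$, and feasibility requires $(C_0)$, i.e.\ $\gamma_k z_1\le\rho$, equivalently $\gamma_k\le 1$, i.e.\ $2\sum_{i=1}^{k-1}z_i+z_k\ge T$. The minimal such $k$ is the index $l$ in the definition of $\tilde{Z}_T$, and for that $k$ we get exactly $\tilde{Z}_T=(\gamma z_i)$.

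It remains to show that among all feasible $X_B^{(k)}$, the one with $k=l$ maximizes the objective $f(k):=\gamma_k(z_{k-1}+z_k)$. This is the step I expect to be the main obstacle, since for $k>l$ the scaling factor $\gamma_k$ shrinks while the unscaled objective $z_{k-1}+z_k$ grows, so the comparison is not immediate. The plan is to show $f(k)$ is non-increasing for $k\ge l$ by a direct ratio estimate: $f(k+1)/f(k)=\frac{z_k+z_{k+1}}{z_{k-1}+z_k}\cdot\frac{2\sum_{i=1}^{k-1}z_i+z_k}{2\sum_{i=1}^{k}z_i+z_{k+1}}$, and one shows this is $\le 1$. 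Intuitively, the growth rate of the partial-sum-with-tail term $2\sum_{i=1}^{k-1}z_i+z_k$ dominates the growth rate of $z_{k-1}+z_k$: writing everything in terms of the closed form $z_i=A\zeta_1^i+B\zeta_2^i$ with $\zeta_2>\zeta_1>1$ the roots of $p(t)=t^m-\rho t+\rho$ (with $m=2$ here, though the argument is uniform), both quantities grow like $\zeta_2^k$, and a short computation comparing the coefficients gives the inequality; the key algebraic input is $p(\zeta_2)=0$, i.e.\ $\zeta_2^2=\rho(\zeta_2-1)$, which ties the ``tail'' increment $z_{k+1}=\rho(z_k-z_{k-1})$ to the dominant mode. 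Once $f$ is shown non-increasing on $\{k\ge l\}$ (and noting $X_B^{(k)}$ is infeasible for $k<l$ since then $\gamma_k>1$ violates $(C_0)$), it follows that $\tilde{Z}_T$, the point at $k=l$, is optimal for $\mathcal{X}_B$, completing the proof.
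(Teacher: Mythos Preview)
Your approach matches the paper's: identify $X_0^{(k)}$ with the length-$k$ prefix of $Z$, identify $X_B^{(k)}$ with its scaled version $(\gamma_k z_i)$, and then show that the objective $f(k)=\gamma_k(z_{k-1}+z_k)$ is non-increasing in $k$ using the closed form for $z_i$. Two small points are worth noting.

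First, for the step you flag as ``the main obstacle,'' the paper uses a simplification you do not mention: since $(E_{k-1})$ is tight on $\Delta$, one has $\sum_{i=1}^k z_i=\rho z_{k-1}$, hence $2\sum_{i=1}^{k-1}z_i+z_k=2\rho z_{k-1}-z_k$. This collapses the clearance to $f(k)=T\cdot\dfrac{z_{k-1}+z_k}{2\rho z_{k-1}-z_k}$, and a one-line cross-multiplication reduces $f(k)\le f(k-1)$ to the log-concavity condition $\dfrac{z_k}{z_{k-1}}\le\dfrac{z_{k-1}}{z_{k-2}}$. From the closed form $z_i=A\zeta_2^i+B\zeta_1^i$ this is just $(\zeta_1-\zeta_2)^2\ge 0$. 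This is cleaner than computing the ratio $f(k+1)/f(k)$ directly with the raw partial sums.

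Second, your closed form $z_i=A\zeta_1^i+B\zeta_2^i$ with $\zeta_2>\zeta_1$ presumes distinct roots and so does not cover the boundary case $\rho=\rho_2^*=4$, where $t^2-\rho t+\rho=(t-2)^2$ has a double root. The paper treats this case separately, using the explicit formula $z_i=\tfrac{m+i-1}{m-1}\bigl(\tfrac{m}{m-1}\bigr)^i$, for which $z_{i+1}/z_i=\tfrac{m+i}{m+i-1}\cdot\tfrac{m}{m-1}$ is visibly decreasing. You should either add this case or argue by continuity.
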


\begin{proof}
$X_0^{(k)}$ is simply a prefix $(z_1,\dots,z_k)$ of the aggressive strategy $Z$, because the formulas defining them are identical. Because $z_i$ is increasing (see the formulas for $Z$ given above), the objective value of $X_0^{(k)}$ is increasing, and so $Z_T$, which is the longest feasible prefix for $L_2^{(k)}$, is optimal for $\mathcal{X}_0$.

$X_B^{(k)}$ is a scaled version of $X_0^{(k)}$ (they both belong to the same line $\Delta$), and so is given by $(\gamma_k z_i)$ where $\gamma_k = \frac{T}{2 S_{k-1}^{(Z)} + z_k}$. Denote ${\bf clr}(X)$ the objective value, or clearance, of a strategy $X$: we have ${\bf clr}(X_B^{(k)}) =\gamma_k (z_{k-1}+z_k)= \frac{z_{k-1}+z_k}{2 \rho z_{k-1}^{(Z)}-z_k}T$, using the identity 
$(E_{k-1})$, which holds because $X_B^{(k)}\in \Delta$.
We want to show that the clearance of $(X_B^{(k)})$ decreases with $k$. A short calculation yields:
\[{\bf clr}(X_B^{(k)})\leq {\bf clr}(X_B^{(k-1)}) \Leftrightarrow \frac{z_k}{z_{k-1}} \leq \frac{z_{k-1}}{z_{k-2}}. \]

We now make use of the formulas for $z_i$. For optimal $\rho=\rho_m^*$, we get
\[ \frac{z_{i+1}}{z_i} = \frac{m+i}{m+i-1}\cdot\frac{m}{m-1}, \]
which is indeed decreasing, and for $\rho>\rho_2^*$ a short calculation yields
\[ \frac{z_{i+1}}{z_i}\leq \frac{z_i}{z_{i-1}} \Leftrightarrow \zeta_1^2 + \zeta_2^2 \geq 2\zeta_1\zeta_2.\]
Therefore for all $\rho$, $\tilde{Z}_T$ is optimal for $\mathcal{X}_B$.
\end{proof}

From Lemma~\ref{lemma:line.one.constraint} and~\ref{lemma:line.one.for.each} we conclude that the better of the two strategies $Z_T$ and $\tilde{Z}_T$ is optimal for {\sc Max}(R,T) on the line. We call this strategy the {\em mixed aggressive} strategy.

\section{Maximum Clearance on the Star}
\label{sec:star}

We now move to the $m$-ray star domain. We require that the strategy be $(1+2\rho)$-competitive, for some given $\rho\geq \rho^*_m$, where $\rho^*_m=\frac{m^m}{(m-1)^{m-1}}$, and we
are given a time budget $T$.

\subsection{A first, but suboptimal approach}
\label{subsec:star.other}

An obvious first place to look is the space of geometric strategies. We wish the geometric strategy to have competitive ratio $1+2\rho$, so the strategy must have base $b \in [\zeta_1,\zeta_2]$, using the notation of the preliminaries. Since we want to maximize the clearance of our strategy, it makes sense to take $b=\zeta_2$. We define the {\em scaled geometric strategy with budget T} similarly to the scaled aggressive strategy: find the first step at which the budget $T$ is depleted, and scale down the geometric strategy so that it depletes $T$ precisely at the end of that step. The scaled geometric strategy represents the best known strategy prior to this work, but is suboptimal.

For Maximum Clearance on the line, we proved that the optimal strategy is the best of the aggressive and the scaled aggressive strategies. One may ask then whether the optimal strategy in the star domain can also be expressed simply as the better of these two strategies. The answer is negative, as we show in the experimental evaluation.

\subsection{Modeling as an LP}

As with the line, we first show how to formulate the problem using a family of LPs, denoted by $L^m_k$, partitioning strategies according to their length $k$. For a given step $j$, we denote by $\bar{\jmath}$  the previous step for which the searcher visited the same ray, i.e, the maximum $\bar{\jmath} <j$
such that $r_{\bar{\jmath}}=r_j$, assuming it exists, otherwise we set $x_{\bar{\jmath}}=1$.
We denote by $l_r$ the last step
at which the searcher explores ray $r$. Finally, we denote by $j_0$ the last step in which the searcher searches a yet unexplored ray, i.e., the largest step $j$ such that $\bar{\jmath}=0$. This gives us:

\begin{align*}
\text{max} &\quad  \sum \nolimits_{i=1}^m x_{l_i}  & \tag{$L_m^{(k)}$}\\
\text{subject to}& \quad  \sum \nolimits_{i=1}^{j_0} x_i \leq \rho &\tag{$C_0$} \\ 
&\sum \nolimits_{i=1}^{j-1} x_i \leq \rho \cdot x_{\bar{\jmath}},	\qquad \;	j \in [j_0+1, k] &\tag{$C_j$}\\
&\sum \nolimits_{i=1}^k x_i \leq \rho \cdot x_{l_r},\;	  r \in [1,m], l_r\neq r_k &\tag{$E_r$} \\
&2 \sum \nolimits_{i=1}^{k-1} x_i + x_k \leq T    &\tag{$B$}
\end{align*}

Here, constraints $(C_0), (C_{j_0}), \ldots (C_k)$ model the $(1+2\rho)$-competitiveness of the strategy, and constraint $(B)$ models the budget constraint.
Constraints $(E_1), \ldots ,(E_m)$ model the extendability property, by giving competitiveness constraints for targets placed just beyond the turn points  at $x_{l_1},\dots,x_{l_r}$. Details concerning the derivation of all constraints can be found in the Appendix.

As is standard in star search problems, we can add some much-needed structure in the above formulation.
\begin{theorem}[Appendix]
Any optimal solution $X^*=(x^*_i,r_i)$ to $L_m^{(k)}$ must be {\em monotone} and {\em cyclic}: $(x^*_i)$ is increasing and $r_i = i \mod m$ up to a permutation.
\label{theorem:cyclic}
\end{theorem}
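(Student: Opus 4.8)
The plan is to establish the two properties separately, first monotonicity and then cyclicity, each by an exchange argument that perturbs a hypothetical non-conforming optimal solution into one with at least as large an objective while preserving all constraints of $L_m^{(k)}$. Throughout I would rely on the fact that the constraints have a very rigid shape: the left-hand sides are sums of prefixes of $(x_i)$, and the right-hand sides are $\rho$ times a single variable $x_{\bar{\jmath}}$ or $x_{l_r}$, so local swaps of search lengths interact with the constraints in a controlled way.

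For \emph{monotonicity within a ray}, i.e.\ $x_{\bar{\jmath}} < x_j$ whenever $\bar{\jmath} \ge 1$, the point is that a strategy that revisits a ray without making progress is wasteful: it pays the round-trip cost in constraint $(B)$ but contributes nothing new, and it makes the competitiveness constraint $(C_j)$ harder (its right-hand side $\rho x_{\bar{\jmath}}$ is small while its left-hand side has grown). So if some optimal $X^*$ has $x^*_{\bar{\jmath}} \ge x^*_j$, I would simply delete step $j$ — reindexing the later steps — and argue that the objective does not decrease (the removed leg was not the last visit to its ray, or if it was, it was dominated by $x^*_{\bar{\jmath}}$ which then becomes $l_r$) while every constraint stays satisfied, since removing a term only shrinks the prefix sums on the left and can only enlarge the available budget; this contradicts optimality or, at worst, produces an equally good strictly shorter solution, which by induction on $k$ we may assume is already monotone. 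This also forces the global sequence $(x^*_i)$ to be increasing: between consecutive steps $i$ and $i+1$ on \emph{different} rays, constraint $(C_{i+1})$ reads $\sum_{t \le i} x^*_t \le \rho x^*_{\overline{i+1}}$, and combined with the analogous inequality one step earlier and the per-ray monotonicity just proved, one derives $x^*_i < x^*_{i+1}$ after a short manipulation (this is the standard ``search lengths grow'' argument for star search, e.g.\ in~\cite{gal:general}).

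For \emph{cyclicity}, once the sequence is globally increasing the remaining task is to show the ray labels $r_i$ cycle through all $m$ rays in a fixed order. The first ingredient is that no ray is skipped before some other ray is revisited: if at step $j$ the searcher visits a ray already explored while some ray $r'$ is still untouched, then redirecting step $j$ to $r'$ (keeping $x^*_j$) strictly helps — it adds a new ray to the objective via a fresh $x_{l_{r'}}$ term, and it relaxes $(C_j)$ since the new $\bar{\jmath}=0$ means the right-hand side becomes $\rho$ rather than $\rho x^*_{\bar{\jmath}} < \rho x^*_j \le \rho$, which is at least as large — wait, here one must be careful because $x^*_{\bar\jmath}$ could exceed $1$; the correct statement is that $(C_j)$ with $\bar\jmath = 0$ is implied by $(C_0)$, so feasibility is preserved, and the objective strictly increases, a contradiction. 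The second ingredient fixes the \emph{order}: among the first $m$ steps all rays are visited exactly once (by the previous point and a counting argument), and if the cyclic order of visits is not respected thereafter, an adjacent transposition of two steps $i, i+1$ that are ``out of phase'' can be shown to keep all constraints satisfied — the prefix sums $\sum_{t\le s} x^*_t$ are unchanged for $s \notin\{i\}$, only the one at $s=i$ changes, and monotonicity of $(x^*_i)$ controls the direction — while not decreasing the objective; iterating such transpositions sorts the schedule into cyclic order.

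The main obstacle I anticipate is the bookkeeping in the cyclicity argument: the definitions of $\bar{\jmath}$, $l_r$, and $j_0$ all shift when one deletes or reorders a step, and one must verify that after each local move the resulting object is still a \emph{bona fide} instance of some $L_m^{(k')}$ with the indices $\bar{\jmath}, l_r, j_0$ correctly recomputed, and that the set of constraints it must satisfy is a subset (or a ``no harder'' variant) of the original. Handling the extendability constraints $(E_r)$ under these moves is the subtlest case, since which step is $l_r$ for each ray can change; but because $(E_r)$ has the same prefix-sum-versus-$\rho x_{l_r}$ shape as $(C_j)$, the same monotonicity-based estimates apply. I would organize the proof so that a single lemma — ``deleting a non-progressing step, or redirecting to an unexplored ray, or transposing two adjacent out-of-phase steps, never decreases the objective and preserves feasibility'' — does all the work, and then monotonicity and cyclicity follow by applying it until no such move remains.
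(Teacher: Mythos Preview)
Your exchange moves for global monotonicity and for the cyclic order (swapping two adjacent lengths, transposing out-of-phase ray labels) do \emph{not} strictly increase the objective --- they leave it unchanged. So from ``$X^*$ optimal and non-monotone'' you only produce an equally good feasible $Y$, which is no contradiction; at best this shows \emph{some} optimum is monotone and cyclic, whereas the theorem claims \emph{every} optimum is. Your attempted direct derivation of $x^*_i<x^*_{i+1}$ from the two inequalities $(C_i)$, $(C_{i+1})$ does not work either: their right-hand sides are $\rho x^*_{\bar{\imath}}$ and $\rho x^*_{\overline{i+1}}$, values at much earlier and unrelated steps, and no ``short manipulation'' yields the conclusion without invoking optimality. (The per-ray monotonicity $x_{\bar{\jmath}}<x_j$ you spend a paragraph on is already an implicit standing assumption in the LP, so that part is not needed.)

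The missing ingredient is a tightness lemma that the paper establishes first: at any optimum of $L_m^{(k)}$, every $(C_j)$ with $j>j_0$ and every $(E_r)$ is tight (only one of $(C_0)$, $(B)$ may be slack). With this in hand the swap becomes decisive: if $x^*_{i_0}>x^*_{i_0+1}$, exchanging the two steps (and the subsequent roles of their rays) yields a feasible $Y$ with the same objective but with $S_{i_0}^{(Y)}<S_{i_0}^{(X)}=\rho x^*_{\overline{i_0+1}}=\rho y_{\overline{i_0+1}}$, so $(C_{i_0+1})$ is strictly slack for $Y$; by the tightness lemma $Y$ is not optimal, hence neither is $X^*$. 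Cyclicity also falls out of tightness rather than from transpositions: the equalities $S_{j-1}=\rho x^*_{\bar{\jmath}}$ immediately force the sequence $(x^*_{\bar{\jmath}})_j$ to be increasing, i.e.\ each step visits the currently least-explored ray, and together with monotonicity this gives the cyclic order by a one-line induction.
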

This means that we can formulate the problem using a much simpler family of LPs which we denote by $P_m^{(k)}$, where constraints $(M_i)$ model monotonicity.

\begin{align*}
\text{max}& \quad  \sum \nolimits_{i=0}^{m-1} x_{k-i}   &\tag{$P_m^{(k)}$}\\
\text{subj to}& \quad \sum \nolimits_{i=1}^{m-1} x_i\leq \rho &\tag{$C_0$} \\ 
&\sum \nolimits_{i=1}^{j+m-1} \!x_i\leq \rho \cdot x_{j},   \qquad\,	  j \in [1, k-m] &\tag{$C_j$}\\
&\!\!\!\!\sum \nolimits_{i=1}^k x_i \leq \rho\cdot x_j,  	\;\;	j \in [k-m+1,k-1]  &\tag{$E_j$} \\
&x_i \leq x_{i+1},	\qquad\qquad\qquad\quad\;	  i \in [1,k-1]&\tag{$M_i$}\\
&2\sum \nolimits_{i=1}^{k-1} x_i+ x_k \leq T    &\tag{$B$}
\end{align*}

\subsection{Solving $P_m^{(k)}$}
\label{subsec:star.optimal}

While proving cyclicality, we also prove that for any optimal solution to $L_m^{(k)}$, most of the constraints are tight, similarly to Lemma~\ref{lemma:line.one.constraint}. Applying this result to $P_m^{(k)}$ gives the following.
\begin{lemma}
In an optimal solution to the LP $P_m^{(k)}$, constraints $(M_i)$ are not necessarily tight, at least one of the constraints $(C_0)$ and $(B)$ is tight, and all other constraints must be tight.
\label{lemma:star.one.constraint}
\end{lemma}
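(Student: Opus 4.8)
The plan is to mimic the argument of Lemma~\ref{lemma:line.one.constraint}, with the extra care that the monotonicity constraints $(M_i)$ introduce a one-sided obstruction to the perturbation arguments. I would assume the tightness statement that the paper says is proved alongside Theorem~\ref{theorem:cyclic} (namely, that in an optimal solution to $L_m^{(k)}$ most constraints are tight) and simply transfer it to $P_m^{(k)}$; but to make the lemma self-contained I would also sketch the direct LP-perturbation proof. So the first step is to argue by contradiction: take an optimal $X^*=(x^*_i)$ feasible for $P_m^{(k)}$ and violating the claimed tightness pattern, and exhibit a feasible perturbation with strictly larger objective $\sum_{i=0}^{m-1}x^*_{k-i}$.

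Second, I would handle a loose competitiveness constraint $(C_j)$ for $j\in[1,k-m]$: decrease $x^*_j$ by a small $\delta$ and increase $x^*_k$ by $\delta$. This keeps $(C_0)$ and $(B)$ valid (the left-hand side of $(B)$ changes by at most $-2\delta+\delta<0$, and actually one must check whether index $j$ appears with coefficient $2$ or $1$ — since $j\le k-m<k$ it has coefficient $2$, so the budget only improves), it keeps the other $(C_{j'})$ and $(E_{j'})$ valid for $\delta$ small (each is either strictly slack already or has $x_j$ only on the larger side), and it strictly increases the objective because $k$ is always among the top $m$ indices. The one thing to verify is feasibility of the monotonicity constraints $(M_{j-1})$ and $(M_j)$ touching $x^*_j$: if decreasing $x^*_j$ would violate $x^*_{j-1}\le x^*_j$, then $(M_{j-1})$ was tight, and one instead decreases the whole tight block $x^*_{j'}=x^*_{j'+1}=\cdots=x^*_j$ simultaneously by $\delta$; this is exactly why $(M_i)$ need not be tight in the conclusion. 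Similarly, if some $(E_j)$ with $j\in[k-m+1,k-1]$ is loose, decrease $x^*_j$ by $\delta$ and increase $x^*_k$ by $2\delta$ (the factor $2$ coming from the same algebra that gave the factor $2$ in the line proof via identity $(E_{k-1})$), again dragging along any tight monotone block ending at $j$, and check $(B)$ and the remaining constraints. I should double-check the objective is still improved when $j$ is itself one of the top-$m$ indices: then we lose $\delta$ at $x_j$ but gain $2\delta$ at $x_k$, net $+\delta>0$.

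Third, once every $(C_j)$, $j\ge1$, and every $(E_j)$ is tight, I would show one of $(C_0)$, $(B)$ must be tight: if both are slack, then for $\alpha>1$ close to $1$ the scaled point $(\alpha x^*_i)$ is still feasible — it preserves all the equalities among $(C_j)_{j\ge1}$, $(E_j)$ (these are homogeneous), preserves $(M_i)$ (homogeneous), and keeps $(C_0)$, $(B)$ valid by slackness — and it has strictly larger objective, a contradiction. This is verbatim the last paragraph of the proof of Lemma~\ref{lemma:line.one.constraint}.

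The main obstacle is the bookkeeping forced by the $(M_i)$ constraints: unlike the line case, one cannot always decrease a single coordinate, and must instead decrease a maximal monotone-tight block and verify that shifting this entire block down by $\delta$ does not break the competitiveness constraints $(C_{j'})$ or the budget in some index with the wrong sign. I expect this to go through because decreasing any prefix-supported quantity only helps $(C_0)$, $(B)$, and the $(C_{j'})$/$(E_{j'})$ constraints whose binding side is a single later coordinate, while the objective involves only the top $m$ coordinates, all of which lie at or beyond the block; but laying out the case analysis cleanly (block ends before $k-m$, in $[k-m+1,k-1]$, or equals a range overlapping the objective window) is where the real work sits, and it is the reason the statement is phrased with the asymmetry "$(M_i)$ not necessarily tight."
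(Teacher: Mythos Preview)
Your overall plan is sound and the $(C_j)$ and scaling parts are fine, but the $(E_j)$ perturbation as you state it fails for $m\ge 3$. When you decrease $x_j$ by $\delta$ and increase $x_k$ by $2\delta$, the common left-hand side $\sum_{i=1}^k x_i$ of \emph{every} extendability constraint changes by $-\delta+2\delta=+\delta$, so any other $(E_{j'})$ that was already tight is now violated. On the line this never bites because there is a single extendability constraint, which is why the ``factor $2$'' trick from Lemma~\ref{lemma:line.one.constraint} goes through there; on the star there are $m-1$ of them sharing the same left-hand side, and you cannot move it without moving all of them together.

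The paper avoids this entirely by proving the tightness lemma on $L_m^{(k)}$ (no $(M_i)$'s present), and only afterwards restricting to cyclic monotone strategies; the block bookkeeping you worry about simply never arises. For the $(E_r)$ case there, the paper decreases $x_{l_r}$ by $\delta$ and increases $x_k$ by only $\delta$: the objective is unchanged (both indices are among the last $m$), the shared LHS of all $(E_{r'})$ is unchanged, but the last competitiveness constraint $(C_k)$ (or $(C_{k-m})$ in the cyclic formulation) strictly loosens, contradicting the tightness already established for the $(C_j)$'s. You can patch your direct argument the same way: once you have all $(C_j)$ tight, use the $+\delta$ move instead of $+2\delta$ and derive the contradiction from $(C_{k-m})$ going slack. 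Separately, your block argument for a loose $(C_j)$ does go through, but the reason is an observation you do not state: if $x_a=\cdots=x_j$ and $(C_j)$ is loose, then every $(C_i)$ with $i\in[a,j]$ is also loose, since their right-hand sides coincide and their left-hand sides are nested; this is what lets you lower the whole block without breaking a tight $(C_i)$ inside it.
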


Subtracting $(C_i)$ from $(C_{i+1})$ and $(C_{k-m})$ from $(E_{k-m+1})$ gives a linear recurrence formula which any optimal solution $X^*$ must satisfy:
\[ x_{i+m}^* = \rho(x_{i+1}^*-x_i^*). \quad i \in [1,k-m] \]
The constraints $(E_j)$ give us $m-1$ equations to help determine the solution: $\rho x_{k-m+1}^*=\dots=\rho x_{k-1}^*=S_k$. So $X^*$, viewed as a point in $\mathbb{R}^k$, is on a line $\Delta_m^{(k)}\subset \mathbb{R}^k$, defined as the set of all points which satisfy $(C_1),\dots,(E_{k-1})$ with equality.
Lemma \ref{lemma:star.one.constraint} shows that the solution to $P_m^{(k)}$ is either the point $X_0^{(k)} \in \Delta_k^m$ for which constraint $(C_0)$ is tight, or the point $X_B^{(k)} \in \Delta_k^m$ for which constraint $(B)$ is tight.

We can compute these two strategies efficiently for a fixed $k$, as we will demonstrate for $X_B^{(k)}$.
We rewrite the conditions $X_B^{(k)}\in \Delta_k^m$ and ``$(B)$ is tight'' as a matrix equation:
\begin{equation}
{\cal M}_{k,B}^m \times X
=
\begin{pmatrix}
0 & \cdots & 0 & T \\
\end{pmatrix}^\top
\label{equation:the.matrix}
\end{equation}
where ${\cal M}_{k,B}^m$ is the following $k\times k$ matrix:

\[
\begin{pmatrix}
\rho	&-\rho	&0	&0	&\cdots &1	&0	&\cdots &0	&0	&0 \\
0	&\rho	&-\rho	&0	&\cdots &0	&1	&\cdots &0	&0 	&0 \\
\vdots 	&\vdots &\vdots &\vdots&\ddots &\vdots&\vdots &\ddots &\vdots &\vdots&\vdots \\
0	&0	&0	&0	&\cdots &0	&0	&\cdots	&\rho	&-\rho	&0 \\
1	&1	&1	&1	&\cdots &1	&1	&\cdots	&1	&1-\rho	&1 \\
2	&2	&2	&2	&\cdots &2	&2	&\cdots	&2	&2	&1 \\
\end{pmatrix}
\]

${\cal M}_{k,B}^m$ has a very nice structure, and is very sparse, as all coefficients are concentrated in three diagonals (numbered $1$, $2$, and $m+1$) and the last two lines. This is good for us: we can solve \eqref{equation:the.matrix} in time $O(k)$ using Gaussian elimination. $X_0^{(k)}$ can be computed similarly, using the matrix ${\cal M}_{k,0}^m$, which is identical to ${\cal M}_{k,B}^m$ except for the last line, which contains $(C_0)$, and \eqref{equation:the.matrix} becomes ${\cal M}_{k,0}^m \times X_0^{(k)} = (0 \cdots 0 \, \rho)^\top$. When solving \eqref{equation:the.matrix} we discarded the constraint $(C_0)$, so we need to check whether $X_B^{(k)}$ is feasible for this constraint. Similarly, we need to check whether $X_0^{(k)}$ is feasible for $(B)$.

\subsection{Finding the optimal strategy}

At this point, we have determined how to compute two families of strategies, the sets $\mathcal{X}_0=\{ X_0^{(k)}, k\in \mathbb{N}\}$ and $\mathcal{X}_B = \{X_B^{(k)},k\in\mathbb{N}\}$, and we have shown that any optimal strategy belongs to one of these two families. Define $k_0$ the highest $k$ for which $X_0^{(k)}$ is feasible, and $k_B$ the lowest $k$ for which $X_B^{(k)}$ is feasible. We conclude with our two main results.
\begin{theorem}[Appendix]
$X_0^{(k)}$ is feasible if and only if $k\leq k_0$, and $X_B^{(k)}$ is feasible if and only if $k\geq k_B$. Moreover, $X_0^{(k_0)}$ is optimal for $\mathcal{X}_0$, and $X_B^{(k_B)}$ is optimal for $\mathcal{X}_B$.
\label{theorem:star.optimal}
\end{theorem}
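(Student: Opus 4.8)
The plan is to prove Theorem~\ref{theorem:star.optimal} in two halves: first the monotonicity of feasibility in $k$ for each of the two families, and then the claim that the extreme feasible member of each family is the optimal one within that family.

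\textbf{Step 1: feasibility is monotone in $k$.} For the family $\mathcal{X}_0$, recall that by Lemma~\ref{lemma:star.one.constraint} the point $X_0^{(k)}$ lies on the line $\Delta_m^{(k)}$ and has constraint $(C_0)$ tight; subtracting successive constraints shows its coordinates obey the recurrence $x_{i+m} = \rho(x_{i+1}-x_i)$ with the initial segment pinned down by $(C_0)$ and the $(E_j)$'s. As in Lemma~\ref{lemma:line.one.for.each}, these are exactly the defining equations of a prefix of the aggressive strategy $Z$ — up to the boundary conditions $(E_j)$ which affect only the last $m-1$ coordinates. The key observation is that $X_0^{(k)}$ is feasible iff it satisfies monotonicity $(M_i)$ and the budget $(B)$; the budget constraint $2\sum_{i=1}^{k-1} x_i + x_k$ is increasing in $k$ along this family (each new coordinate is positive and larger than the previous ones by monotonicity of $Z$), so once $(B)$ fails for some $k$ it fails for all larger $k$. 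Hence feasibility of $X_0^{(k)}$ holds exactly for $k\leq k_0$. Symmetrically, $X_B^{(k)}$ always saturates $(B)$ and is a scaled-down copy of the same recurrence solution; the constraint that can fail is $(C_0)$, i.e.\ $\sum_{i=1}^{m-1} x_i \leq \rho$. Since $X_B^{(k)} = \gamma_k (z_1,\dots,z_k)$ with scaling factor $\gamma_k = T/(2 S_{k-1}^{(Z)} + z_k)$ decreasing in $k$ (the denominator grows), the left side $\gamma_k\sum_{i=1}^{m-1} z_i$ is decreasing in $k$, so once $(C_0)$ holds for some $k$ it holds for all larger $k$; hence feasibility of $X_B^{(k)}$ holds exactly for $k\geq k_B$.

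\textbf{Step 2: the extreme member is optimal within its family.} For $\mathcal{X}_0$, the objective $\sum_{i=0}^{m-1} x_{k-i}$ evaluated at $X_0^{(k)}$ is a sum of the top $m$ coordinates of a prefix of $Z$, hence strictly increasing in $k$; therefore the largest feasible index $k_0$ gives the best objective, i.e.\ $X_0^{(k_0)}$ is optimal for $\mathcal{X}_0$. For $\mathcal{X}_B$, the argument mirrors the line case in Lemma~\ref{lemma:line.one.for.each}: using that $(B)$ is tight and $X_B^{(k)}\in\Delta_m^{(k)}$, we can write the clearance ${\bf clr}(X_B^{(k)})$ in closed form as $T$ times a ratio of the top-$m$ sum to a linear expression in $z_{k-1}$ and the relevant $z_i$'s coming from the $(E_j)$ identities. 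A short computation reduces the inequality ${\bf clr}(X_B^{(k)}) \leq {\bf clr}(X_B^{(k-1)})$ to a monotonicity condition on the ratios $z_{k}/z_{k-1}$ — the same condition $\zeta_1^2+\zeta_2^2 \geq 2\zeta_1\zeta_2$ (equivalently $(\zeta_1-\zeta_2)^2\geq 0$) that appeared on the line, now possibly aggregated over the $m$ top coordinates. Since that inequality is trivially true, ${\bf clr}(X_B^{(k)})$ is non-increasing in $k$, so the smallest feasible index $k_B$ is optimal for $\mathcal{X}_B$.

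\textbf{Main obstacle.} The routine part is the recurrence/prefix identification; the delicate part is Step~2 for $\mathcal{X}_B$ on the star rather than the line. On the line the clearance was governed by a single ratio $z_k/z_{k-1}$, but here the objective sums $m$ consecutive coordinates and the tight constraints $(E_j)$ relate $z_{k-m+1},\dots,z_{k-1}$ to the common value $S_k/\rho$, so the closed form for ${\bf clr}(X_B^{(k)})$ and the resulting monotone comparison are more involved. I expect to need the explicit closed-form expression for $z_i$ in terms of $\zeta_1,\zeta_2$ (from the Appendix) to verify that the per-step ratio is decreasing and that this propagates to the $m$-term sum; handling the two regimes $\rho = \rho_m^*$ (double root, where $z_{i+1}/z_i = \frac{m+i}{m+i-1}\cdot\frac{m}{m-1}$) and $\rho > \rho_m^*$ (distinct roots $\zeta_1 < \zeta_2$) separately, as was done for the line, should close the argument.
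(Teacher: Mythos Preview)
Your proposal has a genuine gap: you are treating $X_0^{(k)}$ as a prefix of the aggressive strategy $Z$ (and $X_B^{(k)}$ as a scaled prefix), but for $m\geq 3$ this is false. The extendability constraints $(E_j)$, when tight, force $x_{k-m+1}=\cdots=x_{k-1}=S_k/\rho$, so the last $m-1$ coordinates of $X_0^{(k)}$ are all equal to a common value. In the aggressive strategy $Z$ those coordinates are strictly increasing. Consequently $X_0^{(k)}$ is \emph{not} a prefix of $X_0^{(k+1)}$, and neither is it a prefix of $Z$; your Step~1 reasoning (``each new coordinate is positive and larger than the previous ones by monotonicity of $Z$'') and your Step~2 reasoning for $\mathcal{X}_0$ (``a sum of the top $m$ coordinates of a prefix of $Z$'') both collapse. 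The same issue invalidates the formula $X_B^{(k)}=\gamma_k(z_1,\dots,z_k)$ and everything built on it. You flag the $(E_j)$'s as ``affecting only the last $m-1$ coordinates,'' but that is precisely the objective and the part of the budget that matters, and those coordinates change as $k$ varies.

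The paper handles this by a different, more indirect route. It never compares $X_0^{(k)}$ to $Z$. Instead it proves a technical positivity lemma (any point feasible for $P_m^{(k)}$ has nonnegative coordinates, via an iterative tightening procedure), and then compares $X_0^{(k)}$ to $X_0^{(k-1)}$ by showing that the shifted difference $(x_i-y_{i-1})$ is itself feasible for $P_m^{(k)}$ and therefore nonnegative; this yields $x_{k-i}\geq y_{k-1-i}$ for all $i$, hence both the objective and the time budget are increasing in $k$. For $\mathcal{X}_B$, the paper again avoids closed forms in $\zeta_1,\zeta_2$: it applies the same tightening procedure to a scaled, shifted copy of $X_0^{(k)}$ to extract the inequality $x_k/x_{k-1}\leq y_{k-1}/y_{k-2}$, and then a direct algebraic manipulation of ${\bf clr}(X_B^{(k)})=T\cdot\frac{(m-1)x_{k-1}+x_k}{2\rho x_{k-1}-x_k}$ (using that the $m-1$ equal final coordinates sum to $(m-1)x_{k-1}$) shows the clearance is decreasing in $k$. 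Your plan to reduce to $(\zeta_1-\zeta_2)^2\geq 0$ via the explicit $z_i$ formula does not apply here, since the objects in play are not built from $z_i$.
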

\noindent \textit{Proof sketch.} We show first that any point $(x_i)$ that is feasible for $P_m^{(k)}$ is positive: $\forall i,x_i\geq 0$. Denote $X_0^{(k)}=(x_i)$ and $X_0^{(k-1)}=(y_i)$. Using the convention $y_0=1$, the strategy $D=(x_i-y_{i-1})$ is feasible for $P_m^{k}$, therefore positive. This means that $X$ has a higher objective value than $Y$, and also requires a larger budget: this shows that $k_0$ is well-defined and optimal. Because $X_0^{(k)}$ and $X_B^{(k)}$ are scaled versions of each other, we get $k_B = k_0$ or $k_0+1$. Additional calculations show that the objective values of $X_B^{(k)}$ are decreasing.

\begin{theorem}
The optimal strategy for the $m$-ray star can be computed in time $O(m\log(T)\log(m\log(T)))$.
\label{thm:complexity.star}
\end{theorem}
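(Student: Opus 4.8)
The plan is to combine the $O(k)$-time Gaussian-elimination routine for computing $X_0^{(k)}$ and $X_B^{(k)}$ (Section~\ref{subsec:star.optimal}) with a binary search over the step count $k$, using Theorem~\ref{theorem:star.optimal} to justify that only the extremal feasible indices $k_0$ and $k_B$ matter. First I would bound the relevant range of $k$. A feasible strategy must respect the budget constraint $(B)$, and since search lengths are at least $1$ and (by monotonicity) grow at a geometric rate governed by the recurrence $x_{i+m}^*=\rho(x_{i+1}^*-x_i^*)$, the total traversed distance $2\sum_{i=1}^{k-1}x_i+x_k$ grows at least like a constant times $\zeta_1^{k/m}$ (or like $(m/(m-1))^{k/m}$ in the optimal-ratio case), so $k = O(m\log T)$. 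Thus both $k_0$ and $k_B$ lie in $\{1,\dots,K\}$ with $K = O(m\log T)$.

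Next I would argue that the feasibility predicates are monotone in $k$, which is exactly the content of the first half of Theorem~\ref{theorem:star.optimal}: $X_0^{(k)}$ is feasible iff $k\le k_0$, and $X_B^{(k)}$ is feasible iff $k\ge k_B$. Monotone predicates over a range of size $K$ can be resolved by binary search with $O(\log K)$ probes. Each probe computes one candidate strategy $X_0^{(k)}$ or $X_B^{(k)}$ by solving the sparse $k\times k$ system~\eqref{equation:the.matrix} via Gaussian elimination in $O(k)=O(K)=O(m\log T)$ time, then checks the single discarded constraint ($(B)$ for $X_0^{(k)}$, resp.\ $(C_0)$ for $X_B^{(k)}$) together with the monotonicity constraints $(M_i)$ in $O(k)$ time. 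Hence locating $k_0$ costs $O(m\log T \cdot \log(m\log T))$, and likewise for $k_B$.

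Finally, by Theorem~\ref{theorem:star.optimal}, $X_0^{(k_0)}$ is optimal for $\mathcal{X}_0$ and $X_B^{(k_B)}$ is optimal for $\mathcal{X}_B$, and by Lemma~\ref{lemma:star.one.constraint} every optimal solution to some $P_m^{(k)}$ lies in $\mathcal{X}_0\cup\mathcal{X}_B$; so I would compute both $X_0^{(k_0)}$ and $X_B^{(k_B)}$ explicitly (one more $O(m\log T)$ Gaussian elimination each), compare their objective values $\sum_{i=0}^{m-1}x_{k-i}$, and output the better one. The total running time is dominated by the two binary searches, giving $O(m\log(T)\log(m\log(T)))$.

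The main obstacle, and the step requiring the most care, is the range bound $K=O(m\log T)$: one must verify that a genuinely feasible strategy cannot have a super-logarithmic number of steps, which requires a clean lower bound on how fast $\sum x_i$ grows along the recurrence. For the aggressive family $\mathcal{X}_0$ this follows from the explicit closed form for $z_i$ in terms of $\zeta_1,\zeta_2$ (Appendix), and for $\mathcal{X}_B$ from the fact that $X_B^{(k)}$ is a positive scaling of $X_0^{(k)}$ with scaling factor at most $1$; the remaining subtlety is ruling out pathological small-$\rho$ behavior, which is handled by the standing assumption $\rho\ge\rho_m^*$ ensuring $\zeta_1>1$. A secondary point worth stating is that all the arithmetic is on reals, so the claimed bound is in the real-RAM model where each field operation is unit cost — this should be flagged explicitly.
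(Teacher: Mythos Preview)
Your high-level approach matches the paper's: bound the range of $k$ by $K=O(m\log T)$, binary-search using the monotone feasibility predicates from Theorem~\ref{theorem:star.optimal}, and solve each probe in $O(k)$ via the sparse Gaussian elimination. Two points of divergence are worth noting.

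First, your range-bound argument has a gap. The growth rate you quote, $\zeta_1^{k/m}$, does not yield $K=O(m\log T)$: with the exponent $k/m$ and the smaller root $\zeta_1$ (which tends to $1$ as $\rho\to\infty$) you would get a bound that is too weak and $\rho$-dependent. More importantly, for $m\geq 3$ the strategy $X_0^{(k)}$ is \emph{not} a prefix of the aggressive strategy $Z$ (the boundary conditions $(E_j)$ differ from the aggressive initial conditions), so the closed form for $z_i$ in the Appendix does not directly control $X_0^{(k)}$. The paper sidesteps this entirely: it exhibits the scaled geometric strategy with base $b=m/(m-1)$ as a concrete feasible point of some $P_m^{(k_G)}$ with $k_G=O(\log_b T)=O(m\log T)$; feasibility of that point immediately gives $k_B\le k_G$, no growth analysis needed.

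Second, the paper does only one binary search rather than two. Lemma~\ref{lemma:theorem.part.two} (in the Appendix) shows $k_0\in\{k_B-1,k_B\}$, so once $k_B$ is found, $k_0$ is determined by checking at most one extra index. Your two binary searches are fine asymptotically but unnecessary; and your justification for the $k_B$ search range (``scaling factor at most $1$'') is really the statement $k_B\le k_0+1$ in disguise, which you should state explicitly rather than leave implicit.
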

\noindent \textit{Proof sketch.} The scaled geometric strategy with base $b=\frac{m}{m-1}$ is a feasible point for a certain $P_m^{(k_G)}$, with $k_G=O(\log_{b}(T))=O(m\log(T))$. This means that $X_B^{(k_G)}$ is feasible, and so $k_B\leq k_G$ gives us an upper bound. We can use binary search to find $k_B$, solving \eqref{equation:the.matrix} at each step at a cost of $O(k_G)$. We know that $k_0$ is either $k_B$ or $k_B-1$, so all that remains is to compare the two strategies, which gives us a total complexity of $O(m\log(T)\log(m\log(T)))$.

\section{Maximum Clearance in a Network}
\label{sec:network}

In this section we study the setting in which the environment is a {\em network}, represented by an undirected, 
edge-weighted graph $Q=(V,E)$, with a vertex $O$ designated as the root. Every edge has a non-negative {\em length} which represents the distance of the vertices incident to the edge. The target can hide anywhere along an edge, which means that the search strategy must be a traversal of all edges in the graph. We can think of the network 
$Q$ as being endowed with Lebesgue measure corresponding to the length. This allows as to define, for a given subset $A$ of the network, its measure $l(A)$. Informally, $l(A)$ is the total length of all edges (partial or not) that belong in $A$. Given a strategy $S$ and a target $t$, the cost $d(S,t)$ and the distance $d(t)$ are well defined, and so is the competitive ratio according to~\eqref{eq:comp.ratio}. We will denote by $Q[r]$ the subnetwork that consists of all points in $Q$ within distance at most $r$ from $O$.

The exact competitive ratio of searching in a network is not known, and there are only
$O(1)$-approximations~\cite{koutsoupias:fixed, ANGELOPOULOS2020781} of the optimal competitive ratio.  
For this reason, as explained in the introduction, we interpret {\sc MaxClear(R,T)} as a maximum clearance strategy with budget $T$ that is an {\em $R$-approximation} of the optimal competitive ratio.
The known approximations use searching based on iterative deepening, e.g. strategy \china, which in each round $i$, searches $Q[r^i]$ using a Chinese Postman Tour (CPT)~\cite{edmonds1973matching} of $Q[r^i]$, for some suitably chosen value of $r$. 

We could apply a similar heuristic to the problem of Maximum Clearance. However, searching using a CPT of $Q[r^i]$
is wasteful, since we repeatedly search parts of the network that have been explored in rounds
$1 \ldots i-1$. Instead, we rely on heuristics for the {\em Rural Postman Problem}~\cite{frederickson1978approximation}. In this problem, given an edge-weighted network $Q=(V,E)$, and a subset $E_{\text{req}} \subseteq E$ of {\em required} edges, the objective is to find a minimum-cost traversal of all edges in 
$E_{\text{req}}$in $Q$; we call this tour RPT for brevity.
Unlike the Chinese Postman Problem (CPP), finding an RPT is NP-hard. The best known 
approximation ratio is 1.5~\cite{frederickson1978approximation}, but several heuristics have been proposed, e.g.~\cite{corberan2010recent},~\cite{hertz1999improvement}.

We thus propose the following strategy, which we call \rural. For each round $i \geq 1$, let $R_{i-1} = Q[r^i]\setminus Q[r^{i-1}]$ denote the part of the network that the searcher has not yet explored in the beginning of round $i$ (and needs to be explored). Compute both tours 
\cpt$(Q[r^i])$ and \rpt$(Q[r^i])$, the latter with required set of edges the edge set of $R_{i-1}$ (using the 1.5-approximation algorithm),  and choose the tour of minimum cost among them. This continues until the time budget $T$ is exhausted.  It is very hard to argue from a theoretical standpoint that the use of RPT yields an improvement on the competitive ratio; nevertheless, the experimental evaluation shows that this is indeed beneficial to both competitiveness and clearance. Since \rural is at least as good as a strategy that is purely based on CPTs, we can easily show the following, which is proven analogously to the randomized strategies of~\cite{ANGELOPOULOS2020781}.

\begin{proposition}
For every $r>1$, \rural is a $\frac{r^2}{r-1}$-approximation of the optimal competitive ratio. 
In particular, for $r=2$, it is a 4-approximation.
\label{prop:rural}
\end{proposition}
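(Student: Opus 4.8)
The plan is to bound the competitive ratio of \rural against an optimal competitive strategy by a direct amortized/geometric-series argument, exactly in the spirit of the classical doubling analysis, but being careful about what "the optimal competitive ratio" means here. First I would fix the optimal competitive ratio $R^*$ of searching $Q$, and recall the standard fact that if a target hides at distance $d(t)=d$ from $O$, then any strategy that clears all of $Q[d]$ by the time it has traveled total distance at most $R^* \cdot d$ is $R^*$-competitive for that target; conversely, since $R^*$ is optimal, for every $d$ there is no cheaper way to guarantee clearing $Q[d]$ than spending $\Theta(R^* d)$, and in particular the cost of any tour of $Q[d]$ (Chinese Postman or otherwise) that starts and ends at $O$ is at most $R^* \cdot d$ — this is the only structural property of CPT we need, and it holds because \china itself is a bounded approximation, so a CPT of $Q[d]$ has length at most a constant times $d$; more simply, the CPT of $Q[d]$ is a lower bound witness for the cost any strategy pays to finish exploring $Q[d]$.

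Next I would carry out the geometric bookkeeping. Consider a target $t$ with $d(t)=d$, and let $j$ be the round in which \rural first fully explores $Q[r^j]\supseteq Q[d]$, i.e.\ the smallest $j$ with $r^j \ge d$; then $r^{j-1} < d$, so $r^j < r d$. In round $i$, \rural traverses a tour of cost at most the cost of \cpt$(Q[r^i])$, and the key observation is that the total distance traveled by \rural up to and including the end of round $j$ is at most $\sum_{i=1}^{j} \mathrm{cost}(\cpt(Q[r^i]))$. Using that $\mathrm{cost}(\cpt(Q[r^i])) \le R^* \cdot r^i$ — or, if we want to stay self-contained and avoid invoking $R^*$ for the CPT bound, simply that the CPT cost is monotone in the radius and hence bounded by the cost incurred by any competitive strategy on $Q[r^i]$ — the sum telescopes as a geometric series: $\sum_{i=1}^{j} r^i \le r^j \cdot \frac{1}{1-1/r} = \frac{r^{j+1}}{r-1}$. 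Combining with $r^j < rd$ gives total cost at most $\frac{r^2}{r-1}\cdot d$ times the per-round normalization, and dividing by $d(t)=d$ yields the claimed $\frac{r^2}{r-1}$-approximation factor; plugging $r=2$ gives $4$. The target is found no later than the end of the round in which its enclosing ball is cleared, which is exactly round $j$, so the bound applies to $d(\text{\rural},t)$.

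The only genuinely delicate point — and the step I expect to be the main obstacle — is making precise the sense in which $\mathrm{cost}(\cpt(Q[r^i]))$ is comparable to $R^* \cdot r^i$ rather than just to $r^i$ times the length of $Q[r^i]$, which could be much larger than $r^i$. This is resolved by noting that \rural is \emph{at least as good as} \china on each round (it takes the minimum of the CPT and RPT tours, and the union of rounds up to $j$ of the pure-CPT strategy is itself a valid search of $Q[r^j]$), so its competitive performance is dominated by that of \china; and \china is already known to be a $\frac{r^2}{r-1}$-approximation by precisely the argument above, a fact established for the randomized/deterministic doubling strategies of~\cite{ANGELOPOULOS2020781}. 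Hence the cleanest write-up is: (i) observe \rural dominates the CPT-only doubling strategy round by round in total distance traveled; (ii) quote or re-derive the $\frac{r^2}{r-1}$ bound for the CPT-only strategy via the geometric sum above; (iii) conclude the same bound transfers to \rural, and specialize to $r=2$. The rest is the routine geometric-series estimate and does not need to be spelled out in detail.
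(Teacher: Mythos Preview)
Your approach is essentially the paper's: bound each round by the CPT cost, use $\ell_i \le R^* r^i$, sum the geometric series, and divide by the target distance. The one place where you go astray is in treating the inequality $\mathrm{cost}(\cpt(Q[r^i])) \le R^* \cdot r^i$ as ``the only genuinely delicate point'' and hedging about how to justify it; the paper dispatches it in one line: any search strategy (in particular the optimal one) must traverse all of $Q[r^i]$ to guarantee finding a target there, so it pays at least $\ell_i$, while the farthest such target is at distance $r^i$, hence $R^* \ge \ell_i / r^i$. Your first-paragraph phrasing ``the cost of \emph{any} tour of $Q[d]$ \ldots is at most $R^*\cdot d$'' is wrong as written (only the \emph{optimal} CPT satisfies this), and you do not need to route through~\cite{ANGELOPOULOS2020781} or the \rural-dominates-\china observation for this step --- the lower-bound direction of $R^*$ does all the work.
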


\begin{proof}
Let $\ell_i$ denote the length of the optimal CPT in $G[r^i]$. The competitive ratio of the strategy is at most
\[
\sup_{j \geq 1} \frac{\sum_{i=1}^j \ell_i}{r^{j-1}}.
\] 
Let $R^*$ denote the optimal (deterministic) competitive ratio. Then it holds that for every $i$,
$
R^*\geq \frac{l_i}{r^i}.
$ 
This is because any deterministic strategy needs time at least $l_i$ to traverse $G[r^i]$, and every point in
$G[r^i]$ is at distance at most $b^i$ from $O$. Combining the above inequalities, we obtain that the competitive ratio of \rural is at most
\[
R^* \cdot \sup_{j \geq 1} \frac{\sum_{i=1}^j r^i}{r^{j-1}} \leq R^* \cdot \sup_{j\geq 1} 
\frac{r^{j+1}-1}{r^{j-1}(r-1)} \leq R^* \cdot \frac{r^2}{r-1}.
\] 
The last inequality implies that the best approximation factor is achieved for $r=2$, and is equal to 4. 
\end{proof}

Note that \rural is, by its statement, extendable, since it will always proceed to search beyond the boundary of round $i$ in round $i+1$. Moreover, \rural is applicable to unbounded networks as well, provided that for any $D$, the number of points in the network at distance $D$ from $O$ is bounded by a constant. This is necessary for the competitive ratio to be bounded~\cite{ANGELOPOULOS2020781}.

\section{Experimental evaluation}

\subsection{$m$-ray star}
In this section we evaluate the performance of our optimal strategy against two other candidate strategies.
The first candidate strategy is the scaled geometric strategy, with base $\zeta_2, $which we consider as the baseline for this problem prior to this work. The second candidate strategy is the mixed aggressive strategy. 
Recall that we defined both strategies at the beginning of the star section, and that all these strategies are defined for the same competitive ratio $R$.

Figure~\ref{fig:star.gain} depicts the relative performance of the optimal strategy versus the performance of the other two strategies, for $m=4$, and optimal competitive ratio $R=R^*_4$, for a range of budget values $T\in[10,10^{15}]$. Once the budget $T$ becomes meaningfully large (i.e, $T\geq 50$), the optimal strategy dominates the other two, outperforming both by more than $20\%$. In contrast, the mixed aggressive strategy offers little improvement over the scaled geometric strategy for every reasonably large value of $T$.

\begin{figure}[htb!]
\centering
\includegraphics[width=0.65\linewidth]{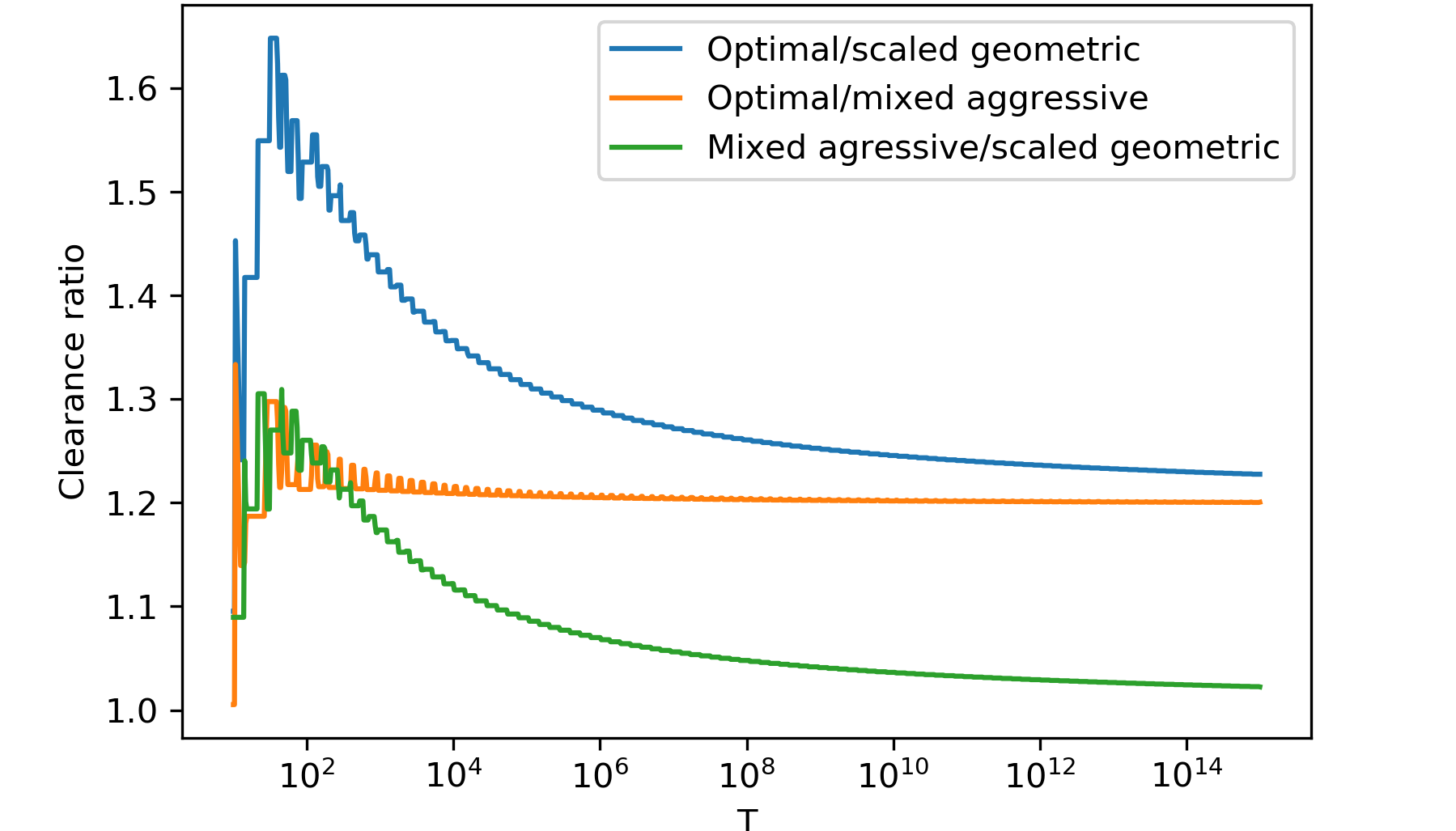}
\caption{Clearance ratios for $m=4$ and $R=R_4^*$, as function of $T$.}
\label{fig:star.gain}
\end{figure}

\begin{figure}[htb!]
\centering
\includegraphics[width=0.65\linewidth]{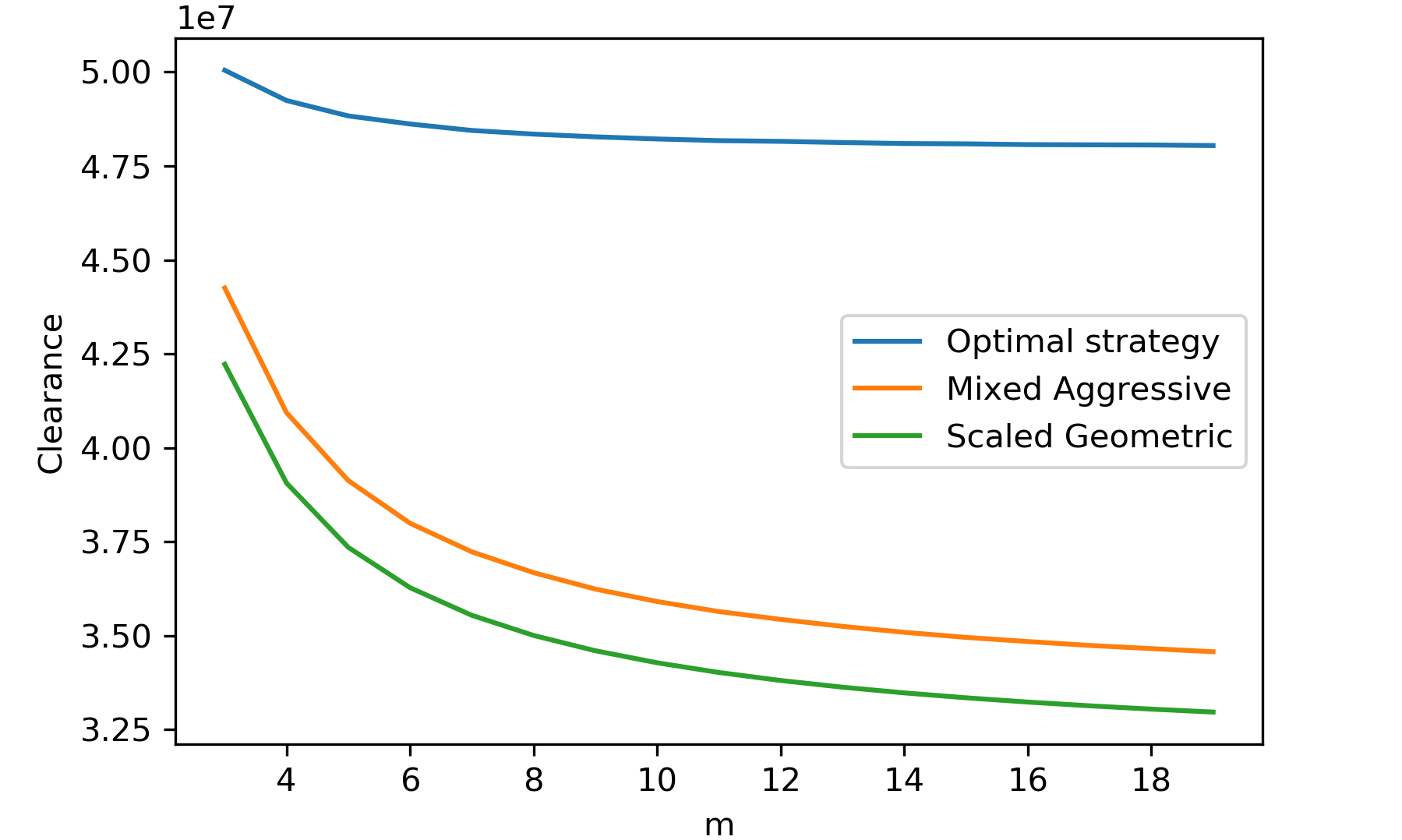}
\caption{Clearance as function of $m$, for $T=10^8$ and $R=R_m^*$.}
\label{fig:star.m}
\end{figure}
Figure \ref{fig:star.m} depicts the influence of the parameter $m$ on the clearance achieved by the three strategies, for a relatively large value of $T=10^8$. For each value of $m$ in $[3,20]$, we require that the strategies have optimal competitive ratio $R=R_m^*$.  
We observe that as $m$ increases, each strategies' clearance decreases, however the optimal strategy is far less impacted. This means that as $m$ increases, the relative performance advantage for the optimal strategy also increases, in comparison to the other two.

Figure \ref{fig:star.rho} depicts the strategies' performance for $m=4$, and $T=10^4$, as a function of the competitive ratio $R\geq R_4^*$. In particular, we consider $R\in [R_4^*, 3R_4^*]$. We observe that as $R$ increases, the mixed aggressive strategy is practically indistinguishable from the scaled geometric.
The optimal strategy has a clear advantage over both strategies for all values of $R$ in that range.

\begin{figure}[htb!]
\centering
\includegraphics[width=0.65\linewidth]{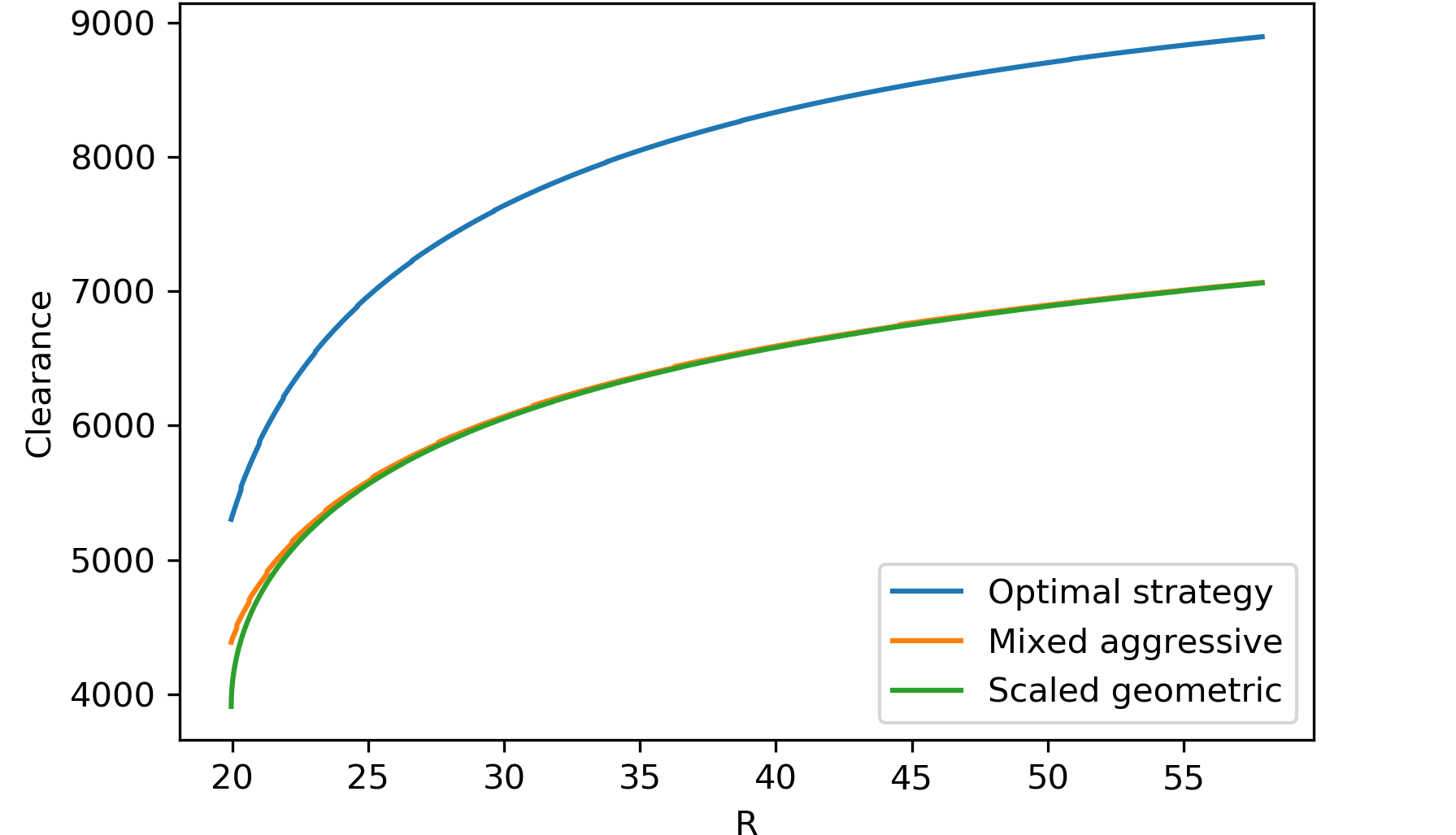}
\caption{Clearance as function of $R$, for $m=4$ and $T=10^4$.}
\label{fig:star.rho}
\end{figure}

More experimental results can be found in the Appendix. 

\subsection{Networks}

We tested the performance of \rural against the performance of \china. Recall that the former searches 
the network $Q[r^i]$ iteratively using the best among the two tours \cpt$(Q[r^i])$ and \rpt$(Q[r^i])$, whereas the latter uses only the tour \cpt($Q[r^i])$.  We found $r=2$ to be the value that optimizes the competitive ratio in practice, as predicted also by Proposition~\ref{prop:rural}, so we chose this value for our experiments. 

We used networks obtained from the online library {\em Transportation Network Test Problems}~\cite{transportation:2002}, after making them undirected. This is a set of benchmarks that is very frequently used in the assessment of transportation network algorithms (see e.g.~\cite{jahn2005system}). 
The size of the networks we chose was limited by the $O(n^3)$ time-complexity of \china and \rural ($n$ is the number of vertices). For \rpt \  we used the algorithm due to~\cite{frederickson1978approximation}.

Figures \ref{fig:berlin} and \ref{fig:chicago} depict the clearance achieved by each heuristic, as function of the budget $T$, for a root chosen uniformly at random. The first network is a European city with no obvious grid structure, whereas the second is an American grid-like city. 
We observe that the clearance of \china exhibits plateaus, which we expect must occur early in each round, since 
\cpt \ must then traverse previously cleared ground. We also note that these plateaus become rapidly larger as the number of rounds increases, as expected. In contrast, \rural entirely avoids this problem, and 
performs significantly better, especially for large time budget.

\begin{figure}[htb!]
\centering
\includegraphics[width=0.65\linewidth]{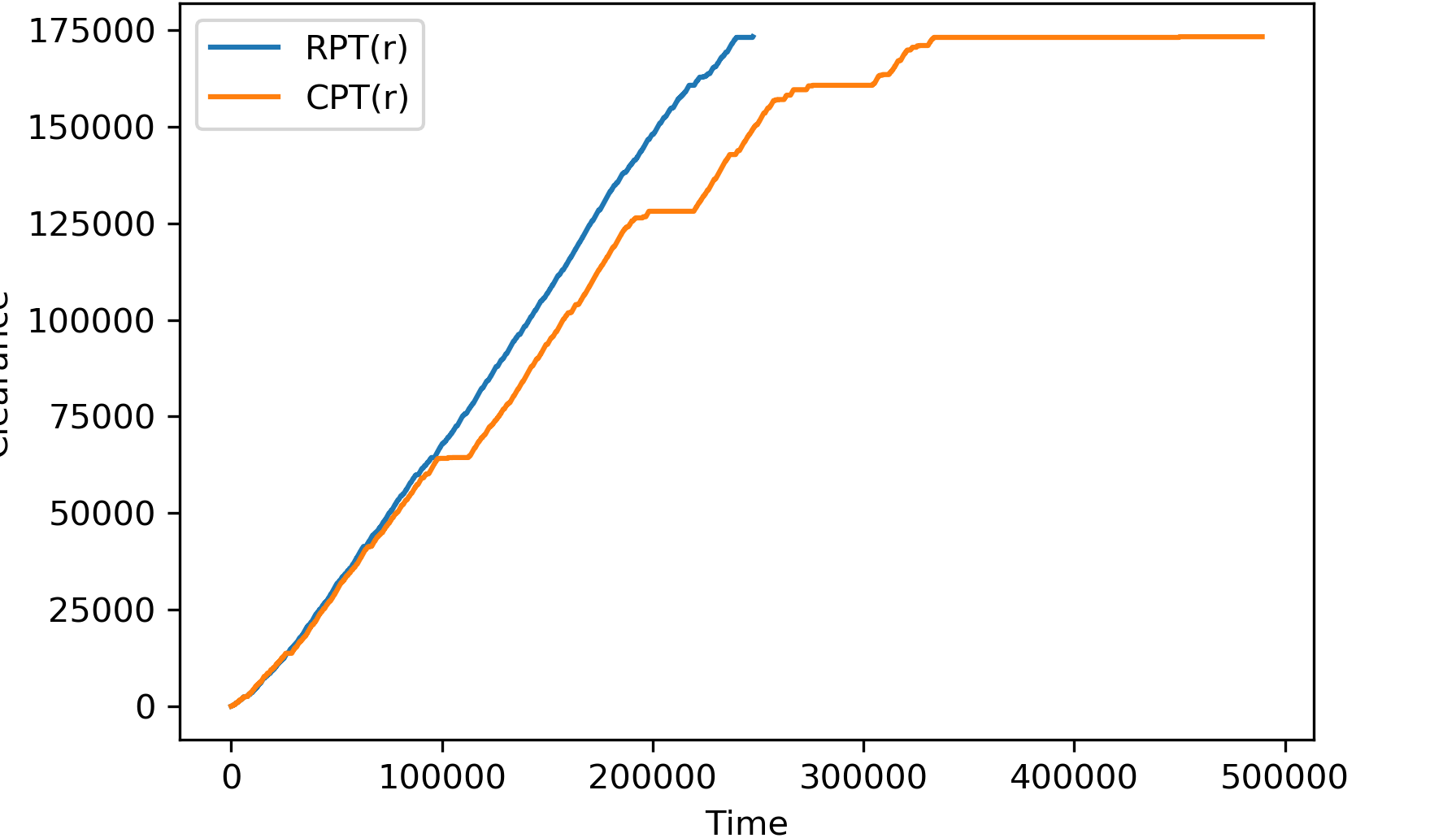}
\caption{Comparison of the two strategies on the Berlin network (633 nodes, 1042 edges).}
\label{fig:berlin}
\end{figure}

\begin{figure}[htb!]
\centering
\includegraphics[width=0.65\linewidth]{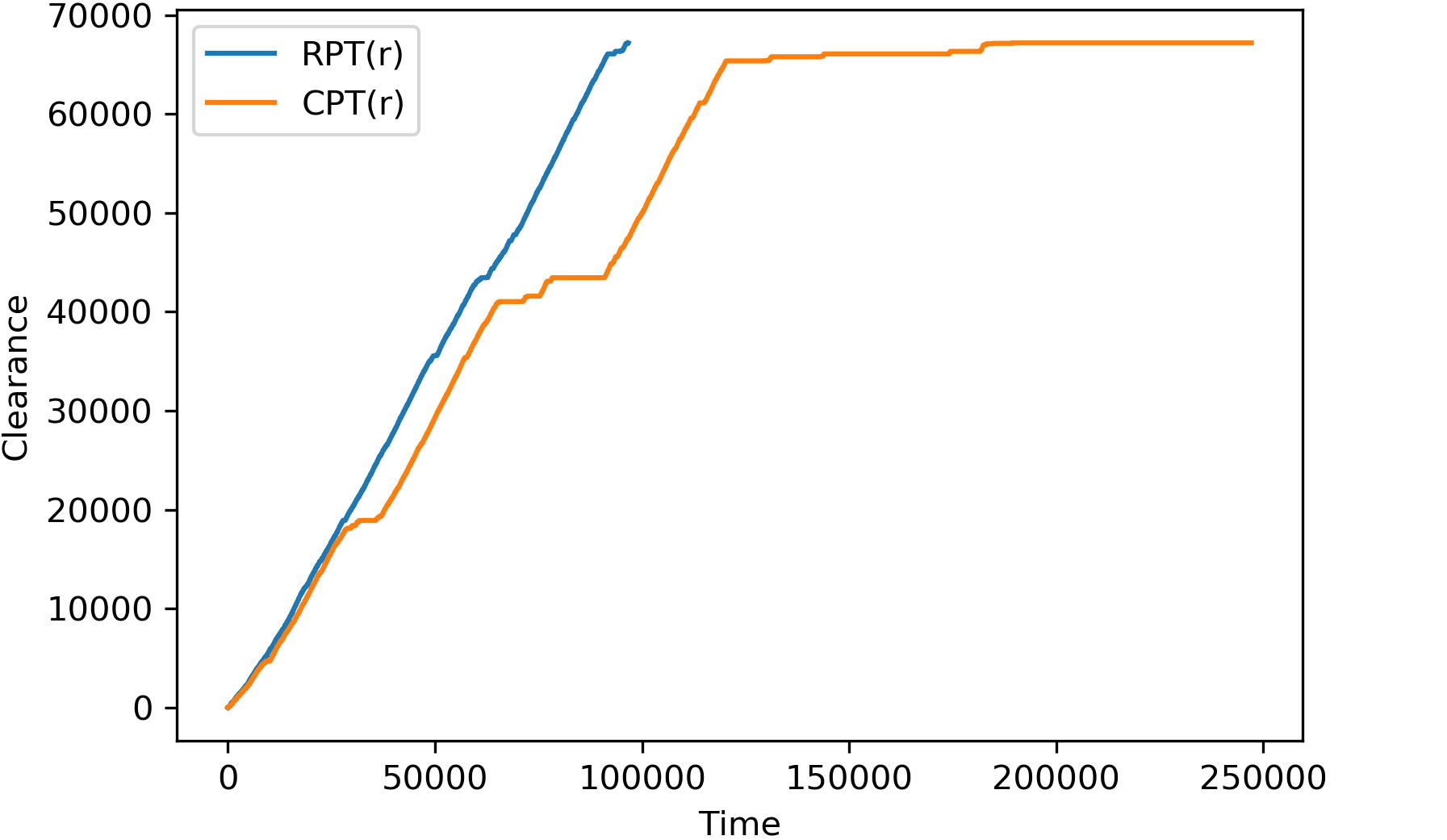}
\caption{Comparison of the two strategies on the Chicago network (933 nodes, 1475 edges).}
\label{fig:chicago}
\end{figure}

Figure \ref{fig:compare} depicts the ratio of the average clearance of \rural over the average clearance of \china  as a function of the time budget $T$, calculated over 10 random runs of each algorithm on the Berlin network (each run with a root chosen uniformly at random).
We observe  that \rural consistently outperforms \china, by at least 8\% for most values of $T$, and up to 16\% when $T$ is comparable to the total length of all edges in the graph (173299). 
At $T=250000$, in most runs, \rural has cleared the entire network.

\begin{figure}[htb!]
\centering
\includegraphics[width=0.65\linewidth]{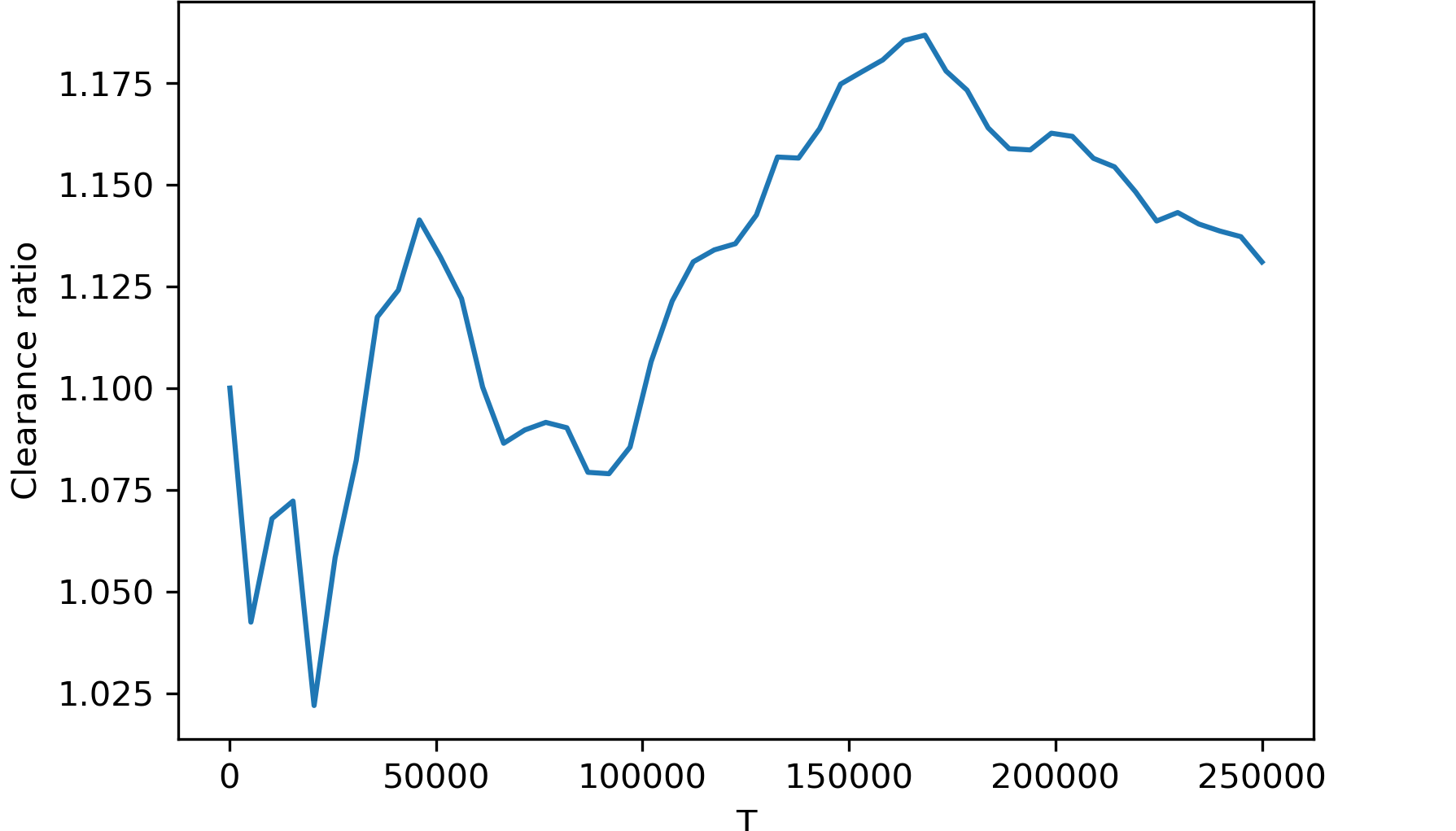}
\caption{Clearance ratio of \rural versus \china, for 10 randomly chosen roots, for the Berlin network.}
\label{fig:compare}
\end{figure}

The average competitive ratios for these runs are $160$ for \china and $132$ for \rural, demonstrating a clear advantage.
More experimental results can be found in the Appendix.

\section{Extensions and conclusions}
\label{sec:extensions}

One can define a problem ``dual'' to Maximum Clearance, which we call Earliest Clearance. Here, we are given a bound $L$ on the desired ground that we would like the searcher to clear, a required competitive ratio $R$, and the objective is to design an $R$-competitive strategy which minimizes the time to attain clearance $L$. The techniques we use for Maximum Clearance can also apply to this problem, in fact Earliest Clearance is a simpler variant; e.g., for star search, optimal strategies suffice to saturate all but one constraint, instead of all but two (see Appendix).

Maximum Clearance on a star has connections to the problem of scheduling {\em contract algorithms with end guarantees}~\cite{end-guarantees}. More precisely, our LP formulation has certain similarities with the formulation used in that work (see the LP $P_m$, on page 5496 in~\cite{end-guarantees}), and both works use the same general approach: first, a technique to solve the LP of index $k$, and then a procedure for finding the optimal index $k^*$. However, there are certain significant differences. First, our formulations allow for any competitive ratio $\rho \geq \rho_m^*$, whereas~\cite{end-guarantees} only works for what is the equivalent of $\rho_m^*$. Related to this, the solution given in that work is very much tied to the optimal performance ratios, and the same holds for the optimality proof which is quite involved and does not extend in an obvious way to any $\rho$. The theoretical worst-case runtime of the algorithm in~\cite{end-guarantees} is $O(m^2 \log L)$, whereas the runtime of our algorithm has only an $O(m\log m)$ dependency on $m$, as guaranteed by 
Theorem~\ref{thm:complexity.star}. Given the conceptual similarities between the two problems, our techniques can be readily applicable to the scheduling problem as well, and provide the improvements we describe above. 

For clearance in networks, we demonstrated that RPT-based heuristics can have a significant impact on performance, in comparison to CPT-based heuristics. The RPT heuristic we implemented is from~\cite{frederickson1978approximation}, but more complex and sophisticated heuristics are known~\cite{corberan2010recent}. It would be interesting to further explore the impact of such heuristics in competitive search.

\bibliographystyle{plain}
\bibliography{targets,anytime,bijective,online}

\bigskip

\appendix
{\Large \bf Appendix}

\section{Formulating the LPs, and extendability}

We introduce the shorthand notation $S_j^{(X)} = \sum_{i=1}^j x_i$. When it is obvious which strategy we are referring to, we will simply use the notation $S_j$.

For the line and star environments, it is clear that we can restrict ourselves to strategies where each step has positive length, and  which go strictly further at each visit to a given ray. These conditions are implicit is our LP formulation.

\subsubsection{Competitiveness constraints}
It is known that the worst-case competitive ratio corresponds to targets placed immediately after the turn points, and thus it suffices to enforce $R$-competitiveness in those locations. So the total distance traveled by the searcher upon returning to a turn point for the first time must not exceed $R$ times the distance from the origin to this turn point. Using the notations we introduced at the beginning of the star section, we obtain:

\[ 2 S_{j-1} + x_{\bar{\jmath}} \leq (1+2\rho) x_{\bar{\jmath}} \Leftrightarrow S_{j-1}\leq \rho x_{\bar{\jmath}}, \]
which yields the constraint $(C_j)$.

When searching a new ray for the first time, say on step $i$, because we have assumed that the target is located at distance at least $1$ from the origin, we obtain the constraint $2S_{i-1}+x_{i}\leq 1+2\rho\Leftrightarrow S_{i-1}\leq \rho$. Obviously we only need to keep the last such constraint, corresponding to step $j_0$, which is the dominant constraint. Also, any competitiveness constraint before the step $j_0$ is superfluous, because the competitive factor is necessarily worse for points at the same distance but on ray $r_{j_0}$. We thus showed how to obtain constraint $(C_0)$. Constraint $(B)$ clearly reflects the budget requirement.

It remains to explain the extendability constraints. We do so in detail in what follows.

\subsubsection{Extendability constraints}

We begin with the line. 
As discussed in the main paper, in order to enforce the extendability property we consider targets placed just beyond the turn point at $x_{k-1}$, and just beyond the end point at $x_k$. For the end point $x_k$, this property is satisfied by the strategy: the searcher can visit a point hiding infinitesimally beyond $x_k$ at an infinitesimally small aditional cost, and without changing the competitive ratio.
 For the turn point at $x_{k-1}$, the extension of our strategy which gets there in the least time turns around at $x_k$, goes through $O$ and reaches the turn point at $x_{k-1}$,  and thus we get the following constraint:

\[ 2 S_k + x_{k-1} \leq (1+2\rho) x_{k-1} \Leftrightarrow S_k\leq \rho x_{k-1}. \]

For the star, the situation is analogous. 
For the end point $x_k$, as for the line, the property is trivially satisfied; for the other points, by considering extensions which turn around at $x_k$ to explore each other ray, we get the family of constraints

\[ 2 S_k + x_{l_r} \leq (1+2\rho) x_{l_r} \Leftrightarrow S_k\leq \rho x_{l_r}, \quad r\neq r_k. \]

In principle, we could apply this concepts in general environments, and we 
give the following formal definition:
\begin{definition}
Let $S$ be a finite search strategy on an environment $E$.
We denote $S(E)$ the part of the environment which is explored by $S$. We say that $S$ is $R$-extendable if for any point $P$ along the boundary of $S(E)$, there exist $S_P$ a strategy which extends $S$ (i.e. $S$ is a prefix of $S_P$) and $V_P$ a neighborhood of $P$ such that $V_P \subset S_P(E)$ and $\comp(S_P) \leq R$.
\end{definition}
In other words, an {\em $R$-extendable} strategy is an $R$-competitive strategy which can be extended to explore infinitesimally farther beyond any point on the boundary of the area explored up to time $T$, while keeping its competitive ratio below $R$.
Any prefix of an infinite strategy with bounded competitive ratio $R$ is $R$-extendable; in particular prefixes of the geometric and aggressive strategies are extendable.

\section{Computing the aggressive strategy on the star}

In this section we show that the aggressive strategy on the $m$-ray star is well-defined for any competitive ratio $R\leq R^*_m=1+2m^m/(m-1)^{m-1}$, and we give an explicit formula for it.

This {\em aggressive} strategy is a cyclic strategy $Z=(z_i)$ which successively maximizes the length searched at each step, within the competitive constraints. 
\cite{jaillet:online} show that this problem is well-defined, and that there is a strategy which satisfies the  linear recurrence relation
\[ z_{i+m}=\rho(z_{i+1}-z_i),\]
with $R=1+2\rho$. They give a ``canonical'' solution for optimal $R=R_m^*$, which we prove is the only solution to this recurrence; we also provide a formula for $R>R^*_m$ and prove its uniqueness.

As noted by \cite{jaillet:online}, there are two initial conditions that we can use to help determine the strategy:
\[ \sum_{i=1}^{m-1} z_i = \rho, \text{ and } \sum_{i=1}^m z_i = \rho z_1. \]
These correspond to the first two constraints which for finite strategies we denote $(C_0)$ and $(C_1)$, and all other constraints serve in the recurrence relationship, obtained by subtracting $(C_i)$ from $(C_{i+1})$. To our knowledge, no previous work has give an expression for $Z$, for general $R$ and $m\geq 3$, and in this section we show how to derive it.

The characteristic polynomial of the recurrence is $\chi(t)=t^m-\rho t+\rho$. If $\chi$ has a root $\zeta$ of order $n$ then $(a_0+a_1 i + \dots + a_{n-1} i^{n-1})\zeta^i$ is a solution to the recurrence, for any $(a_0,\dots,a_{n-1})\in \mathbb{C}^n$, and any solution is a linear combination of such terms.

By Descartes' rule of signs, $\chi$ has either two positive real roots (counting multiplicity) or none. Denote $r_m^*=\frac{m}{m-1}$. For $\rho\geq \rho_m^*=\frac{m^m}{(m-1)^{m-1}}$ we have $\chi(r_m^*)\leq 0$, so $\chi$ always has exactly two positive real roots, which we denote $\zeta_1$ and $\zeta_2$. For $\rho=\rho_m^*$, $\chi$ has a double root at $r^*_m$.

First we study the case when $\rho=\rho_m^*$. We can factor $\chi$:
\[\chi(t)=(t-r_m^*)^2(t^{m-2}+2r_m^*t^{m-3}+\cdots+(m-1){r_m^*}^{m-2})\]
\[ = \frac{t^{m-2}}{r_m^*}(t-r_m^*)^2\phi_m'\left(\frac{r^*_m}{t}\right), \]
where $\phi_m(t) = \frac{t^m-1}{t-1}$. $\phi_m$ has $m-1$ distinct roots on the unit circle, so all roots of $\phi'$ are distinct, and inside the convex hull of the roots of $\phi$, therefore of norm $<1$. This means that all roots of $\chi$ which are not $r^*_m$ are of norm $>r^*_m$, and as discussed above they must be negative or complex. Any meaningful solution to the recurrence formula must be positive, therefore these other roots cannot contribute to the solution. In conclusion, using the initialization constraints we obtain the following formula for $Z$:
\[ z_i = \frac{m+i-1}{m-1}\left(\frac{m}{m-1}\right)^i. \]

Now for the case when $\rho>\rho_m^*$. For $x\in ]\zeta_1,\zeta_2[$, we have $\chi(x)<0 \Leftrightarrow \rho x> \rho + x^m$,  and so Rouché's theorem tells us that there is exactly one root of norm $<|\zeta_2|$, which we know to be $\zeta_1$. Suppose that $\zeta_2 e^{i\theta}$ is a root of $\chi$. Then
\[ \zeta_2^m e^{mi\theta}-\rho\zeta_2 e^{i\theta} +\rho=0 \text{ and } \zeta_2^m = \rho(\zeta_2-1) \]
\[\Rightarrow \rho(\zeta_2-1)e^{im\theta} = \rho(\zeta_2 e^{i\theta}-1)
\Leftrightarrow \zeta_2 = \frac{e^{im\theta}-1}{e^{im\theta}-e^{i\theta}}\in \mathbb{R}\]
\[ \Rightarrow \frac{e^{im\theta}-1}{e^{im\theta}-e^{i\theta}}=\overline{\frac{e^{im\theta}-1}{e^{im\theta}-e^{i\theta}}}
=\frac{e^{im\theta}-1}{e^{im\theta}-e^{i\theta}}e^{i\theta} \Rightarrow e^{i\theta}=1. \]

This shows that $\zeta_2$ is the only root of $\chi$ of that norm, so all other roots are of norm $>|\zeta_2|$, and being negative or complex they cannot contribute to the solution. In conclusion, using the initialization constraints we obtain the following formula for $Z$:

\[ z_i = (1+\alpha) \zeta_2^i -\alpha \zeta_1^i , \text{ with } \alpha = \frac{\zeta_1(\zeta_2-1)}{\zeta_2-\zeta_1}. \]

Computing $\zeta_i$ can be done most efficiently with binary search using $1\leq \zeta_1 \leq r_m^* \leq \zeta_2 \leq \rho^{\frac{1}{m-1}}$.


\section{Cyclicality and monotonicity in $L_m^{(k)}$ }
\label{app:cyclic}

In this section we show that any optimal solution to $L_m^{(k)}$ corresponds to a cyclic and monotone strategy. The basic steps of the proof are similar to those found in~\cite{jaillet:online}.

We begin with a tightness lemma similar to Lemma~\ref{lemma:line.one.constraint}.
\begin{lemma}
In any optimal solution to $L_m^{(k)}$, at least one of $(C_0)$ and $(B)$ is tight. All other constraints $(C_j)$ and $(E_r)$ are tight.
\label{lemma:acyclic.star.one.constraint}
\end{lemma}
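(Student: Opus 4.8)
The plan is to mirror the proof of Lemma~\ref{lemma:line.one.constraint}: assume $X^*=(x_i^*,r_i)$ is optimal for $L_m^{(k)}$ but that some constraint required to be tight is loose (or that both $(C_0)$ and $(B)$ are loose), and in each case produce a feasible perturbation of strictly larger objective. I would use throughout that feasible strategies have strictly positive step lengths and go strictly farther at each revisit of a ray, so these implicit strict inequalities survive sufficiently small perturbations, and that $\rho\geq\rho_m^*=m^m/(m-1)^{m-1}>m$. I write $S_j=\sum_{i=1}^{j}x_i^*$.

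Two of the cases are easy. If both $(C_0)$ and $(B)$ are loose, then since every $(C_j)$ and every $(E_r)$ is homogeneous, scaling $X^*$ to $(\alpha x_i^*)$ with $\alpha>1$ close to $1$ stays feasible and strictly increases the objective, a contradiction. If some $(C_j)$ with $j\in[j_0+1,k]$ is loose (which forces $j_0<k$, so $x_k$ occurs in none of $(C_0),(C_{j'})$), I would decrease $x_{\bar\jmath}^*$ by a small $\delta>0$ and increase $x_k^*$ by $\delta$: the variable $x_{\bar\jmath}$ is a non-last visit of its ray, hence absent from the objective and the right-hand-side variable of no constraint except $(C_j)$ itself, which is loose; its decrease only relaxes the left-hand sides of the other competitiveness and budget constraints, the compensating increase of $x_k$ keeps the shared left-hand side $S_k=\sum_i x_i$ of every $(E_r)$ unchanged, and since $x_{\bar\jmath}$ has budget-coefficient $2$ against $x_k$'s coefficient $1$, constraint $(B)$ is relaxed; meanwhile the objective grows by $\delta$ through the term $x_k=x_{l_{r_k}}$. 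Contradiction.

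The main obstacle is the last case, that some $(E_r)$ with $r\neq r_k$ is loose. Here the line-style move ``decrease $x_{l_r}$ by $\delta$, increase $x_k$ by $2\delta$'' fails, because the $m-1$ extendability constraints all share the left-hand side $S_k$, so raising $S_k$ violates every \emph{tight} $(E_{r'})$ --- an issue absent on the line, where there is a single extendability constraint. My plan is a coordinated perturbation. Let $F$ be the set of rays $r'\neq r,r_k$ with $(E_{r'})$ tight, so $|F|\leq m-2<\rho$. If $F=\varnothing$ then all extendability constraints are loose, and one can simply push $x_k$ up (compensating, if $(B)$ is tight, with a small decrease of another last-visit length, whose extendability constraint is loose and so safe to touch). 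If $F\neq\varnothing$, decrease $x_{l_r}^*$ by $\delta$, increase $x_k^*$ by $\delta$, and increase $x_{l_{r'}}^*$ by $\delta$ for every $r'\in F$; then $S_k$ moves by $|F|\delta$ while the right-hand side $\rho x_{l_{r'}}$ of each tight $(E_{r'})$ moves by $\rho\delta$, so its slack grows at rate $(\rho-|F|)\delta>0$, the loose extendability constraints absorb the small change, and the objective gains $|F|\delta$.

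The delicate point --- where I expect the real work to be --- is that this perturbation disturbs $(B)$ and the competitiveness constraints $(C_j)$, whose prefix sums $S_{j-1}$ may absorb several of the shifted last-visit lengths; restoring their feasibility requires a compensating redistribution among the earlier, non-last-visit lengths. I would carry this out in tandem with the proof that an optimal solution is cyclic and monotone (Theorem~\ref{theorem:cyclic}), since cyclicity identifies the last-visit steps $\{l_r\}$ as precisely the last $m$ steps and thereby makes the interaction with the $(C_j)$ tractable; this coupling is exactly what is exploited in the analogous argument of~\cite{jaillet:online}, which the proof follows. Once every case is closed, $X^*$ must display the claimed tightness pattern.
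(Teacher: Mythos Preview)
Your handling of the scaling case and the loose-$(C_j)$ case matches the paper. The trouble is entirely in the $(E_r)$ case, where you have made the problem much harder than it is and have not actually closed it.

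The move you dismiss too quickly is the right one --- but with $\delta$, not $2\delta$. Decrease $x_{l_r}^*$ by $\delta$ and increase $x_k^*$ by $\delta$. Then $S_k$ is \emph{unchanged}, so every other extendability constraint $(E_{r'})$ is preserved exactly; the loose $(E_r)$ absorbs the $\rho\delta$ drop on its right-hand side. Since $l_r$ is a last visit, $x_{l_r}$ never appears as a right-hand-side variable $x_{\bar\jmath}$ in any $(C_j)$, so every competitiveness constraint has its left-hand side weakly decreased and its right-hand side unchanged. The budget left-hand side $2S_{k-1}+x_k$ drops by $\delta$. Hence the perturbed strategy is feasible. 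Its objective is \emph{unchanged} (you moved $\delta$ from one last-visit length to another), so it is also optimal. But now $(C_k)$ is strictly loose (its left side $S_{k-1}$ fell, its right side did not), or, if $j_0=k$ so that no $(C_j)$ exists, both $(C_0)$ and $(B)$ are strictly loose --- either way contradicting the tightness you have already established for optimal solutions. That is the paper's argument in full.

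Your coordinated perturbation over the set $F$ is unnecessary, and the ``delicate point'' you flag --- repairing $(B)$ and the $(C_j)$ afterwards --- is a real obstruction that you do not resolve. Worse, your proposed remedy of coupling with cyclicity is circular: in the paper, Theorem~\ref{theorem:cyclic} is proved \emph{using} this lemma (via Property~\ref{property:star.ordering}), so you cannot invoke cyclicity here. The key idea you are missing is that the perturbation need not improve the objective directly; it suffices to preserve optimality while violating a tightness condition already proved.
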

\begin{proof}
We extend the proof of Lemma \ref{lemma:line.one.constraint} to the case of the star. Suppose $X^*=(x^*_i)$ is an optimal solution to $L_m^{(k)}$, which does not satisfy the conditions of the lemma. Recall that there are implicit conditions in the formulation of $L_m^{(k)}$, namely $x_i>x_{\bar{\imath}}$.

If a constraint $(C_j)$ is not tight, then we can decrease $x^*_{\bar{\jmath}}$ by a small quantity $\delta$ and increase $x^*_k$ by $\delta$ in order to obtain a feasible solution with a higher objective value, which contradicts the optimality of $X^*$.

If $(C_0)$ and $(B)$ are both loose, then we can scale up $X^*$ by a factor $\alpha>1$, thus increasing the objective value, a contradiction.

Finally, if a constraint $(E_r)$ is loose, then decreasing $x_{l_r}$ and increasing $x_k$ by a small quantity $\delta$ creates a new feasible strategy which is also optimal, because it has the same objective. If $(C_k)$ exists, i.e. $j_0<k$, then constraint $(C_k)$ becomes loose, and if not, then constraints $(C_0)$ and $(B)$ become simultaneously loose; either case provides a contradiction to the above.
\end{proof}

The following property is very intuitive and will be needed to show cyclicality. Similar properties are very often useful in star search problems.

\begin{property}
Any optimal strategy visits, at each step, the ray which has been explored the least so far.
\label{property:star.ordering}
\end{property}
\begin{proof}
First, we prove, by way of contradiction, that any optimal strategy begins by visiting each ray once. Let $X^*$ be an optimal strategy, and recall that $j_0$ is the last step during which we explore a new ray.
Suppose that $X^*$ visits the same ray $r$ twice before step $j_0$, say at steps $i_1$ and $i_2$, with $i_2<j_0$. Then we could simply halve the size of $x_{i_1}$ and obtain a new feasible strategy $\hat{X}$ which has loose constraints: indeed, $x_{i_1}$ only shows up on the left-hand side of the inequalities in $L_m^{(k)}$, so all constraints are loosened. But by lemma \ref{lemma:acyclic.star.one.constraint} $\hat{X}$ cannot be optimal, and neither can $X^*$, which has the same objective value, a contradiction.

Now we look at the steps after $j_0$. From Lemma \ref{lemma:acyclic.star.one.constraint}, we get for any optimal strategy the set of equations $(S_{j-1}=\rho x_{\bar{\jmath}})$ and $S_k = \rho x_{l_r}, l_r\neq r_k$ (Recall the definition of $S$ that we gave in the first line of this Appendix). This makes it clear that $(x_{\bar{\jmath}})_{j_0\leq j\leq k}$ is an increasing series, and that the final steps on each ray are the last $m$ steps. This is precisely the statement of our lemma: at each step $i$, the length to which we had previously explored $r_i$ is increasing, or equivalently, at each step we visit the least explored ray. To see this more clearly, we give a proof by contradiction. Suppose there is a step $i_1$ where an optimal strategy $X^*$ visits a ray $r_1$ which has been explored more than the least explored ray $r_0$. $X^*$ can never return to visit $r_1$, because if it visits $r_0$ on step $i_0>i_1$, then $x^*_{\bar{\imath}_0}<x^*_{\bar{\imath}_1}$, a contradiction. But if $X^*$ never returns to ray $r_0$, then we have $x^*_{l_{r_0}}<x^*_{\bar{\imath}_1}$, a contradiction.
\end{proof}

Now we can move on to the main result.

\begin{theorem}
Any optimal solution to $L_m^{(k)}$ must be {\em monotone} and {\em cyclic}, that is $(x_i)$ must be increasing and $r_i = i \mod m$ up to a permutation.
\label{theorem:cyclic}
\end{theorem}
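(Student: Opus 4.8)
The plan is to prove monotonicity and cyclicity by leveraging Property~\ref{property:star.ordering} together with the tightness structure from Lemma~\ref{lemma:acyclic.star.one.constraint}. Property~\ref{property:star.ordering} already does most of the structural work: it tells us that an optimal strategy visits each ray once in its first $m$ steps, and that thereafter it always visits the least-explored ray. From this, the main task is to translate ``always visit the least-explored ray'' into the clean statements ``$(x_i)$ is increasing'' and ``$r_i = i \bmod m$ up to a permutation.''

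First I would establish cyclicity. By Property~\ref{property:star.ordering}, the first $m$ steps visit the $m$ rays in some order; relabel the rays by a permutation so that $r_i = i$ for $i \in [1,m-1]$ and $r_m = 0$, i.e.\ $r_i = i \bmod m$ for $i \le m$. Now I would argue inductively that this pattern continues: after step $j \ge m$, the amount to which each ray has been explored is exactly $(x_{j-m+1}, \dots, x_j)$ assigned to the rays in the cyclic order, so the least-explored ray is precisely the one last visited at step $j-m+1$, which under the cyclic labelling is ray $(j+1) \bmod m$. Hence step $j+1$ visits ray $(j+1)\bmod m$, closing the induction. The key input here is that the $x_i$'s along the sequence of ``previous visits'' form a strictly increasing sequence — which is exactly what the chain of tight equalities $S_{j-1} = \rho x_{\bar\jmath}$ in Lemma~\ref{lemma:acyclic.star.one.constraint} gives, since $S_{j-1}$ is strictly increasing in $j$ (all $x_i > 0$).

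Next I would deduce monotonicity of $(x_i)$ itself. Once cyclicity is established, $\bar\jmath = j - m$ for $j > m$, so the tight constraints read $S_{j-1} = \rho x_{j-m}$. Subtracting consecutive such equations yields $x_{j-1} = \rho(x_{j-m} - x_{j-m-1})$, i.e.\ the linear recurrence $x_{i+m} = \rho(x_{i+1} - x_i)$ mentioned in the main text; in particular $x_{i+1} > x_i$ for the relevant range since $x_{i+m} > 0$ and $\rho > 0$. For the boundary — the last $m$ steps, governed by the $(E_j)$ constraints giving $\rho x_{k-m+1} = \cdots = \rho x_{k-1} = S_k$ — I would note these force $x_{k-m+1} = \cdots = x_{k-1}$ and, combined with the recurrence and positivity, that $x_k \ge x_{k-1}$; a small check (possibly with a strict-vs-weak inequality caveat, replacing $>$ by $\ge$ in the monotonicity claim near the tail) closes the argument. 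The first few steps $x_1 < x_2 < \cdots$ follow from the implicit assumption that each new ray is explored to positive length together with the tightness of $(C_1),\dots$.

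**The main obstacle** I anticipate is handling the tail of the strategy cleanly — the last $m$ steps, where the constraints switch from the $(C_j)$ family to the $(E_j)$ family, and where the recurrence no longer directly applies. There one must carefully argue that $x_k$ does not fall below $x_{k-1}$ and that no ``short last visit'' to a ray can improve the objective, which is where a perturbation argument (shifting length into $x_k$, as in the proof of Lemma~\ref{lemma:acyclic.star.one.constraint}) combined with Property~\ref{property:star.ordering} is needed. A secondary subtlety is that ``cyclic up to a permutation'' and ``increasing'' are only forced for \emph{optimal} solutions, so every perturbation argument must preserve feasibility \emph{and} the implicit constraints $x_i > x_{\bar\imath}$; I would be careful to invoke these implicit constraints explicitly wherever a perturbation could violate them.
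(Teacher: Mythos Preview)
Your plan reverses the order the paper uses, and this creates a circularity that does not close as written. In your induction step for cyclicity at step $j+1$, you need that the least-explored ray is the one last visited at step $j-m+1$, i.e.\ that $x_{j-m+1}=\min\{x_{j-m+1},\dots,x_j\}$. That is a monotonicity statement about the window $[j-m+1,j]$, and your inductive hypothesis does not deliver it. Under ``cyclic up to $j$'' the tight constraints $(C_{j'})$ give $S_{j'-1}=\rho x_{j'-m}$ only for $j'\le j$, so subtracting yields the recurrence $x_{i+m}=\rho(x_{i+1}-x_i)$ only for $i\le j-m-1$; you obtain $x_1<\cdots<x_{j-m}$ but nothing about $x_{j-m+1},\dots,x_j$. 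Your justification ``the $x_i$'s along the sequence of previous visits are increasing'' is the statement that $x_{\bar\jmath}$ is increasing in $j$; this is correct, but it only says that the minimum of the exploration profile grows over time, not \emph{which index} attains it. A non-monotone profile $(x_1,\dots,x_m)$ with $x_1$ not minimal is perfectly compatible with $x_{\bar\jmath}$ increasing, and Property~\ref{property:star.ordering} would then force a non-cyclic next visit.

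The paper avoids this circularity by proving monotonicity first, via a swapping argument independent of cyclicity: if $x_{i_0}>x_{i_0+1}$, swap those two steps and also swap the roles of rays $r_{i_0}$ and $r_{i_0+1}$ in all later steps. This keeps all partial sums no larger, keeps the set $\{x_{l_r}\}$ and all $x_{\bar\jmath}$ unchanged, hence preserves feasibility and the objective, but strictly loosens $(C_{i_0+1})$, contradicting Lemma~\ref{lemma:acyclic.star.one.constraint}. With monotonicity in hand (up to step $k-1$), cyclicity then follows from Property~\ref{property:star.ordering} exactly by your induction: at step $m+1$ the least-explored ray is ray $1$ \emph{because} $x_1\le\cdots\le x_m$, and so on. Your instinct that the tail case $x_k\ge x_{k-1}$ is the delicate point is correct; the paper also cannot get it from the recurrence and instead handles it separately by running a feasible-to-optimal tightening procedure (Algorithm~\ref{alg:feasible.to.optimal}) starting from the geometric strategy, which preserves and strictly loosens the inequality $x_k>x_{k-1}$ at every step.
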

\begin{proof}
Let $X=(x_i, r_i)$ be an optimal solution to $L_m^{(k)}$.
The proof of monotonicity borrows the swapping idea from \cite{Gal80}.
Suppose that $X$ is not monotone, i.e. $\exists i_0, x_{i_0}>x_{i_0+1}$. Define strategy $Y=(y_i,s_i)$ to be a modification of strategy $(x_i,r_i)$ where we swap the two steps $x_{i_0}$ and $x_{i_0+1}$ as well as the roles that rays $r_{i_0}$ and $r_{i_0+1}$ play {\em after} the swap. Formally,
$y_i = x_i$ except for the swap $y_{i_0}=x_{i_0+1}, y_{i_0+1}=x_{i_0}$, and
$s_i = r_i$, except when  $i> i_0+1$ and we search $ r_{i_0}$ or $r_{i_0+1}$: in this case
$r_i=r_{i_0} \Rightarrow s_i=r_{i_0+1}$ and $r_i=r_{i_0+1} \Rightarrow s_i=r_{i_0}$.\\

Swapping does not increase the partial sums: $S_j^{(Y)} \leq S_j^{(X)}$ for all $j$, so $(C_0)$ holds for $Y$, as well as $(B)$. Swapping does not change the set of the last steps on each ray: $\{x_{l_r},l_r\}=\{y_{l_r},l_r\}$, and so if $i_0\neq k-1$, all constraints $(E_r)$ hold for $Y$. Most importantly, swapping has the nice property that for all $j$, $y_{\bar{\jmath}}=x_{\bar{\jmath}}$. So for all $j$, $(C_j)$ holds for $Y$. Recall the tightness property (lemma \ref{lemma:acyclic.star.one.constraint}).
\[ \rho y_{\overline{i_0+1}} = \rho x_{\overline{i_0+1}} = S_{i_0}^{(X)} > S_{i_0}^{(Y)} \]
so $(C_{i_0+1})$ is not tight for $Y$, therefore $Y$ cannot be optimal according to lemma \ref{lemma:acyclic.star.one.constraint}, and netiher can $X$, which has the same objective value: a contradiction.

We will address the case $i_0=k-1$ later. This does not impact the proof of cyclicality.


Now for cyclicality. We showed above that any optimal strategy must be monotone (up to step $k-1$). Recall property \ref{property:star.ordering}. Take an optimal strategy: we can suppose that it begins by visiting the rays in order, $1$ to $m$.
On step $m+1$, it needs to visit the least visited ray so far, which is ray $1$, because of monotonicity. Then ray $1$ becomes the ray which has been visited the most so far; an immediate induction follows, proving that the strategy is cyclic.


We left a piece of the monotonicity property hanging, the case where $i_0=k-1$: we still need to prove that if $X=(x^*_i)$ is an optimal strategy, then $x_k^*\geq x_{k-1}^*$. We can show this by applying algorithm \ref{alg:feasible.to.optimal} (see definition in the proof of lemma \ref{lemma:feasible.positive} on the next page) to the cyclic geometric strategy $G=(g_i)=({r_m^*}^i)$. We have the identity $g_k> g_{k-1}$ at the start of the algorithm, and at each step of the algorithm we increase $g_k$ and decrease $g_{k-1}$, before scaling up by a factor $\alpha$, hence $x_k^*\geq x_{k-1}^*$.

\end{proof}


\section{Finding the optimal values of $k$ for $P_m^{(k)}$}


In this section we prove Theorems \ref{theorem:star.optimal} and \ref{thm:complexity.star}. The main idea for the proof of Theorem \ref{theorem:star.optimal} is the same as for the line: we show that the terms in $X_0^{(k)}$ are increasing, therefore the largest feasible $k$ is optimal, and then we show that the objective values of $X_B^{(k)}$ are decreasing, therefore the smallest feasible $k$ is optimal. However, the proof is much more involved than the proof for the line, because $X_0^{(k)}$ is no longer simply a prefix of $X_0^{(k+1)}$. Lemma \ref{lemma:feasible.positive} is a technical result which allows us to prove Lemma \ref{lemma:theorem.part.one}, which results directly in the first part of the theorem; some more calculations give us the second half of the theorem in Lemma \ref{lemma:theorem.part.two}.

In this whole section, we discard all monotonicity constraints $(M_i)$, with the exception of the final one $x_k\geq x_{k-1}$, which we will relabel $(M)$. We also discard the implicit constraints $x_i>0$ and $x_{i+m}>x_i$, regarding $P_m^{(k)}$ as simply a set of equations.

The following technical result is key to efficiently determining the optimal values of $k$.

\begin{lemma}
Any point $X=(x_i)$ which is satisfies all of the constraints in $P_m^{(k)}$ is positive, that is for all $i$, $x_i\geq 0$.
Also, $x^*_k-x^*_{k-1} \geq x_k-x_{k-1}$.
\label{lemma:feasible.positive}
\end{lemma}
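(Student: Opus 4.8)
The plan is to prove the two claims of Lemma~\ref{lemma:feasible.positive} in the order stated, using an algorithmic/inductive argument that turns an arbitrary feasible point into the canonical solution while only improving it in a controlled way. For the positivity claim, I would argue by contradiction. Suppose $X=(x_i)$ satisfies all the equality constraints of $P_m^{(k)}$ (recall that, after Lemma~\ref{lemma:star.one.constraint}, the relevant constraints are tight, so $X$ lies on the line $\Delta_m^{(k)}$) but has some negative coordinate. Let $i^\star$ be the smallest index with $x_{i^\star}<0$. The recurrence $x_{i+m}=\rho(x_{i+1}-x_i)$ together with the initial condition coming from $(C_0)$ (and the boundary conditions from the $(E_j)$) should force a sign pattern that is incompatible with $\sum_{i=1}^{m-1}x_i=\rho>0$ and with the terminal equalities $\rho x_{k-m+1}=\dots=\rho x_{k-1}=S_k$. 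Concretely, I expect to show that if the first negative coordinate appears, then propagating the recurrence forward makes $S_k$ and the tail coordinates have the wrong sign, contradicting $S_k=\rho x_{k-1}$ with $x_{k-1}$ among the largest coordinates. The cleanest way to make this rigorous is probably to reuse the closed form: on $\Delta_m^{(k)}$ the solution is a combination of the $\zeta_1,\zeta_2$ terms plus the contributions of the remaining (negative-or-complex, large-modulus) roots of $\chi$; but the terminal equalities $\rho x_{k-m+1}=\dots=\rho x_{k-1}$ kill the large-modulus contributions entirely, leaving exactly a scaled aggressive strategy, which we already know from the Appendix to be positive and increasing.

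For the second claim, $x_k^\star-x_{k-1}^\star\ge x_k-x_{k-1}$ for any feasible $X$, where $X^\star$ is the optimal solution (so $X^\star\in\{X_0^{(k)},X_B^{(k)}\}$), I would set up the algorithm referenced in the excerpt (``algorithm~\ref{alg:feasible.to.optimal}''): starting from any feasible point, repeatedly take a loose constraint and tighten it by shifting mass — decreasing $x_{\bar\jmath}$ and increasing $x_k$ for a loose $(C_j)$, decreasing $x_{l_r}$ and increasing $x_k$ for a loose $(E_r)$ — and finally scale up by the largest feasible factor $\alpha\ge 1$. Each such move weakly increases $x_k$ and weakly decreases $x_{k-1}$ (or leaves them fixed), and the final scaling multiplies the gap $x_k-x_{k-1}$ by $\alpha\ge 1$; hence the gap is non-decreasing along the algorithm, and since the algorithm terminates at $X^\star$, we get $x_k^\star-x_{k-1}^\star\ge x_k-x_{k-1}$. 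The one subtlety is the step where $(E_{k-m+1})$ or $(C_k)$ interacts with $x_{k-1}$ directly; there I would check, exactly as in the proof of Theorem~\ref{theorem:cyclic}, that applying the procedure to the cyclic geometric strategy $G=(g_i)=({r_m^*}^i)$, which starts with $g_k>g_{k-1}$, preserves $x_k\ge x_{k-1}$ throughout, so the terminal point genuinely satisfies $(M)$.

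The main obstacle I anticipate is the positivity claim, specifically ruling out that the ``parasitic'' roots of $\chi$ (the negative or complex ones, of modulus exceeding $\zeta_2$) contribute to a point of $\Delta_m^{(k)}$. On an infinite strategy those roots are excluded because they would eventually make coordinates negative, but here the strategy is finite, so that asymptotic argument does not directly apply. The fix I would pursue is linear-algebraic: the $m-1$ terminal equalities $\rho x_{k-m+1}=\dots=\rho x_{k-1}=S_k$, combined with the recurrence, pin down the coefficient vector in the root basis so tightly that the parasitic coefficients must vanish — essentially because a nonconstant linear combination of $m-1$ of the parasitic modes cannot be constant over $m-1$ consecutive indices unless all coefficients are zero (a Vandermonde-type nondegeneracy, using that the parasitic roots are distinct when $\rho>\rho_m^*$ and handled by the explicit factorization of $\chi$ when $\rho=\rho_m^*$). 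Once the parasitic part is gone, $X$ restricted to $\Delta_m^{(k)}$ is literally a positive multiple of the aggressive strategy, and positivity is immediate from the closed forms for $z_i$ established in the Appendix. I would present the Vandermonde nondegeneracy as the technical heart of the argument and treat everything downstream as bookkeeping.
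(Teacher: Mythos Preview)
Your treatment of the second claim (the gap $x_k^*-x_{k-1}^*\ge x_k-x_{k-1}$) is essentially the paper's argument: run the tightening algorithm, note that every elementary move weakly increases $x_k$ and weakly decreases $x_{k-1}$, and that the final scaling by $\alpha\ge 1$ can only enlarge the gap. That part is fine.

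The positivity argument, however, has a genuine gap. You open by assuming that $X$ satisfies the constraints \emph{with equality} and therefore lies on $\Delta_m^{(k)}$, citing Lemma~\ref{lemma:star.one.constraint}. But that lemma is about \emph{optimal} solutions; the present lemma is about an \emph{arbitrary feasible} point, i.e.\ one satisfying the inequalities $(C_0),(C_j),(E_j),(M),(B)$, with no tightness assumed. This distinction is not cosmetic: the lemma is later applied (in the proof of Lemma~\ref{lemma:theorem.part.one}) to the auxiliary point $\Delta=(x_i-y_{i-1})$, which is merely feasible for $P_{m,0}^{(k)}$ and certainly does not sit on $\Delta_m^{(k)}$. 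So restricting to $\Delta_m^{(k)}$ proves a weaker statement than what is needed.

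Even restricted to $\Delta_m^{(k)}$, the Vandermonde idea does not go through as stated. The line $\Delta_m^{(k)}$ is cut out by the recurrence together with the \emph{terminal} equalities $x_{k-m+1}=\dots=x_{k-1}=S_k/\rho$; the aggressive strategy, by contrast, is singled out by \emph{initial} conditions. For a finite-length sequence there is no a priori reason these two one-dimensional subspaces coincide, and requiring a linear combination of all $m$ modes $\zeta_1,\dots,\zeta_m$ to be constant over $m-1$ consecutive indices does not force the $m-2$ ``parasitic'' coefficients to vanish (you still have the $\zeta_1,\zeta_2$ coefficients free). So the closed-form route does not deliver positivity on $\Delta_m^{(k)}$ either.

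The paper's proof avoids all of this by using the very algorithm you invoke for the second claim, but for positivity as well: starting from any feasible $X$, each tightening step only \emph{decreases} $x_i$ for $i<k$, and the final scaling factor satisfies $\alpha\ge 1$; since the endpoint is the optimal $X^*$ (with $x_i^*\ge 0$), one gets $x_i\ge x_i^*/\alpha\ge 0$, and $x_k\ge x_{k-1}\ge 0$ from $(M)$. In short, the same monotone-transformation argument that you correctly use for the gap inequality is also the engine for positivity; the root-analysis detour is neither needed nor sufficient.
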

\begin{proof}
Take $X=(x_i)$ a feasible point for $P_m^{(k)}$. Using the methods from the proof of lemma \ref{lemma:acyclic.star.one.constraint}, we can transform $X$ into the optimal strategy $X^*$ by performing the process described in Algorithm \ref{alg:feasible.to.optimal}.

\begin{algorithm}[htb!]
\small
\DontPrintSemicolon
\small
\caption{Feasible to optimal}
\label{alg:feasible.to.optimal}
{\bf Input:} $X = (x_i)$ a feasible point for $P_m^{(k)}$ \;
\While {any constraint $(C_j)$ or $(E_j)$ is loose} {
\For {j=1,\dots, k-m} {
\If {$(C_j)$ is loose} {
select $\delta>0$ so that $(C_j)$ will become tight \;
$x_j, x_k \leftarrow x_j - \delta,\,\, x_k + \delta$ \;
}
}
\For {j=k-m+1,\dots,k-1} {
\If {$(E_j)$ is loose} {
select $\delta>0$ so that $(E_j)$ will become tight \;
$x_j, x_k \leftarrow x_j - \delta,\,\, x_k + \delta$ \;
}
}
}
select $\alpha$ such that $(C_0)$ or $(B)$ will become tight \;
$x_i \leftarrow \alpha x_i$ for all $i$ \;
{\bf Output:} $X=X^*$ the optimal solution \;
\end{algorithm}

This algorithm has a purely conceptual value, because every time we tighten a constraint, we loosen at least one other, and thus it cannot finish in finite time. However, convergence is guaranteed by the fact that $x_k$ increases at each iteration and is bounded from above by $T$, therefore it must converge. All other variables $x_i$ must also converge, because they are all decreasing, and each one has a total variation of less than $x^*_k-x_k$. The reason the output must be $X^*$ is that all constraints are tight, witht the exception of $(M)$ and at most one of $(C_0)$ and $(B)$.

By running algorithm \ref{alg:feasible.to.optimal} on $X$, we decrease each $x_i,\,i<k$ by a certain amount, then scale it up by some $\alpha\geq1$, and obtain $x^*_i\geq 0$, hence necessarily $x_i\geq 0$. Using constraint $(M)$ we see that $x_k\geq x_{k-1}\geq 0$.

We call attention to a subtle detail: without constraint $(M)$, we could have had $\alpha<0$, for example if we start from the negative version of the optimal solution $(-x^*_i)$.
But constraint $(M)$ cannot be tight in the optimal solution, as shown by executing the algorithm on the geometric strategy $G=({r^*_m}^i)$: constraint $(M)$ starts out being non-tight and loosens progressively as the algorithm runs, therefore it cannot be tight for $X^*$. Because constraint $(M)$ is satisfied for $(x_i)$ all throughout the process, we cannot have $\alpha<0$, which would flip $(M)$ and violate it in $X^*$, a contradiction.

Constraint $(M)$ can only get looser at each step of algorithm \ref{alg:feasible.to.optimal}, which proves the second part of our lemma.

\end{proof}

If we remove constraint $(B)$ from $P_m^{(k)}$, we get an LP $P_{m,0}^{(k)}$ for which the solution is $X_0^{(k)}$, and similarly, by removing constraint $(C_0)$ from $P_m^{(k)}$ we get an LP $P_{m,B}^{(k)}$, for which the solution is $X_B^{(k)}$. Lemma \ref{lemma:feasible.positive} and algorithm \ref{alg:feasible.to.optimal} can be readily extended to show that any feasible point for $P_{k,0}^m$ or $P_{k,B}^m$ is positive, and that the inequality corresponding to constraint $(M)$ is valid.

One would expect that as $k$ grows, giving $X_0^{(k)}$ more steps to explore the domain, it is able to explore farther; conversely, it seems reasonable, though not quite obvious, that once the time budget is used up, it is best to waste as little time as possible taking extra steps, which backtrack on previously covered ground, and so $X_B^{(k)}$ should perform best for smaller $k$. We will show that this is indeed the case.

\begin{lemma}
Denote $x_{0,i}^{(k)}$ the $i$-th step in the strategy $X_0^{(k)}$ for each $k$. For all $i$, $(x_{0,k-i}^{(k)})_{k\geq i}$ is increasing.
\label{lemma:theorem.part.one}
\end{lemma}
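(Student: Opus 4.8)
\noindent\textit{Proof plan.}
The plan is to compare $X_0^{(k)}$ with $X_0^{(k-1)}$ through a shift-and-subtract construction, and then invoke the positivity statement of Lemma~\ref{lemma:feasible.positive}. Write $X_0^{(k)}=(x_i)_{i=1}^{k}$ and $X_0^{(k-1)}=(y_i)_{i=1}^{k-1}$, and use the convention $y_0=1$, so that the unit distance plays the role of a ``zeroth step''. Consider the shifted difference $D=(d_i)_{i=1}^{k}$ with $d_i=x_i-y_{i-1}$. Reindexing by $j=k-i$, the assertion of the lemma — that for each fixed $i$ the quantity $x_{0,k-i}^{(k)}$ is nondecreasing in $k$ — is exactly the inequality $x_{0,j}^{(k)}\ge x_{0,j-1}^{(k-1)}$ for all $j$, i.e. $d_j\ge 0$. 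So the whole lemma reduces to showing $D\ge 0$ componentwise, and for this it suffices, by the extension of Lemma~\ref{lemma:feasible.positive} to $P_{m,0}^{(k)}$ discussed above, to check that $D$ is feasible for $P_{m,0}^{(k)}$.

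First I would observe that $D$ satisfies the defining recurrence of $\Delta_m^{(k)}$. Both $X_0^{(k)}$ and the shifted sequence $(y_{i-1})_{i\ge 1}$ obey $a_{i+m}=\rho(a_{i+1}-a_i)$ on $i\in[1,k-m]$: for $X_0^{(k)}$ this is because it lies on $\Delta_m^{(k)}$, and for the shifted sequence the only nontrivial index is $i=1$, where subtracting $(C_0)$ from $(C_1)$ for $X_0^{(k-1)}$ gives exactly $y_m=\rho(y_1-y_0)$ — this is precisely where the convention $y_0=1$ is used. Hence $D$ obeys the same recurrence, and telescoping, together with the tight constraints of $X_0^{(k)}$ and $X_0^{(k-1)}$, yields for all relevant $j$
\[ \textstyle\sum_{i=1}^{j+m-1} d_i \;=\; \rho\,x_j-\bigl(1+\rho\,y_{j-1}\bigr)\;=\;\rho\,d_j-1\;<\;\rho\,d_j, \]
which is $(C_j)$ (with a slack of exactly $1$), and the identical computation on the full sum $\sum_{i=1}^{k}d_i$ gives $(E_j)$; for $(C_0)$ one gets $\sum_{i=1}^{m-1}d_i=y_{m-1}-1\le\rho$, using $y_{m-1}\le\sum_{i=1}^{m-1}y_i=\rho$ (positivity of $X_0^{(k-1)}$, itself an instance of Lemma~\ref{lemma:feasible.positive}).

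The one genuinely delicate constraint is the last monotonicity inequality $(M)$, namely $d_k\ge d_{k-1}$. Here I would use that, by Lemma~\ref{lemma:star.one.constraint}, the last $m-1$ coordinates of $X_0^{(k-1)}$ are equal (the $(E)$-constraints are tight), so for $m\ge 3$ we have $y_{k-1}=y_{k-2}$ and therefore $d_k-d_{k-1}=(x_k-x_{k-1})-(y_{k-1}-y_{k-2})=x_k-x_{k-1}$; and $x_k\ge x_{k-1}$ is precisely the $(M)$-inequality for $X_0^{(k)}$, which was established (by running Algorithm~\ref{alg:feasible.to.optimal} on the geometric strategy) in the proof of Theorem~\ref{theorem:cyclic}. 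Having verified all constraints of $P_{m,0}^{(k)}$ for $D$, Lemma~\ref{lemma:feasible.positive} gives $D\ge 0$, which completes the argument. I expect this $(M)$-check, together with the routine bookkeeping for small $k$ (empty index ranges, and the separate $m=2$ line case already treated in Section~\ref{sec:line}), to be the only real obstacle; the rest is the same telescoping already used for the line in Lemma~\ref{lemma:line.one.for.each}.
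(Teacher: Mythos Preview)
Your overall strategy---form the shifted difference $D=(x_i-y_{i-1})$ and invoke Lemma~\ref{lemma:feasible.positive} for $P_{m,0}^{(k)}$---is exactly the paper's, and your bookkeeping for $(C_0)$, $(C_j)$, $(E_j)$ is correct (indeed slightly cleaner than the paper's, which drops the ``$-1$'' slack). The gap is in your verification of $(M)$.

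You claim that the tight $(E)$-constraints force $y_{k-1}=y_{k-2}$. They do not. For $Y=X_0^{(k-1)}$ the extendability constraints are $(E_j)$ with $j\in[k-m,\,k-2]$, and their tightness gives $y_{k-m}=y_{k-m+1}=\cdots=y_{k-2}$; the \emph{last} coordinate $y_{k-1}$ never appears on the right-hand side of any $(E_j)$ and is not pinned. In fact the paper shows in the proof of Lemma~\ref{lemma:feasible.positive} (by running Algorithm~\ref{alg:feasible.to.optimal} on the geometric strategy) that $(M)$ is strictly loose for the optimum, so $y_{k-1}>y_{k-2}$. Hence your reduction of $d_k\ge d_{k-1}$ to $x_k\ge x_{k-1}$ fails, and the required inequality is precisely $x_k-x_{k-1}\ge y_{k-1}-y_{k-2}$, which is nontrivial.

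The paper closes this gap with a second auxiliary construction: set $r=m/(m-1)$ and define $Z=(z_i)_{i\le k}$ by $z_i=y_{i-1}+(r-1)r^{i-1}$. One checks that $Z$ is feasible for $P_{m,0}^{(k)}$ (this uses $r^m/(r-1)\le\rho$ and a lower bound $y_{m-1}\ge\rho/(m-1)$ coming from monotonicity), and then the ``second half'' of Lemma~\ref{lemma:feasible.positive} applied to $Z$ gives
\[
x_k-x_{k-1}\;\ge\;z_k-z_{k-1}\;=\;(y_{k-1}-y_{k-2})+(r-1)^2 r^{k-2}\;\ge\;y_{k-1}-y_{k-2},
\]
which is exactly the missing $(M)$-inequality for $D$. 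So your plan is right but incomplete: you need this extra step (or an equivalent one) before you can apply Lemma~\ref{lemma:feasible.positive} to $D$.
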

\begin{proof}
Fix $k$. In order to simplify notations, denote $X=(x_i)=X_0^{(k)}$ and $Y=(y_i) = X_0^{(k-1)}$. It suffices to show that $\forall i, x_i\geq y_{i-1}$.

First we need to work to prove the following inequality:
\begin{equation}
x_k-x_{k-1} \geq y_{k-1}-y_{k-2}.
\label{eq:delta.M}
\end{equation}
Set $r=m/(m-1)$. Recall the characteristic polynomial $p(t) = t^m-\rho t +\rho$. For the optimal $\rho^*_m$, we have $p(r)=0 \Rightarrow r^m/(r-1) = \rho^*_m$, so in the general case $r^m/(r-1) \leq \rho$. This gives us the following identity:
\[ \sum_{i=0}^{j+m-1} (r-1)r^i = r^{j+m} - 1 \leq \rho (r-1)r^j - 1. \]
Define $Z = (z_i)_{i\leq k}$ by $z_i = y_{i-1} + (r-1)r^{i-1}$, using the convention $y_0=1$. $Z$ is a feasible point for $P_{k,0}^m$. Indeed, we verfiy each constraint:
\begin{align*}
S_{m-1}^{(Z)} &= S_{m-2}^{(Y)} + 1 +  \sum_{i=0}^{m-2} (r-1)r^i \\
&\leq \rho - y_{m-1} + \frac{\rho}{m} \leq \rho, & (C_0)
\end{align*}
because $(r-1)/r = 1/m$, and due to monotonicity (Theorem \ref{theorem:cyclic}), step $y_{m-1}$ needs to account for at least $1/(m-1)$ of the sum $S_{m-1}=\rho$, so $y_{m-1}\geq \rho/(m-1) \geq \rho/m$.
\begin{align*}
S_{j+m-1}^{(Z)} &= S_{j+m-2}^{(Y)} + 1 +  \sum_{i=0}^{j+m-2} (r-1)r^i \\
&\leq \rho y_{j-1} + \rho(r-1)r^{j-1} = \rho z_j, & (C_j)
\end{align*}
and similarly each $(E_j)$ holds. Constraint $(M)$ also holds:\\
$z_k = y_{k-1}+(r-1)r^{k-1}\geq y_{k-2} + (r-1)r^{k-2} = z_{k-1}$.
We finish by applying the second half of lemma \ref{lemma:feasible.positive} to $Z$: 
\begin{align*}
x_k - x_{k-1} &\geq z_k-z_{k-1} = y_{k-1}-y_{k-2} + (r-1)^2 r^{k-2}\\
&\geq y_{k-1} - y_{k-2}.
\end{align*}

Now that we have \eqref{eq:delta.M}, we can apply lemma \ref{lemma:feasible.positive} to the difference of the strategies $X$ and $Y$, in order to show that $X$ is ``bigger'' than $Y$.
Define $\Delta = (\delta_i)_{i\leq k}$ by $\delta_i = x_i - y_{i-1}$, with the convention $y_0=1$. $\Delta$ is a feasible point for $P_{k,0}^m$. Indeed, we can verify each constraint:
\[ S_{m-1}^{(\Delta)} = S_{m-1}^{(X)}-S_{m-2}^{(Y)} = y_{m-1} \leq \rho,  \qquad(C_0) \]
\[ S_{j+m-1}^{(\Delta)} = \rho x_j - \rho y_{j-1} = \rho \delta_{j}, \quad  j \in [1,k-m+1] \, (C_j) \]
\[ S_{k}^{(\Delta)} = \rho x_j - \rho y_{j-1} = \rho \delta_{j}, \quad j \in [k-m+1,k-1] \, (E_j) \]
\[ \text{and finally } \eqref{eq:delta.M} \Leftrightarrow \delta_k \geq \delta_{k-1}. \qquad (M) \]

Using lemma \ref{lemma:feasible.positive} we obtain that for all $i$, $\delta_i \geq 0 \Leftrightarrow x_i\geq y_{i-1}$, which concludes our proof.
\end{proof}

\begin{corollary}
Both the total length cleared and the time taken by strategy $X_0^{(k)}$ are increasing.
\end{corollary}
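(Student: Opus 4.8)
The plan is to deduce the two monotonicity statements directly from Lemma~\ref{lemma:theorem.part.one}, which tells us that for each fixed $i$, the sequence $(x_{0,k-i}^{(k)})_{k\ge i}$ is increasing in $k$. First I would handle the total cleared length. The objective value of $X_0^{(k)}$ is $\sum_{i=0}^{m-1} x_{0,k-i}^{(k)}$, which is a sum of $m$ terms, each of the form $x_{0,k-i}^{(k)}$ for a fixed offset $i\in\{0,\dots,m-1\}$. By Lemma~\ref{lemma:theorem.part.one}, each of these $m$ sequences is nondecreasing when we pass from $k-1$ to $k$; since a sum of nondecreasing sequences is nondecreasing, the clearance of $X_0^{(k)}$ is increasing in $k$. (Strict monotonicity, if desired, follows from the strict inequalities $x_k-x_{k-1}\ge y_{k-1}-y_{k-2}+(r-1)^2 r^{k-2}$ obtained inside the proof of Lemma~\ref{lemma:theorem.part.one}, which forces at least the top term to strictly increase.)

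Next I would handle the time taken, i.e.\ the value of the left-hand side of constraint $(B)$, namely $2\sum_{i=1}^{k-1} x_i + x_k$ evaluated at $X_0^{(k)}$. The cleanest route is again via the ``difference'' strategy $\Delta=(\delta_i)$ with $\delta_i = x_{0,i}^{(k)} - x_{0,i-1}^{(k-1)}$ (convention $x_{0,0}^{(k-1)}=1$) that already appears in the proof of Lemma~\ref{lemma:theorem.part.one}: there it is shown that every $\delta_i\ge 0$. Then the time consumed by $X_0^{(k)}$ minus the time consumed by $X_0^{(k-1)}$ equals
\[
\Big(2\sum_{i=1}^{k-1} x_{0,i}^{(k)} + x_{0,k}^{(k)}\Big) - \Big(2\sum_{i=1}^{k-2} x_{0,i}^{(k-1)} + x_{0,k-1}^{(k-1)}\Big),
\]
which, after re-indexing the second sum and collecting terms, is a nonnegative combination of the $\delta_i$'s (each original step $x_{0,i}^{(k)}$ is counted with weight $2$ for $i\le k-1$ and weight $1$ for $i=k$, and each shifted step $x_{0,i-1}^{(k-1)}$ is subtracted with a weight no larger than the weight on $x_{0,i}^{(k)}$). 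Hence the difference is $\ge 0$ and the time taken is increasing in $k$. An even shorter argument: $X_0^{(k)}$ must be \emph{infeasible} for the budget that $X_0^{(k-1)}$ exactly uses up only if its time demand exceeds that of $X_0^{(k-1)}$ — but this is precisely what is asserted, so I will instead give the explicit $\delta$-computation to keep the statement self-contained.

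I expect no real obstacle here; the work was done in Lemma~\ref{lemma:theorem.part.one} and the positivity half of Lemma~\ref{lemma:feasible.positive}. The only mildly delicate point is bookkeeping the coefficients when comparing the budget expressions for consecutive $k$, to make sure the telescoped difference really is a nonnegative combination of the $\delta_i$ rather than having a stray term of the wrong sign; this is a routine re-indexing check and is handled by noting that the budget functional $2\sum_{i=1}^{k-1}(\cdot)+(\cdot)_k$ assigns weakly larger weight to step $i$ in the $k$-step strategy than to step $i-1$ in the $(k-1)$-step strategy for every $i$.
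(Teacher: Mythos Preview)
Your proposal is correct and follows essentially the same approach as the paper: both the clearance $\sum_{i=0}^{m-1} x_{0,k-i}^{(k)}$ and the time $S_{k-1}+S_k$ are sums of the sequences $x_{0,k-i}^{(k)}$ (with extra nonnegative terms appearing as $k$ grows), each of which is increasing in $k$ by Lemma~\ref{lemma:theorem.part.one}. The paper dispatches the time part with the one-line observation ``so is the time taken $S_{k-1}+S_k$,'' whereas you unroll this into an explicit nonnegative combination of the $\delta_i$'s; these are two phrasings of the same computation, and your more detailed bookkeeping is fine (indeed the leftover term $2x_{0,1}^{(k)}$ you identify is exactly the ``extra'' summand that appears when the number of terms in $S_k$ grows by one).
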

\begin{proof}
The objective is $\sum_{i=0}^{m-1} x_{0,k-i}^{(k)}$ which is a sum of increasing series; so is the time taken $S_{k-1} + S_k$.
\end{proof}
\begin{corollary}[First half of Theorem \ref{theorem:star.optimal}]
There is a critical value $k_0$ such that $X_0^{(k)}$ is feasible for $P_m^{(k)}$ if and only if $k\leq k_0$. This critical value achieves the maximum clearance among all feasible strategies $X_0^{(k)}$.
\label{cor:half.optimal}
\end{corollary}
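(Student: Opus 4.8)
The plan is to reduce feasibility of $X_0^{(k)}$ for $P_m^{(k)}$ to a single inequality that is monotone in $k$, and then read off both halves of the statement from the corollary immediately preceding it. First I would observe that, by construction, $X_0^{(k)}$ lies on the line $\Delta_m^{(k)}$ with $(C_0)$ tight; hence it satisfies $(C_0)$ together with all of $(C_1),\dots,(C_{k-m}),(E_{k-m+1}),\dots,(E_{k-1})$ with equality. It is nonnegative by the extension of Lemma~\ref{lemma:feasible.positive} to $P_{m,0}^{(k)}$, and it satisfies the surviving monotonicity constraint $(M)\colon x_k\ge x_{k-1}$ --- precisely the ``hanging piece'' settled at the end of the proof of Theorem~\ref{theorem:cyclic} (all other $(M_i)$ having been discarded in this section). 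Consequently the \emph{only} constraint of $P_m^{(k)}$ that $X_0^{(k)}$ can possibly violate is the budget constraint $(B)$, so $X_0^{(k)}$ is feasible for $P_m^{(k)}$ if and only if its running time $2\,S_{k-1}^{(X_0^{(k)})}+x_{0,k}^{(k)}$ --- which equals $S_{k-1}^{(X_0^{(k)})}+S_k^{(X_0^{(k)})}$ --- is at most $T$.

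I would then invoke the preceding corollary, which states that the running time of $X_0^{(k)}$ is nondecreasing in $k$. Hence $\{\,k : X_0^{(k)}\ \text{feasible for}\ P_m^{(k)}\,\}$ is an initial segment of $\mathbb{N}$; it is nonempty and finite under the standing assumptions on $T$ (nonempty because a suitably scaled geometric strategy fits within $T$ for some $k$; finite because the running time is at least $x_{0,k}^{(k)}$, which grows without bound in $k$). Taking $k_0$ to be the largest feasible index yields the ``if and only if'' claim. For optimality, the same corollary gives that the clearance $\sum_{i=0}^{m-1}x_{0,k-i}^{(k)}$ of $X_0^{(k)}$ is also nondecreasing in $k$; since every feasible $X_0^{(k)}$ has $k\le k_0$, the clearance over all feasible members of $\mathcal{X}_0$ is maximized at $k=k_0$, i.e.\ $X_0^{(k_0)}$ is optimal for $\mathcal{X}_0$.

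The step I expect to need the most care is the reduction in the first paragraph: one must verify that the matrix construction of $X_0^{(k)}$, together with the positivity and $(M)$-validity facts borrowed from Lemma~\ref{lemma:feasible.positive} and the proof of Theorem~\ref{theorem:cyclic}, genuinely leaves $(B)$ as the sole possibly-violated constraint, with no implicit ordering condition quietly needed. Everything past that point is a direct consequence of the monotonicity-in-$k$ results already in hand and involves no further computation.
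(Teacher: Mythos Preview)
Your proposal is correct and follows the same line as the paper, which states this corollary without proof as an immediate consequence of the preceding one (running time and clearance of $X_0^{(k)}$ are both increasing in $k$); your reduction of feasibility to the single constraint $(B)$ is exactly what the paper sets up by introducing $P_{m,0}^{(k)}$ and identifying $X_0^{(k)}$ as its optimum. One minor caveat: that $x_{0,k}^{(k)}$ is \emph{increasing} does not by itself give unboundedness; the cleanest fix is to note, from the step-shifted inequality $x_{0,i}^{(k)}\ge x_{0,i-1}^{(k-1)}$ in the proof of Lemma~\ref{lemma:theorem.part.one} (with the convention $y_0=1$), that $S_k^{(X_0^{(k)})}\ge 1+S_{k-1}^{(X_0^{(k-1)})}$, so the running time diverges.
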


For $X_B^{(k)}$, we have the reverse situation, where the lowest feasible $k$ yields the optimal solution. The proof is a bit more difficult.

\begin{lemma}[Second half of Theorem \ref{theorem:star.optimal}]
There is a critical value $k_B$ such that $X_B^{(k)}$ is feasible for $P_m^{(k)}$ if and only if $k\geq k_B$; either $k_B=k_0$ or $k_B=k_0+1$. The optimal strategy among all feasible $X_B^{(k)}$ is $X_B^{(k_B)}$.
\label{lemma:theorem.part.two}
\end{lemma}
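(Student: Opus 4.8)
\textbf{Proof plan for Lemma~\ref{lemma:theorem.part.two}.}
The plan is to mirror the structure of the analysis for $X_0^{(k)}$, but with the roles of $(C_0)$ and $(B)$ exchanged, using the scaling relationship between the two families. Recall from Section~\ref{subsec:star.optimal} that both $X_0^{(k)}$ and $X_B^{(k)}$ lie on the same line $\Delta_m^{(k)}$, so $X_B^{(k)} = \gamma_k X_0^{(k)}$ for a positive scalar $\gamma_k$ determined by forcing $(B)$ to be tight. First I would establish the ``threshold'' behaviour: a point on $\Delta_m^{(k)}$ that is feasible for $P_{m,B}^{(k)}$ is feasible for $P_m^{(k)}$ exactly when it additionally satisfies $(C_0)$, i.e. $S_{m-1} \le \rho$. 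Since $X_B^{(k)} = \gamma_k X_0^{(k)}$ and $X_0^{(k)}$ satisfies $(C_0)$ with equality, the constraint $(C_0)$ holds for $X_B^{(k)}$ iff $\gamma_k \le 1$, equivalently iff the budget required by $X_0^{(k)}$ (namely $2S_{k-1}^{(X_0^{(k)})} + x_k$) is at least $T$. By the Corollary to Lemma~\ref{lemma:theorem.part.one}, the time taken by $X_0^{(k)}$ is increasing in $k$; hence the set of $k$ for which $X_B^{(k)}$ is feasible is an up-set $\{k \ge k_B\}$, and $k_B$ is either $k_0$ (if $X_0^{(k_0)}$ already uses budget at least $T$, which can fail only by a hair) or $k_0+1$. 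This also pins down the relationship $k_B \in \{k_0, k_0+1\}$ claimed in the statement.

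Next I would show that the clearance of $X_B^{(k)}$ is \emph{decreasing} in $k$ over the feasible range, so that $X_B^{(k_B)}$ is optimal for $\mathcal{X}_B$. Using that $X_B^{(k)} \in \Delta_m^{(k)}$, the extendability equations $(E_j)$ give $\rho x_{k-m+1} = \cdots = \rho x_{k-1} = S_k$, and tightness of $(B)$ gives $2S_{k-1} + x_k = T$. Combining these, as in the line case (Lemma~\ref{lemma:line.one.for.each}), one can write $\mathbf{clr}(X_B^{(k)}) = \sum_{i=0}^{m-1} x_{k-i}$ as an explicit ratio of $T$ times a function of the tail ratios $x_k/x_{k-1}, \ldots$; the inequality $\mathbf{clr}(X_B^{(k)}) \le \mathbf{clr}(X_B^{(k-1)})$ should reduce, after clearing denominators, to a log-convexity-type statement about the sequence $(x_{0,i}^{(k)})$, analogous to $\zeta_1^2 + \zeta_2^2 \ge 2\zeta_1\zeta_2$ on the line. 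Here I expect to invoke Lemma~\ref{lemma:theorem.part.one} (and the positivity/$(M)$-validity from Lemma~\ref{lemma:feasible.positive}) to control the relevant differences: concretely, the fact that $x_k - x_{k-1} \ge y_{k-1} - y_{k-2}$ together with positivity of the $\delta_i$ gives exactly the monotonicity of the tail ratios needed to make the clearance comparison go through.

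The main obstacle I anticipate is the clearance-monotonicity step, not the threshold step. On the line the clearance of $X_B^{(k)}$ was a single ratio $(z_{k-1}+z_k)/(2\rho z_{k-1} - z_k)$ and the comparison collapsed to a clean two-term inequality; on the star the objective is a sum of $m$ consecutive terms and $X_0^{(k)}$ is no longer a prefix of $X_0^{(k+1)}$, so the ``tail ratios'' of $X_B^{(k)}$ and $X_B^{(k-1)}$ are not directly comparable term by term. The workaround is to again build an auxiliary feasible point for $P_{m,B}^{(k)}$ out of the difference of the two strategies — exactly as $\Delta = (x_i - y_{i-1})$ was used in Lemma~\ref{lemma:theorem.part.one} — and to read off the required comparison from its positivity and its $(M)$-type inequality. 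I would also need to handle the degenerate boundary case where $\gamma_{k_B} = 1$, i.e. $k_B = k_0$ and $X_B^{(k_B)} = X_0^{(k_0)}$, which is consistent but needs to be noted so that the ``better of the two families'' conclusion (the mixed-aggressive analogue for the star) is stated without double counting.
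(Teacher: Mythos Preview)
Your threshold argument is fine and matches the paper's: $X_B^{(k)}=\gamma_k X_0^{(k)}$, feasibility of $(C_0)$ is $\gamma_k\le 1$, and monotonicity of the budget used by $X_0^{(k)}$ gives $k_B\in\{k_0,k_0+1\}$.

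The clearance-monotonicity step, however, has a genuine gap. First a simplification you are missing: on $\Delta_m^{(k)}$ the equalities $(E_j)$ force $x_{k-m+1}=\cdots=x_{k-1}$, so with $(B)$ tight one gets the closed form
\[
\mathbf{clr}\bigl(X_B^{(k)}\bigr)=\frac{(m-1)x_{k-1}+x_k}{2\rho\,x_{k-1}-x_k}\,T,
\]
which depends on the \emph{single} ratio $x_k/x_{k-1}$; there is no multi-term ``tail ratio'' obstacle. The comparison $\mathbf{clr}(X_B^{(k)})\le\mathbf{clr}(X_B^{(k-1)})$ then reduces exactly to
\[
\frac{x_k}{x_{k-1}}\ \le\ \frac{y_{k-1}}{y_{k-2}},
\]
where $X=X_0^{(k)}$, $Y=X_0^{(k-1)}$.

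This is where your plan breaks. The ingredients you propose to use from Lemma~\ref{lemma:theorem.part.one} are $x_k-x_{k-1}\ge y_{k-1}-y_{k-2}$ and $x_{k-1}\ge y_{k-2}$ (positivity of $\delta_{k-1}$). These two inequalities go the \emph{same} way and do \emph{not} imply the ratio bound above; numerically, a larger numerator-gap on top of a larger denominator can give either a larger or a smaller ratio. So the ``difference strategy'' $\Delta=(x_i-y_{i-1})$ cannot close this step.

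The paper's device is a reverse embedding: take the \emph{shifted} tail $(\gamma x_i)_{i\ge 2}$, which is feasible for $P_{m,0}^{(k-1)}$ for small enough $\gamma$, and run Algorithm~\ref{alg:feasible.to.optimal} toward $Y$. Along that run the $(M)$-gap only widens while every coordinate below the last shrinks, yielding simultaneously $\gamma(x_k-x_{k-1})\le (y_{k-1}-y_{k-2})/\alpha$ and $\gamma x_{k-1}\ge y_{k-2}/\alpha$. Dividing cancels both $\gamma$ and $\alpha$ and gives precisely $x_k/x_{k-1}\le y_{k-1}/y_{k-2}$. In short, you need to feed (a version of) $X$ into the $(k{-}1)$-dimensional LP and flow toward $Y$, not build a difference inside the $k$-dimensional LP.
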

\begin{proof}
First, because $X_0^{(k)}$ and $X_B^{(k)}$ both belong to the same line $\Delta_m^{(k)}$, they are scaled versions of each other. We saw that the time taken by $X_0^{(k)}$ increases with $k$, until constraint $(B)$ is surpassed for $k>k_0$. Before this point, $X_B^{(k)}$ is infeasible for $P_m^{(k)}$ due to constraint $(C_0)$. If $(B)$ is tight for $X_0^{(k_0)}$, then $k_B=k_0$ and the two strategies $X_0^{(k_0)}$ and $X_B^{(k_B)}$ are identical. If not, then $k_B = k_0+1$.

Now we show that $X_B^{(k_B)}$ is optimal. In order to simplify notations, denote $X=(x_i)_{i\leq k} = X_0^{(k)}$ and $Y=(y_i)_{i\leq k-1} = X_0^{(k-1)}$. Note that $(\gamma x_i)_{i\geq 2}$ is a feasible point for $P_m^{(k-1)}$, for suitably small $\gamma \leq 1$, chosen to make constraint $(C_0)$ hold. Apply algorithm \ref{alg:feasible.to.optimal} to $(\gamma x_i)_{i\geq 2}$, and denote $X^{\#}=(x_i^{\#})_{i\geq 2}$ the value taken by our strategy right before we scale it up by $\alpha$, i.e. the value of $X$ if we halt the algorithm at line 16. Considering constraint $(M)$, which only gets looser as the algorithm runs, we have the following identity:
\[ \gamma(x_{k}-x_{k-1}) \leq x_k^{\#}-x_{k-1}^{\#} = (y_{k-1}-y_{k-2})/\alpha, \]
and considering the penultimate step of our strategy:
\[ \gamma x_{k-1} \geq x_{k-1}^{\#} = y_{k-2}/\alpha. \]

Dividing these two identities by each other, we obtain a key inequality:

\begin{equation}
\frac{x_k-x_{k-1}}{x_{k-1}} \leq \frac{y_{k-1}-y_{k-2}}{y_{k-2}} \Leftrightarrow \frac{x_k}{x_{k-1}} \leq \frac{y_{k-1}}{y_{k-2}}.
\label{eq:key}
\end{equation}

Denote the total area cleared by strategy $X$ by
\[ {\bf clr}(X) = \frac{x_{k-m+1}+\dots+x_{k}}{2S_{k-1}^{(X)}+x+k}T=\frac{(m-1)x_{k-1}+x_k}{2\rho x_{k-1}-x_k}T. \]

We conclude by showing ${\bf clr}(X) \leq {\bf clr}(Y)$:

\[ \frac{(m-1)x_{k-1}+x_k}{2\rho x_{k-1}-x_k} \leq \frac{(m-1)y_{k-2}+y_{k-1}}{2\rho y_{k-2}-y_{k-1}} \]
\[ \Leftrightarrow (2\rho + m - 1) x_k y_{k-2} \leq (2\rho + m - 1) x_{k-1} y_{k-1} \]
(developing the cross-product and removing identical terms)
\[ \Leftrightarrow \frac{x_k}{x_{k-1}} \leq \frac{y_{k-1}}{y_{k-2}} \text{ which is the inequality \eqref{eq:key}.}\]
\end{proof}

%

Lemma \ref{lemma:theorem.part.two} also provides the missing detail (the fact that $k_0$ is close to $k_B$) needed to complete the proof of Theorem \ref{thm:complexity.star}.

\section{Further experimental results}

\subsection{Implementation details}

We implemented the algorithms for both the star and the network in 
Python, and we run the experiments on a standard laptop. We implemented 
\china and \rural using the NetworkX library 
(https://networkx.github.io).

As stated in the main paper, we used networks from the online library 
{\em Transportation Network Test Problems}~\cite{transportation:2002}. 
We made the following minor modifications: we made the networks 
undirected, contracted nodes joined by edges of length $0$, and then 
scaled each network so that the shortest edge has length $4$: the last 
step is necessary because some networks have lengths in miles and others 
in meters.
Table~\ref{table:city.specs} shows the sizes of the networks we used for 
our experiments.

\begin{table}[htb!]
\centering
\begin{tabular}{c|c|c}
Network        &Nodes        &Edges\\
\hline
Sioux Falls        &24        &38\\
Eastern Massachussets&        74&129\\
Friedrichshain        &144        &240\\
Berlin                &633        &1042\\
Chicago        &933        &1475
\end{tabular}
\caption{Sizes of the networks used in our experiments.}
\label{table:city.specs}
\end{table}

\subsection{Experiments on the star}

We observed that our optimal strategy has a strong relative advantage 
over the other two strategies (the mixed aggressive and the scaled 
geometric). Table~\ref{table:star.m.rho} demonstrates this advantage, 
for different values of $m$ and $R$, and for a budget $T$ fixed to 
$T=10^{16}$. As shown in Figure~\ref{fig:star.gain}, for
smaller $T$ we expect an even stronger advantage of the optimal 
strategy. From the same figure, we observe that
as $T$ becomes even larger than $10^{16}$, we expect the same asymptotic 
behavior as shown in Table~\ref{table:star.m.rho}.  The relative 
advantage reaches $42\%$ for large values of $m$, and is significant for 
a wide range of values of $R$.
For much larger values of $R$ (i.e. $R\geq 100 R_m^*$), 
the relative advantage does eventually drop to 1, at which point the 
strategies are practically indistinguishable in terms of clearance. 


  \begin{table}
  \centering
  \begin{tabular}{c|c|c|c|c}
  \diagbox{$m$}{$\frac{R}{R^*_m}$}        & 1 & 2&5&10\\
  \hline
3	&1.124	&1.156	&1.126	&1.100	\\
4	&1.197	&1.266	&1.240	&1.205	\\
5	&1.244	&1.342	&1.329	&1.294	\\
10	&1.335	&1.521	&1.562	&1.550	\\
20	&1.384	&1.625	&1.712	&1.726	\\
50	&1.413	&1.692	&1.814	&1.850	\\
100	&1.424	&1.715	&1.850	&1.894
  \end{tabular}
  \caption{Relative advantage of the optimal strategy over the other two strategies, for various values of $m$ and 
$R$. Each entry is the ratio of the clearance achieved by the optimal
strategy over the clearance of the scaled aggressive strategy}
  \label{table:star.m.rho}
  \end{table}

\subsection{Experiments on networks}

We found that in practice, \rural never took longer than \china to 
complete a tour: this is in part due to the fact that \rural performs 
its tour on a smaller subgraph than \china. We also added a small 
variation to \rural: we do not require the RPT to return to the origin, 
and once all edges have been traversed, we use the current node as the 
starting point for the next tour.

We present further experiments showing runs for other networks in our 
dataset (Figures~\ref{fig:EMA} and~\ref{fig:frd}). We see that \rural performs better than \china even for smaller 
networks, though the results are more pronounced for the larger ones, as 
expected.

\begin{figure}[htb!]
\centering
\includegraphics[width=0.65\linewidth]{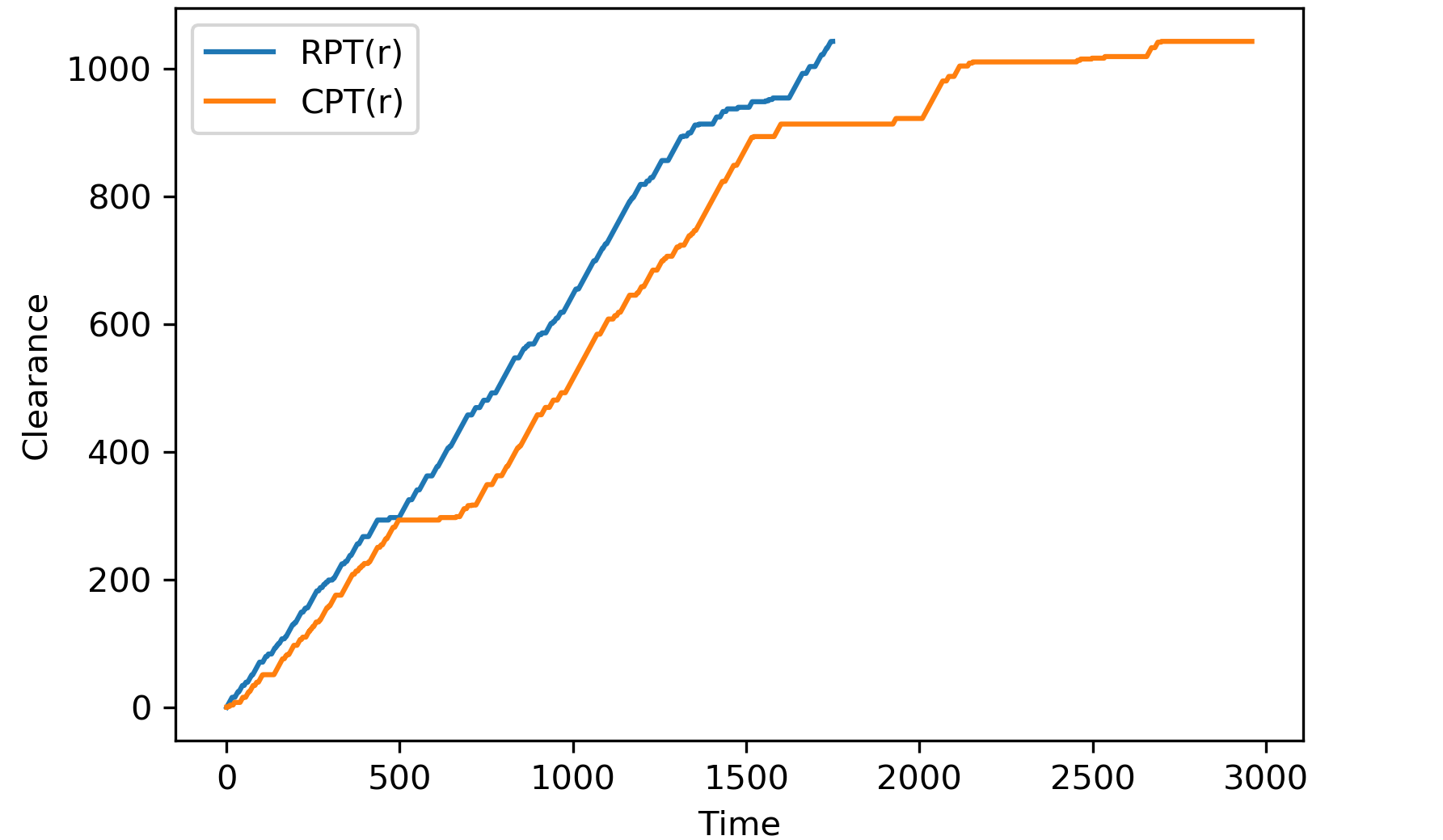}
\caption{Comparison of the two strategies on the Eastern-Massachussets network (74 nodes, 129 edges).}
\label{fig:EMA}
\end{figure}

\begin{figure}[htb!]
\centering
\includegraphics[width=0.65\linewidth]{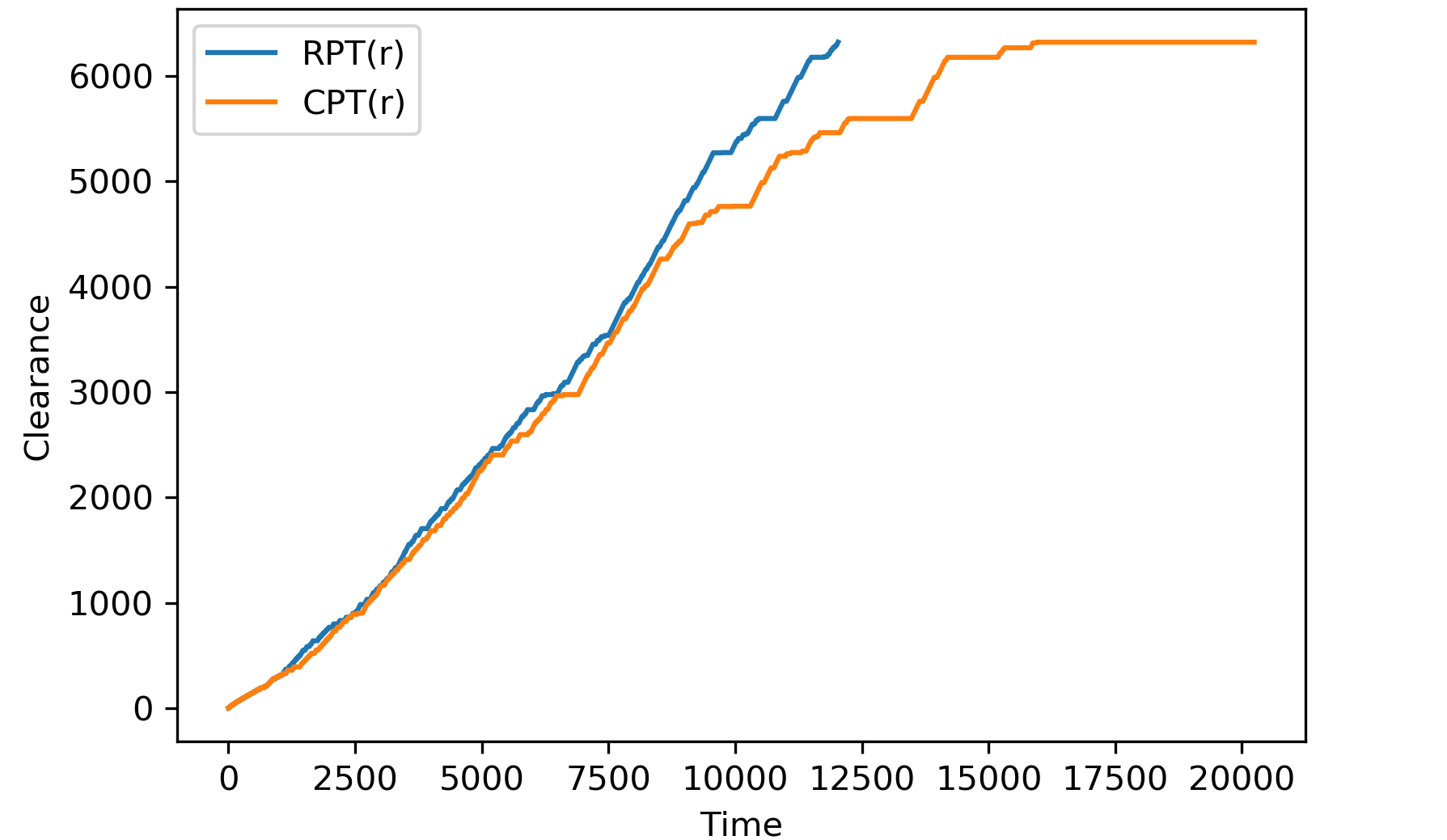}
\caption{Comparison of the two strategies on the Friedrichshain network (144 nodes, 240 edges).}
\label{fig:frd}
\end{figure}

Figure~\ref{fig:compr} depicts the influence of the parameter $r$ on the 
competitive ratios of \rural and \china, as run on the small Sioux Falls network, starting from a node located near the center of the network. We observe that there is indeed 
a minimum competitive ratio reached for $r\approx2$. 
Interestingly, this is in accordance with Proposition~\ref{prop:rural}, 
which shows that choosing $r=2$ yields the best approximation to the 
competitive ratio, for both \rural and \china.

\begin{figure}[htb!]
\centering
\includegraphics[width=0.65\linewidth]{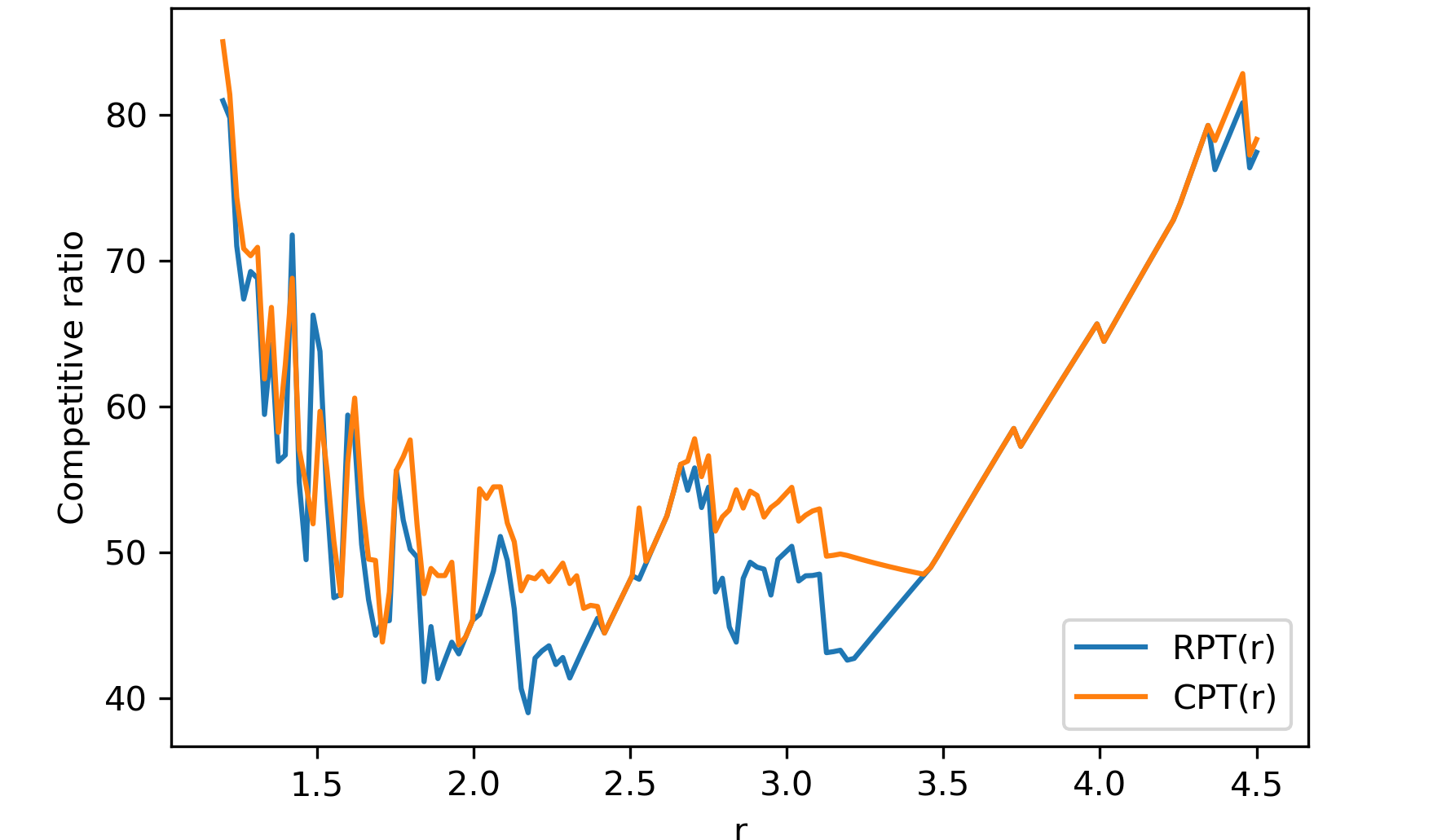}
\caption{Competitive ratio of the strategies as a function of the parameter $r$, calculated on the Sioux Falls network (24 nodes, 38 edges).}
\label{fig:compr}
\end{figure}

Figure \ref{fig:compchicago} is analogous to Figure \ref{fig:compare}, but for 45 random runs on the Chicago network. We see that the relative advantage of \rural over \china is even greater for a larger network.

\begin{figure}[htb!]
\centering
\includegraphics[width=0.65\linewidth]{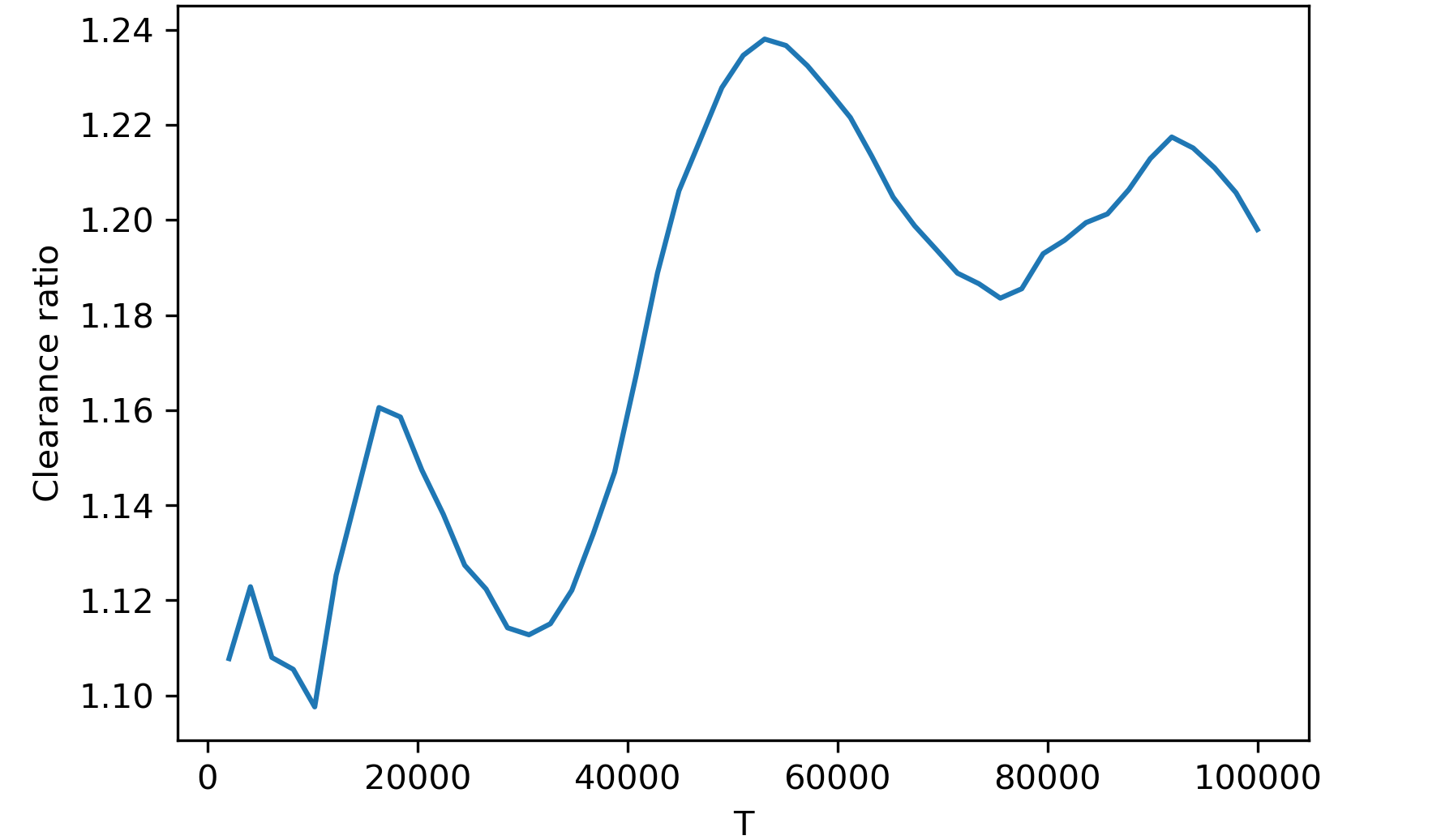}
\caption{Clearance ratio of \rural versus \china, for 45 randomly chosen roots, for the Chicago network.}
\label{fig:compchicago}
\end{figure}

Figure \ref{fig:rchicago} depicts the competitive ratios of each strategy over those 45 runs, sorted by increasing competitive ratio for \rural. We see that \rural is consistently much more efficient than \china, and it is also much more stable, especially for those roots for which the algorithms yield larger competitive ratios. 
The average competitive ratio over these runs for \rural is 152, compared to 200 for \china; the standard deviations are 26 and 39 respectively.

\begin{figure}[htb!]
\centering
\includegraphics[width=0.65\linewidth]{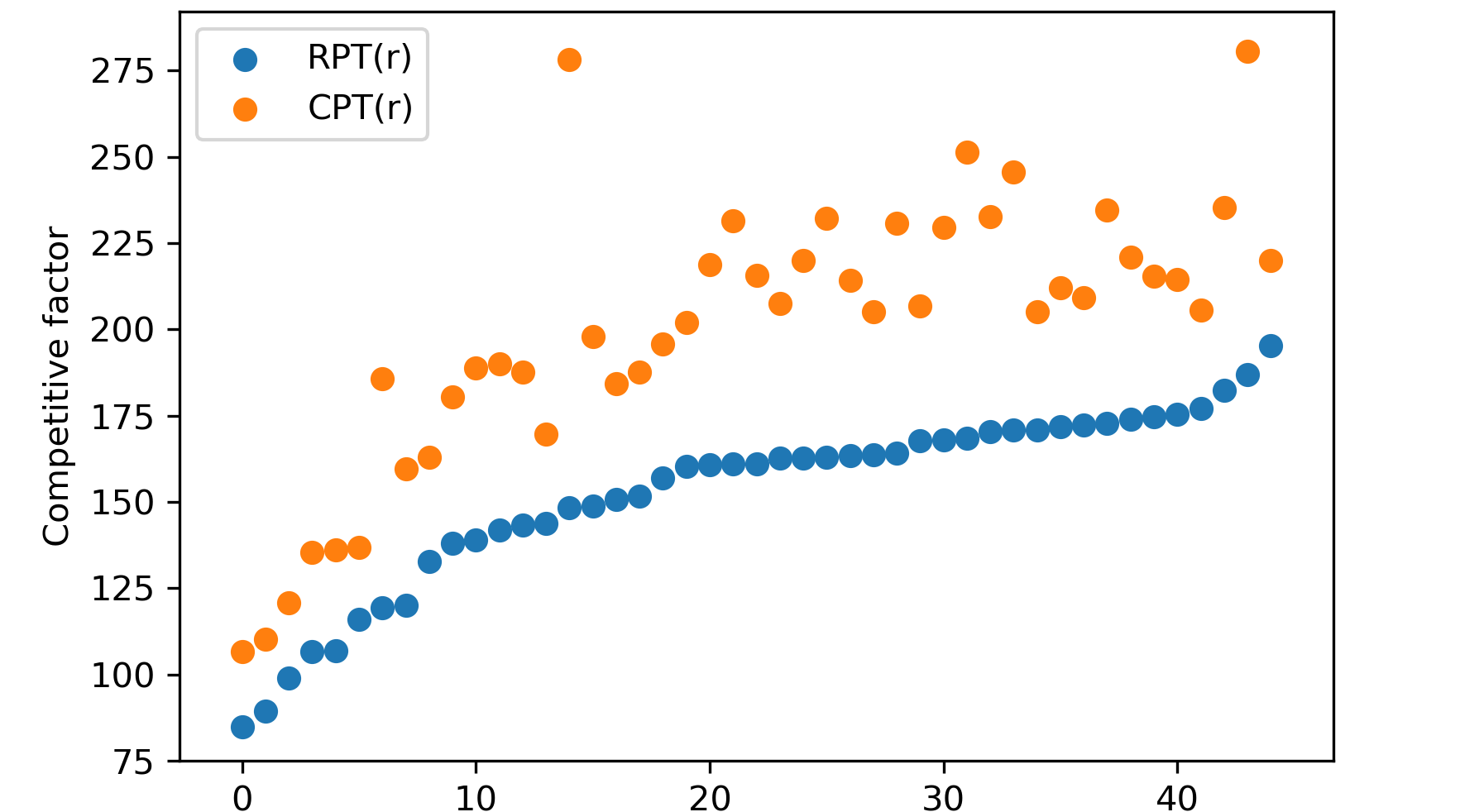}
\caption{Competitive ratios of the strategies for 45 random chosen roots, for the Chicago network.}
\label{fig:rchicago}
\end{figure}

\section{Solving the Earliest Clearance problem}

We give an overview about how the techniques we used in the context of the Maximum Clearance problem can help us solve this ``dual'' online search problem. Recall that the problem is defined in the last section of the main paper.

\subsubsection{The line environment}
For the unbounded line, we have an LP formulation similar to $L_2^{(k)}$, where we ``exchange'' the objective function and the final constraint: namely, we want to {\em minimize} $2\sum_{i=1}^{k-1}+x_k$, and add the constraint $x_k+x_{k-1}\geq L$. We can prove that in an optimal solution, all but one constraints must be tight, similarly to Lemma~\ref{lemma:line.one.constraint}, though for this problem only the first constraint $(C_0)$ may be loose. We can argue that the scaled aggressive strategy is optimal, since the final constraint $x_k+x+{k-1}\geq L$ is always tight.

\subsubsection{The star environment}
We can formulate this problem using an LP similar to $L_m^{(k)}$. First we can show a tightness result similar to Lemma \ref{lemma:star.one.constraint}, though this problem is easier: all constraints are tight except for possibly $(C_0)$. The proof of monotonicity and cyclicality is identical. This allows us to consider the LP in cyclic form:

\begin{align*}
\text{min}& \quad  2\sum \nolimits_{i=1}^{k-1} x_i+ x_k &\tag{$P_m^{(k)}$}\\
\text{subj to}& \quad \sum \nolimits_{i=1}^{m-1} x_i\leq \rho &\tag{$C_0$} \\ 
&\sum \nolimits_{i=1}^{j+m-1} \!x_i\leq \rho \cdot x_{j},   \qquad\,	  j \in [1, k-m] &\tag{$C_j$}\\
&\!\!\!\!\sum \nolimits_{i=1}^k x_i \leq \rho\cdot x_j,  	\;\;	j \in [k-m+1,k-1]  &\tag{$E_j$} \\
&x_i \leq x_{i+1},	\qquad\qquad\qquad\quad\;	  i \in [1,k-1]&\tag{$M_i$}\\
&\sum \nolimits_{i=0}^{m-1} x_{k-i} \geq L    &\tag{$D$}
\end{align*}

Each $P_m^{(k)}$ has a single solution which can be obtained in time $O(k)$ using Gaussian elimination on a matrix equation similar to \eqref{equation:the.matrix}. We can show by the same methods used in the proof of Theorem \ref{theorem:star.optimal} that the feasible solution with the fewest steps is the optimal solution, and with the geometric strategy giving an upper bound on this number of steps, we can use binary search to find the solution in time $O(m\log T \log(m\log T))$.

The experimental results we observe are extremely similar to those for Maximum Clearance: in short, the optimal strategy dominates the scaled aggressive and geometric strategies, and the same dependencies on $m$ and $R$ are observed.

\subsubsection{General networks}
For general networks, we use the same heuristic as for the Maximum Clearance problem: specifically, we run \rural until a total length $L$ has been cleared, using $r=2$. Similar conclusions can be reached, and we can quantify the relative improvement of \rural over \china. For example, from Figure~\ref{fig:frd} we can deduce for each value of clearance $L$, the time it took the two heuristics to clear length $L$.

Figure~\ref{fig:ecchicago} is analogous to Figure~\ref{fig:compchicago}, and depicts the average ratio between the time taken by \china and the time taken by \rural as a function of the desired length $L$. We observe the expected improvements, which get quite significant for large values of $L$.

\begin{figure}[htb!]
\centering
\includegraphics[width=0.65\linewidth]{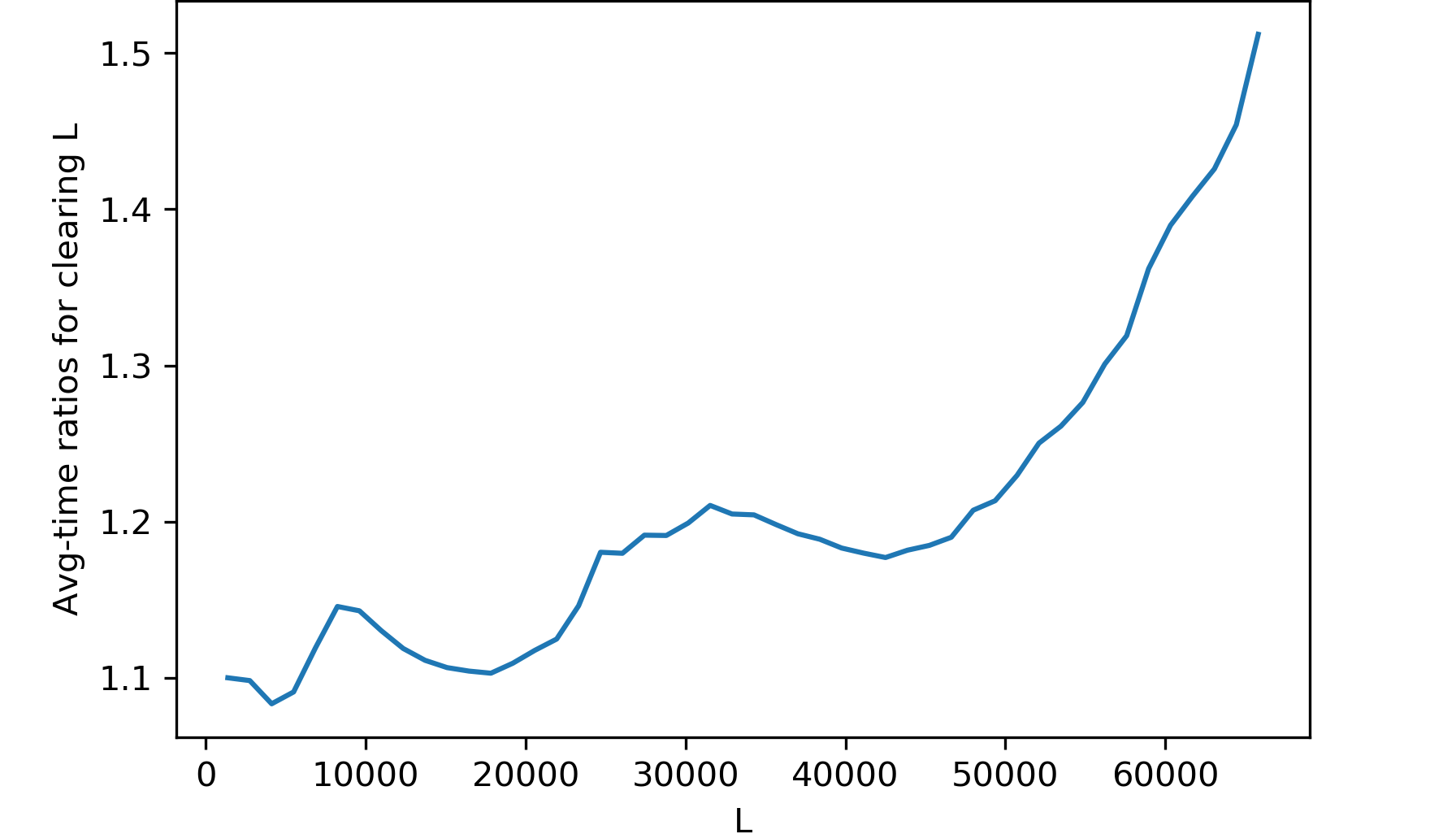}
\caption{Speed of \rural versus \china, for 45 randomly chosen roots, for the Chicago network.}
\label{fig:ecchicago}
\end{figure}

\end{document}